  \providecommand\BibTeX{{%
    \normalfont B\kern-0.5em{\scshape i\kern-0.25em b}\kern-0.8em\TeX}}}
\newcommand{\xmark}{\ding{55}}%
\def\rhorel{\mathbin{\rho}}
\def\e#1{\emph{#1}}
\def\la{\leftarrow}
\def\ra{\rightarrow}
\def\angs#1{\mathord{\langle{#1\rangle}}}
\def\del#1#2{\angs{-#1}(#2)}
\def\upd#1#2#3{\angs{#1\la #2}(#3)}
\def\ins#1#2{\angs{+#1}(#2)}
\def\R{\mathcal{R}}
\def\C{\mathbf{C}}
\def\str{^{\star}}
\def\I{\mathcal{I}}
\newcommand{\cut}[1]{}
\def\MI{\mathsf{MI}}
\def\MC{\mathsf{MC}}
\def\SI{\mathsf{SelfInconsistencies}}
\def\Id{\I_{\mathsf{d}}}
\def\Imr{\I_{\R}}
\def\Ilmr{\I_{\R}^{\mathsf{lin}}}
\def\Imi{\I_{\MI}}
\def\Imc{\I_{\MC}}
\def\Ip{\I_{\mathsf{P}}}
\def\Cfd{\C_{\mathsf{FD}}}
\def\Cdc{\C_{\mathsf{DC}}}
\newtheorem{example}{Example}
\newtheorem{lemma}{Lemma}
\newcommand{\dom}{{\mathsf{Val}}}
\def\rel#1{\textsc{#1}}
\def\att#1{\textsf{#1}}
\def\val#1{\mathtt{#1}}
\newenvironment{repeatresult}[2]
{\vskip0.5em\par\noindent\nobreakspace{\bfseries #1 #2.\!}\em}
{\vskip1em}
\newenvironment{repproposition}[1]{\begin{repeatresult}{Proposition}{#1}}{\end{repeatresult}}
\newenvironment{reptheorem}[1]{\begin{repeatresult}{Theorem}{#1}}{\end{repeatresult}}
\def\textforshortversion#1{
\ifthenelse{\boolean{long}}{}{#1}}
\def\textforlongversion#1{
\ifthenelse{\boolean{long}}{#1}{}}
\def\set#1{\{#1\}}
\newtheorem{theorem}{Theorem}
\newtheorem{proposition}{Proposition}
\newcommand{\eat}[1]{}
\newcommand{\red}[1]{{\color{red}#1}}
\newcommand{\added}[1]{{\color{black}#1}}
\definecolor{Gray}{gray}{0.9}
\newcommand{\mrevcolor}{\color{red}}
\newcommand{\mrev}[1]{{\mrevcolor#1}}
\newcommand{\revonecolor}{\color{violet}}
\newcommand{\revone}[1]{{\revonecolor#1}}
\newcommand{\revtwo}[1]{{\revtwocolor#1}}
\newcommand{\revthree}[1]{{\revthreecolor#1}}
\def\removerevcolor{true}
\renewcommand{\revone}[1]{#1}
\renewcommand{\revtwo}[1]{#1}
\renewcommand{\revthree}[1]{#1}
\renewcommand{\mrev}[1]{#1}
\renewcommand{\revonecolor}{}
\renewcommand{\mrevcolor}{}
\def\scs{\mathbf{S}}
\def\sig{\mathit{sig}}
\def\tids{\mathit{ids}}
\newcommand{\eqdef}{\mathrel{{\mathop:{\!=}}}}
\def\dbs{\mathrm{DB}}
\def\att#1{\mathrm{#1}}
\def\rel#1{\mathit{#1}}
\newcommand{\algname}[1]{{\sf #1}}
\title{Properties of Inconsistency Measures for Databases}
\begin{document}




\author{Ester Livshits}
\email{esterliv@cs.technion.ac.il}
\affiliation{%
  \institution{Technion}
  \city{Haifa}
  \country{Israel}
  \postcode{3200003}
}

\author{Rina Kochirgan}
\email{rina.k@campus.technion.ac.il}
\affiliation{%
  \institution{Technion}
  \city{Haifa}
  \country{Israel}
  \postcode{3200003}
}

\author{Segev Tsur}
\email{segevtsur@campus.technion.ac.il}
\affiliation{%
  \institution{Technion}
  \city{Haifa}
  \country{Israel}
  \postcode{3200003}
}

\author{Ihab F. Ilyas}
\email{ilyas@uwaterloo.ca}
\affiliation{%
  \institution{University of Waterloo}
  \city{Waterloo}
  \state{ON}
  \country{Canada}
}

\author{Benny Kimelfeld}
\email{bennyk@cs.technion.ac.il}
\affiliation{%
  \institution{Technion}
  \city{Haifa}
  \country{Israel}
  \postcode{3200003}
}

\author{Sudeepa Roy}
\email{sudeepa@cs.duke.edu}
\affiliation{%
  \institution{Duke University}
  \city{Durham}
  \state{NC}
  \country{USA}
}

\renewcommand{\shortauthors}{Livshits et al.)}

\begin{abstract}
   How should we quantify the inconsistency of a database that violates integrity constraints?
  Proper measures are important for various tasks, such as progress
  indication and action prioritization in cleaning systems, and
  reliability estimation for new datasets.  To choose an appropriate
  inconsistency measure,
  it is important to identify the desired properties in the application and understand which of these is
  guaranteed or at least expected in practice.  For example, in some
  use cases the inconsistency should reduce if constraints are
  eliminated; in others it should be stable and avoid jitters and
  jumps in reaction to small changes in the database. 
  We embark on
  a systematic investigation of properties for database inconsistency
  measures. We investigate a collection of basic measures that have
  been proposed in the past in both the Knowledge Representation and
  Database communities, analyze their theoretical properties, and
  empirically observe their behaviour in an experimental study.  We
  also demonstrate how the framework can lead to new inconsistency
  measures by introducing a new measure that, in contrast to the
  rest, satisfies all of the properties we consider and can be
  computed in polynomial time.
\end{abstract}





\maketitle

\section{Introduction}\label{sec:introduction}
Inconsistency of databases may arise in a variety of situations for a
variety of reasons. Database records may be collected from imprecise
sources (social encyclopedias/networks, sensors attached to
appliances, cameras, etc.)  via imprecise procedures (natural-language
processing, signal processing, image analysis, etc.), or be integrated
from different sources with conflicting information or formats.
Common principled approaches to handling inconsistency consider
databases that violate \e{integrity constraints}
, but can nevertheless
be \e{repaired} by means of \e{operations} that revise the database
and resolve inconsistency~\cite{DBLP:conf/pods/ArenasBC99}.

Instantiations differ in the supported types of integrity constraints,
allowed operations, and optimization goals. The constraints may be
Functional Dependencies (FDs) or the more general Equality-Generating
Dependencies (EGDs)
or, more generally, Denial Constraints (DCs),
and they may be referential (foreign-key) constraints or the more
general inclusion dependencies~\cite{DBLP:conf/icdt/AfratiK09}.  A
repair operation can be a \e{deletion} of a tuple, an \e{insertion} of
a tuple, or an \e{update} of an attribute value.  Operations may be
associated with different \e{costs}, representing levels of trust in
data items or extent of
impact~\cite{DBLP:conf/icdt/LopatenkoB07}. Optimization goals can be
minimization of the costs incurred \cite{DBLP:conf/icdt/KolahiL09,
  DBLP:conf/pods/LivshitsKR18} or producing the most probabable clean
database from a distribution
\cite{DBLP:journals/pvldb/RekatsinasCIR17} that fully (hard
constraints) or partially (soft constraints) fixes the violated
integrity constraints.  Various approaches and systems have been
proposed for data cleaning and, in particular, data repairing
(e.g.,~\cite{DBLP:journals/pvldb/EbaidEIOQ0Y13,DBLP:journals/pvldb/GeertsMPS13,DBLP:journals/pvldb/RekatsinasCIR17}
to name a few).

We explore the question of \e{how to measure database inconsistency}.
This question may arise in different situations.  A measure of
inconsistency can be used for estimating the potential usefulness and
cost of incorporating databases for downstream
analytics~\cite{DBLP:conf/edbt/PapottiNK15}.  We can also use an
inconsistency measure for implementing a progress bar for data
repairing.  Indeed, the importance of incorporating progress
indicators in interactive systems has been well recognized and studied
in the Human-Computer Interaction (HCI)
community~\cite{Myers:1985:IPP:317456.317459,
  Conn:1995:TAT:223904.223928,Harrison:2007:RPB:1294211.1294231,Harrison:2010:FPB:1753326.1753556,Villar:2013:AIM:2542634.2542637}.
\revthree{Note that an inconsistency measure can be used in any cleaning system,  where the measure gives one of several indications of progress (namely those that are associated with integrity constraints).} Beyond the indication of progress, a measure of inconsistency can be
used for prioritizing and recommending actions in data
repairing---address the tuples that have the highest
\e{responsibility} to the inconsistency level (e.g., Shapley value for
inconsistency~\cite{DBLP:journals/ai/HunterK10,DBLP:conf/ijcai/YunVCB18,DBLP:journals/corr/abs-2009-13819})
or the ones that might result in the greatest reduction in
inconsistency.

Example measures include the number of violations in the database, the
number of tuples involved in violations, and the number of operations
needed to reach consistency.  To choose an appropriate inconsistency
measure for a specific use case, it is important to identify the
desired properties of the application and understand which of these is
guaranteed or expected in practice.  For example, to effectively
communicate progress indication in repairing, the measure should
feature certain characteristics. While it should react to changes in
the database and show progress when the database is ``cleaner'' with
respect to the integrity constraints, it should minimize jitters,
jumps and sudden changes in reaction to small changes in the database,
and in turn, should well correlate with the ``expected waiting
time''---an important aspect in machine-user interaction.  It should
also be computationally tractable in interactive
systems. Luo et al.~\cite{DBLP:conf/sigmod/LuoNEW04} enunciate the
importance of these properties, referring to them as ``acceptable
pacing'' and ``minimal overhead," respectively, in progress indicators
for database queries.


\begin{figure*}[t]
\subfloat[\small \bf A clean database $D_0$.]{
\begin{small}
\begin{tabular}{r|c|c|c|c|c|c|}
\hline
& Id &	Type &	Name &	Continent &	Country &	Municipality	\\
\hline
$f_1$ & 00AA &	Small airport &	Aero B Ranch &		NAm &	US &			Leoti	\\

$f_2$ & 7FA0 &	heliport &	Florida Keys Memorial Hospital Heliport  &	NAm  &	US  &		Key West\\

$f_3$ & 7FA1 &	Small airport &	Sugar Loaf Shores Airport  & NAm  &	US  &	Key West\\

$f_4$ & KEYW &	Medium airport &	Key West International Airport  &	NAm &	US &	Key West\\

$f_5$ & KNQX &	Medium airport &	Naval Air Station Key West/Boca Chica Field  &	NAm &	US  &			Key West\\\hline
\end{tabular}
\end{small}
\label{fig:clean}
}\vspace{-0.5em}
\subfloat[\small \bf A noisy database $D_1$ (bold values are changed from the clean database. \revtwo{The underlines are explained in Section~\ref{sec:measure}}).]{
\begin{small}
\begin{tabular}{r|c|c|c|c|c|c|}
\hline
& Id &	Type &	Name &	Continent &	Country &	Municipality	\\
\hline
$f_1$ & 00AA &	Small airport &	Aero B Ranch &		NAm &	US &			Leoti	\\

$f_2$ & 7FA0 &	heliport &	Florida Keys Memorial Hospital Heliport  &	{\bf Am}  &	{\bf USA}  &		Key West\\

$f_3$ & 7FA1 &	Small airport &	Sugar Loaf Shores Airport  & \underline{NAm}  &	\underline{US}  &	Key West\\

$f_4$ & KEYW &	Medium airport &	Key West International Airport  &	\underline{NAm} &	{\bf USA} &	Key West\\

$f_5$ & KNQX &	Medium airport &	Naval Air Station Key West/Boca Chica Field  &	{\bf Am} &	\underline{US}  &			Key West\\\hline
\end{tabular}
\end{small}
\label{fig:noisy-1}
}\vspace{-0.5em}
\subfloat[\small \bf A noisy database $D_2$ (bold values are changed from the clean database).]{
\begin{small}
\begin{tabular}{r|c|c|c|c|c|c|}
\hline
& Id &	Type &	Name &	Continent &	Country &	Municipality	\\
\hline
$f_1$ & 00AA &	Small airport &	Aero B Ranch &		NAm &	US &			Leoti	\\

$f_2$ & 7FA0 &	heliport &	Florida Keys Memorial Hospital Heliport  &	{\bf Am}  &	{\bf USA}  &		Key West\\

$f_3$ & 7FA1 &	Small airport &	Sugar Loaf Shores Airport  & NAm  &	US  &	Key West\\

$f_4$ & KEYW &	Medium airport &	Key West International Airport  &	NAm &	{\bf USA} &	Key West\\

$f_5$ & KNQX &	Medium airport &	Naval Air Station Key West/Boca Chica Field  &	NAm &	US  &			Key West\\\hline
\end{tabular}
\end{small}
\label{fig:noisy-2}
}
\vspace{-0.6em}
\caption{\bf A clean and two noisy Airport databases from Example~\ref{eg:running}.}
\vspace{-2mm}
\label{fig:running}
\end{figure*}

In this paper, we explore desirable properties of inconsistency measures
and demonstrate the analysis and design of specific inconsistency
measures with respect to these properties.  As a guiding principle, we
adopt the approach of \e{rationality postulates} of \e{inconsistency
  measures} over knowledge bases that have been investigated in depth
by the Knowledge Representation (KR) and Logic
communities~\cite{DBLP:conf/ijcai/KoniecznyLM03,DBLP:journals/jolli/Knight03,DBLP:journals/jiis/GrantH06,DBLP:conf/kr/HunterK08,DBLP:journals/ai/HunterK10,DBLP:journals/ijar/GrantH17,DBLP:journals/ki/Thimm17}.
Yet, the studied measures and postulates fall short of capturing our
desiderata for various reasons. First, inconsistency is typically
measured over a knowledge base of logical sentences (formulas without
free variables).  In databases, we reason about \e{tuples} (facts) and
fixed integrity constraints, and inconsistency typically refers to the
tuples rather than the constraints (which are treated as exogenous
information). In particular, while a collection of sentences, and even a single sentence, might
form a contradiction, a set of tuples can be inconsistent only in the
presence of integrity constraints.  Hence, as recently
acknowledged~\cite{DBLP:journals/corr/abs-1804-08834,DBLP:journals/corr/abs-1904-03403},
it is of importance to seek inconsistency measures that are closer to
database applications. 
More fundamentally, in order to capture
the repairing \e{process} and corresponding changes to the database,
the measures should be aware of the underlying \e{repairing operations}
(e.g., tuple insertion, deletion, or revision).

The following example illustrates the concept of a noisy
  database \revtwo{and will be used throughout the paper to demonstrate the
  differences among several inconsistency measures.}

\begin{example}\label{eg:running}
We have an 
Airport database\footnote{Simplified version of \url{https://ourairports.com/data/}.} with the schema
$$\rel{Airport}(\underline{\att{Id}},\att{Type},\att{Name},\att{Continent},\att{Country},\att{Municipality})$$
and the FDs ``$\att{Municipality}\rightarrow
\att{Continent}\,\,\att{Country}$''
and ``$\att{Country}\rightarrow \att{Continent}$.''
A clean database $D_0$ where both FDs hold is given in Figure~\ref{fig:clean}. Figures~\ref{fig:noisy-1} and \ref{fig:noisy-2} depict two noisy versions of the database that violate one or both FDs: $D_1$ is obtained from $D_0$ by modifying four values, and $D_2$ is obtained from $D_0$ by modifying three values. The changed values are shown in {\bf bold}. 
\revtwo{When the clean database is not available, it is not easy to detect which of $D_1$ and $D_2$ is `more noisy'. An inconsistency measure aims to formally quantify the inconsistency level in a noisy database, given a set of constraints. We will revisit this example to illustrate different measures in Section~\ref{sec:measure} (Table~\ref{tab:running_measures}).}  \qed
\end{example}

To illustrate some of the inconsistency measures, let us consider the case where all constraints are
anti-monotonic (i.e., consistency cannot be violated by deleting
tuples). For example, FDs and the more general DCs are anti-monotonic, whereas referential key constraints are not.
One simple measure is the \e{drastic measure} $\Id$, which is $1$ if
the database is inconsistent, and $0$
otherwise~\cite{DBLP:journals/ki/Thimm17}. Of course, this measure
hardly communicates progress of cleaning, as it does not change until the very
end. \revtwo{Clearly, we have $\Id(D_0)=0$ while $\Id(D_1)=\Id(D_2)=1$ for the databases of Figure~\ref{fig:running}; hence, this measure does not differentiate the database $D_1$ from $D_2$, motivating the need to study more fine-grained measures.}

\revtwo{
What about the measure $\Ip$ that counts the
\e{problematic tuples}, which are the tuples that participate in
\emph{(minimal) witnesses of
inconsistency}~\cite{DBLP:conf/kr/HunterK08,DBLP:journals/ai/HunterK10}? This measure suffers from a disproportional reaction to repairing
operations: the removal of a single tuple (e.g., a misspelling
of a country name) may cause a drastic reduction in inconsistency (for
example, if a single tuple is involved in most
violations of the constraints). The measure $\Imi$, that simply counts
the constraint violations (i.e., the minimal inconsistent subsets), suffers from the same problem. }
\revtwo{
As another example, consider the measure $\Imc$ that counts the
\emph{maximal consistent subsets}. This measure suffers from
various problems: adding constraints can cause a \e{reduction} in
inconsistency, it may fail to reflect repairing operations (i.e., the number of maximal consistent subsets may stay the same after applying a repairing operation), and, again, it may react disproportionally
to small changes. Moreover, it is hard to compute (\#P-complete)
already for simple FDs~\cite{DBLP:conf/pods/LivshitsK17}. 
}

A recent effort in the database community proposed measures
based on the concept of a \e{minimal repair}---the
minimal number of repairing operations needed to obtain
consistency~\cite{DBLP:conf/sum/Bertossi18}. We refer to this measure
as $\Imr$. 
We show that $\Imr$ satisfies the aforementioned rationality
criteria that we formally define later on, and so, we provide a formal
justification to its adoption. Yet, it is again intractable (NP-hard)
even for simple sets of
FDs~\cite{DBLP:conf/icdt/KolahiL09, DBLP:conf/pods/LivshitsKR18}. \mrev{Interestingly, we are able to show that a \e{linear relaxation} of this measure, that we propose in this paper as a new inconsistency measure and refer to as $\Ilmr$, provides a combination of rationality and tractability.}

\noindent
\textbf{Contributions.~} \mrev{Given the numerous choices of inconsistency
measures, we ask and address the question of what
properties are desired in such a measure, so that one can make an
informed choice.} In particular,
we make the following contributions.
\cut{ We make a step towards
  formalizing the behavior of inconsistency measures such as the
  aforementioned ones.  }
\par
{\bf (1) Formalize and analyze desired properties of inconsistency measures: } 
We define four properties
of inconsistency measures in the context of a \e{repair system} (a space of possibly weighted repairing operations) with the following intuitive meaning:
\vspace{-0.2em}
\begin{enumerate}[label=(\roman*)]
\itemsep0em
    \item {\bf Positivity}: the
measure is strictly positive if and only if the database is
inconsistent,
\item {\bf Monotonicity}: inconsistency cannot be reduced if the constraints get stricter,
\item {\bf Continuity}: a single operation
can have a limited relative impact on inconsistency, and
\item {\bf Progression}: we can always find an operation
that reduces inconsistency.
\end{enumerate}
\vspace{-0.3em}
Moreover, we use these properties to analyze a  collection of
inconsistency measures adapted from the KR and Logic literature (Section~\ref{sec:measure}).
Specifically, we examine the inconsistency measures against the properties, and show that most of the measures violate at least one of these properties (see Table~\ref{table:postulates}). However, the measure $\Imr$ stands out satisfying all the properties (Section~\ref{sec:postulates}). 
\par
{\bf (2) Computational Complexity:} As an additional desideratum for practical purposes, we consider the complexity of computing the measures. While some measures are tractable, the measure $\Imr$, that satisfies all the properties, is known to be intractable already for FDs~\cite{DBLP:conf/pods/LivshitsKR18}. We provide a stronger result showing that computing $\Imr$ is hard already for the case of a single
EGD  (Section~\ref{sec:complexity}). 
\par
{\bf (3) Propose a new measure satisfying all desiderata:}
For tuple deletions as repairing operations, we propose a new measure $\Ilmr$ relaxing the concept of $\Imr$. We show that $\Ilmr$ satisfies all
four properties and is computable in polynomial time, even for the
general case of arbitrary sets of denial constraints; hence, it provides a combination of rationality and tractability (Section~\ref{sec:rational}).

\par
{\bf (4) Empirical evaluation of measures:} 
We empirically evaluate the behavior of these measures on multiple
datasets with different integrity constraints and inconsistency levels
(Section~\ref{sec:experiments}). \mrev{We show that the measures that behave well in theory also exhibit a good empirical
      behavior under different repair models.} 

\noindent
\textbf{Other related work.~}  \revtwo{As mentioned earlier,
  inconsistency measures were extensively studied in the KR
  community~\cite{DBLP:journals/ki/Thimm17}. Rationality
  postulates for inconsistency measures have been studied in the
  database context, for example by Parisi and
  Grant~\cite{DBLP:journals/corr/abs-1904-03403}.  Contrasting with
  that work, we devise properties that account for operational aspects
  of repairs and repair systems, and we study the computational
  complexity of the measures. The properties and complexity analysis
  lead us to tractable and rational measures that have not been
  considered in the past (e.g., $\Ilmr$). Also contrasting with that
  work, we conduct a thorough empirical evaluation over candidate
  measures to explore their behavior in practice; to the best of our
  knowledge, we are the first to conduct such an experimental study.}
\cut{  
  \revtwo{However, the studied
  rationality postulates are more general, while we focus on
  properties that are sensitive to repair systems 
  as inconsistent databases and data repair are critically related to each other. 
  } 
 }
  
  Some other relevant
  studies have focused on different aspects of inconsistency measures
  for databases. Martinez et al.~\cite{10.1007/978-3-540-75256-1_12}
  introduced axioms that are inherently based on a numerical
  interpretation of the database values.  Grant and
Hunter~\cite{DBLP:conf/ecsqaru/GrantH11} considered repairing (or
\e{resolution}) operators, and focused on the trade-off between
inconsistency reduction and \e{information loss}: an operation is
beneficial if it causes a high reduction in inconsistency alongside a
low loss of information.  \eat{Instead, our focus here is on measuring
  \e{progress} of repairing, and it is an interesting future direction
  to understand how the two relate to each other and/or can be
  combined.} Another complementary problem is that of associating
\e{individual facts} with portions of the database inconsistency
(e.g., the Shapley value of the fact) and using these portions to
define preferences among
repairs~\cite{DBLP:conf/ijcai/YunVCB18,DBLP:journals/corr/abs-2009-13819}.

\revthree{In the bigger context, note that inconsistency measures
  quantify the extent to which constraints are violated.  Data
  cleaning in general goes beyond constraint resolution to challenges
  such as outlier detection, record linkage, entity resolution and so
  on (see, e.g.,~\cite{DBLP:journals/pvldb/AbedjanCDFIOPST16} and
  references therein).  Yet, one can use an inconsistency measure in
  any cleaning system, adopting any correction mechanism whatsoever,
  for providing a specific view of the progress: the extent of
  constraint violation (but not necessarily general dirt). Measuring
  the status for other kinds of dirt is an important future challenge that goes
  beyond the scope of this paper.  }

\revtwo{The remainder of the paper is organized as follows. We present
  preliminary concepts and terminology in Section~\ref{sec:prelim}. We
  consider inconsistency measures in Section~\ref{sec:measure} and
  their properties in Section~\ref{sec:postulates}. In
  Section~\ref{sec:rational}, we discuss complexity aspects and
  propose a new rational and tractable measure. We describe an experimental evaluation
  in Section~\ref{sec:experiments}. Finally, we make concluding
  remarks and discuss directions for future work in
  Section~\ref{sec:conclusions}.  }

\cut{
\red{serious study of properties of inconsistency measures with empirical evaluation -- the KR community did not look at databases -- repair system needed for sensitivity analysis -- then formal analysis -- and empirical study -- refer to the experiments to see the strange behaviors except running time -- why would you choose one vs. the other?}
}




\eat{
Yet, there are some fundamental differences between their notion of
inconsistency measures and what is needed for quality management in
databases. For one, in databases we reason about tuples (facts)
and fixed integrity constraints, and inconsistency typically refers to
the tuples rather than the constraints. In particular, while a
collection of sentences might form a contradiction, a set of tuples can
be inconsistent only in the presence of integrity constraints.
Moreover, the KR measures do not take into consideration the
underlying repair system and repair operations, which may cast some
inconsistencies easier to fix than others.  Hence, as recently
acknowledged~\cite{DBLP:conf/lpnmr/Bertossi19}, it is of
importance to seek inconsistency measures that are closer to database
applications. In addition, there are computational aspects of
databases (e.g., \e{data complexity}~\cite{DBLP:conf/stoc/Vardi82})
that require a special investigation that is absent from prior art.

Facing the above gaps between the existing inconsistency measures and
rationality postulates on the one hand, and the requirements of
progress estimation in data repairing on the other hand, we make a
first step to propose a framework.

Inconsistency measures serve various purposes in the context of
database quality management. This includes \e{progress
  estimation}---quantify the amount of progress done in data cleaning,
and quantify the target quality; \e{next-step suggestion}---recommend
a cleaning action that maximally reduces inconsistency; and \e{quality
  estimation}---how reliable a database is for drawing conclusions
thereof?  For example, a database can be constructed by a careful
human who may still make mistakes, or be generated by an automated
extraction algorithm that aims for high recall (coverage) at the
expense of low precision (frequent errors).

The goal of the proposed research is to develop fundamental principles
for reasoning about database inconsistency measures and proving their
merit, and to devise actual measures that are practically useful,
efficient to compute, and justified by a clear theoretical ground.  We
plan to conduct our research in three directions. First, following
prior art from KR and Logic~\cite{DBLP:journals/ai/HunterK10,DBLP:journals/ki/Thimm17,DBLP:journals/jolli/Knight03,DBLP:conf/ijcai/KoniecznyLM03,DBLP:journals/jar/Lozinskii94},
we will devise \e{rationality postulates} to serve as a yardstick for
the appropriateness of inconsistency measures.  Second, we will use
the postulates as our lenses for examining past measures, and as our
guide for developing new and better ones. Third, we will explore the
trade-off between the rationality of different inconsistency measures
and their computational complexity. We aim for impact on the practice
and systems for database cleaning, and we aim to establish directions
for database theorists at enriching the principles for managing data
quality~\cite{DBLP:series/synthesis/2011Bertossi,DBLP:journals/dagstuhl-manifestos/AbiteboulABBCD018}.

}

\section{Preliminaries}\label{sec:prelim}
We first give the basic terminology and concepts.

\textbf{Relational model.~} A \e{relational schema} (or just
\e{schema} for short) $\scs$ has a finite set of \e{relation symbols}
$R$, each associated with a \e{relation signature} $\sig(R)$. In turn, 
a relation signature $\alpha$ is a sequence $(A_1,\dots,A_k)$ of
distinct \e{attributes} $A_i$, and $k$ is the \e{arity} of
$\alpha$. 
If $\sig(R)$ has arity $k$, then we say that $R$ is $k$-ary.  A
\e{fact} $f$ (over $\scs$) is an expression of the form
$R(c_1,\dots,c_k)$, where $R$ is a $k$-ary relation symbol of $\scs$,
and $c_1$,\dots, $c_k$ are \e{values}.
If $f=R(c_1,\dots,c_k)$ is a fact and
$\sig(R)=(A_1,\dots,A_k)$, then we refer to the value $c_i$ as $f.A_i$.


A \e{database} $D$ over $\scs$ is a mapping from a finite set
$\tids(D)$ of \e{record identifiers} to facts over $\scs$. The set of all databases over
$\scs$ is denoted by $\dbs(\scs)$.  We denote by $D[i]$ the fact that
$D$ maps to the identifier $i$. A database $D$ is a \e{subset} of a
database $D'$, denoted $D\subseteq D'$, if
$\tids(D)\subseteq\tids(D')$ and $D[i]=D'[i]$ for all $i\in\tids(D)$.

An \e{integrity constraint} is a first-order sentence over $\scs$. A
database $D$ satisfies a set $\Sigma$ of integrity constraints,
denoted $D\models\Sigma$, if $D$ satisfies every constraint
$\sigma\in\Sigma$. If $\Sigma$ and $\Sigma'$ are sets of integrity
constraints, then we write $\Sigma\models\Sigma'$ to denote that
$\Sigma$ \e{entails} $\Sigma'$; that is, every database that satisfies
$\Sigma$ also satisfies $\Sigma'$. We also write $\Sigma\equiv\Sigma'$
to denote that $\Sigma$ and $\Sigma'$ are \e{equivalent}, that is,
$\Sigma\models\Sigma'$ and $\Sigma'\models\Sigma$. By a \e{constraint
  system} we refer to a class $\C$ of integrity constraints (e.g., the
class of all functional dependencies).

As a special case, a \e{Functional Dependency} (FD) $R:X\rightarrow Y$, where $R$ is a relation symbol and $X,Y\subseteq\sig(R)$, states that every two facts that agree on (i.e., have equal values in) every attribute of $X$ should also agree on  $Y$. The more general \e{Equality Generating Dependency} (EGD) has the form $\forall\vec x\left[\varphi_1(\vec x)\wedge\dots\wedge\varphi_k(\vec x)\rightarrow (y_1=y_2)\right]$ where each $\varphi_j(\vec x)$ is an atomic formula over the schema and $y_1$ and $y_2$ are variables in $\vec x$. Finally, a \e{Denial Constraint} (DC) has the form $\forall\vec x\neg\left[\varphi_1(\vec x)\wedge\dots\wedge\varphi_k(\vec x)\wedge\psi(\vec x)\right]$ where each $\varphi_j(\vec x)$ is an atomic formula and $\psi(\vec x)$ is a conjunction of atomic comparisons over  $\vec x$.

\begin{example}\label{eg:db-constraints}
  The schema of our running example consists of a single 5-ary relation symbol $\rel{Airport}$. All databases of Figure~\ref{fig:running} have five facts with
  $\tids(D) = \{f_1, \cdots, f_5\}$. The constraint set $\Sigma$ consists of two
  FDs as shown in Example~\ref{eg:running}.
\end{example}

\textbf{Repair systems.~}
Let $\scs$ be a schema. A \e{repairing operation} (or just
\e{operation}) $o$ transforms a database $D$ over $\scs$ into another
database $o(D)$ over $\scs$, that is,
$o:\dbs(\scs)\rightarrow\dbs(\scs)$. An example is \e{tuple deletion},
denoted $\del{i}{\cdot}$, parameterized by a tuple identifier $i$ and
applicable to $D$ if $i\in\tids(D)$; the result
$\del{i}{D}$ is obtained from $D$ by deleting the tuple identifier $i$
(along with the corresponding fact $D[i]$).  Another example is
\e{tuple insertion}, denoted $\ins{f}{\cdot}$, parameterized by a fact
$f$; the result $\ins{f}{D}$ is obtained from $D$ by adding $f$ with a
new tuple identifier. (For convenience, this is the minimal integer
$i$ such that $i\notin\tids(D)$.) A third example is \e{attribute
  update}, denoted $\upd{i.A}{c}{\cdot}$, parameterized by a tuple
identifier $i$, an attribute $A$, and a value $c$, and applicable to
$D$ if $i\in\tids(D)$ and $A$ is an attribute of the fact $D[i]$; the
result $\upd{i.A}{c}{D}$ is obtained from $D$ by setting 
$D[i].A$ to $c$. By convention, if $o$ is not applicable to $D$, then
it keeps $D$ intact, that is, $o(D)=D$.

A \e{repair system} (over a schema $\scs$) is a collection of
repairing operations with an associated \e{cost} of applying to a
given database. For example, a smaller change of value might be less
costly than a greater one~\cite{DBLP:conf/lid/GardeziBK11}, and some
facts might be more costly than others to
delete~\cite{DBLP:conf/icdt/LopatenkoB07,DBLP:conf/pods/LivshitsKR18}
or
update~\cite{DBLP:conf/icdt/KolahiL09,DBLP:conf/pods/LivshitsKR18,DBLP:journals/is/BertossiBFL08};
changing a person's zip code might be less costly than changing the
person's country, which, in turn, might be less costly than deleting
the entire person's record.  Formally, a repair system $\R$ is a pair
$(O,\kappa)$ where $O$ is a set of operations and
$\kappa:O\times\dbs(\scs)\ra[0,\infty)$ is a cost function that
assigns the cost $\kappa(o,D)$ to applying $o$ to $D$. We require that
$\kappa(o,D)=0$ if and only if $D=o(D)$; that is, the cost is nonzero
when, and only when, an actual change occurs.

For a repair system $\R$, we denote by $\R\str=(O\str,\kappa\str)$ the repair system 
of all \e{sequences} of operations from $\R$, where the cost
of a sequence is the sum of costs of the individual operations
thereof. 
\textforlongversion{
Formally, for $\R=(O,\kappa)$, the repair system $\R\str$ is
$(O\str,\kappa\str)$, where $O\str$ consists of all compositions
$o=o_m\circ\dots\circ o_1$, with $o_j\in O$ for all $j=1,\dots,m$,
defined inductively by $o_m\circ\dots\circ
o_1(D)=o_m(o_{m-1}\circ\dots\circ o_1(D))$ and
$\kappa\str(o_m\circ\dots\circ
o_1,D)=\kappa(o_m,o_{m-1}\circ\dots\circ
o_1(D))+\kappa\str(o_{m-1}\circ\dots\circ o_1,D)$.
}
Let $\C$ be a constraint system and $\R$ a repair system. We say that
$\C$ is \e{realizable by} $\R$ if it is always possible to make a
database satisfy the constraints of $\C$ by repeatedly applying operations
from $\R$. Formally, $\C$ is realizable by $\R$ if for every database
$D$ and a finite set $\Sigma\subseteq\C$ there is a sequence $o$ in
$\R\str$ such that $o(D)\models\Sigma$.  An example of $\C$ is the
system $\Cfd$ of all FDs $R:X\ra Y$.
An example of $\R$ is the \e{subset} system,
denoted $\R_\subseteq$, where $O$ is the set of all tuple deletions
(hence, the result is always a subset of the original database), and
$\kappa$ is determined by a special \e{cost} attribute,
$\kappa(\del{i}{D})=D[i].\mathtt{cost}$, if a cost attribute exists, and otherwise,
$\kappa(\del{i}{D})=1$ (every tuple has unit cost for  deletion). Observe that $\R_\subseteq$ realizes $\C$,
since the latter consists of anti-monotonic constraints.

\begin{example}\label{eg:repair}
The database $D_1$ of Figure~\ref{fig:noisy-1} may be obtained from $D_0$ by applying the following sequence of attribute updates (here, we assume that the identifier of a fact $f_i$ is $i$):
{\small
\begin{align*}
    &E_1=\upd{2.\att{Continent}}{\val{Am}}{D_0}
    &E_2=\upd{2.\att{Country}}{\val{USA}}{E_1}\\
    &E_3=\upd{4.\att{Country}}{\val{USA}}{E_2}
    &D_1=\upd{5.\att{Continent}}{\val{Am}}{E_3}
\end{align*}
}
If we consider tuple deletions and insertions, we may obtain $D_1$ from $D_0$ by applying the following sequence of operations:
{\small
\begin{align*}
    &E_1=\del{2}{D_0}
    &E_2=\ins{f_2}{E_1}\quad\quad\quad
    &E_3=\del{4}{E_2}\\
    &E_4=\ins{f_4}{E_3}
    &E_5=\del{5}{E_4}\quad\quad\quad
    &D_1=\ins{f_5}{E_5}
\end{align*}
}
where $f_2,f_4$ and $f_5$ are the facts of $D_1$. We may also assign a cost to each operation. For example, if both tuple deletions and attribute updates are allowed, we may associate a higher cost with the operation $\del{4}{D}$ that deletes an entire fact compared to the operation $\upd{4.\att{Country}}{\val{USA}}{D}$ that updates a single attribute value.
\end{example}

\section{Inconsistency Measures}\label{sec:measure}

\begin{table*}[t]
  \caption{\revtwo{The inconsistency measure values on the databases of our
  running example.}}
  \begin{small}
\begin{tabular}{|c||c|c||c|c||}
\hline
 & \multicolumn{2}{|c||}{Noisy database $D_1$} & \multicolumn{2}{|c||}{Noisy database $D_2$}\\\hline
{\bf Measure} & {\bf value} & {\bf explanation} & {\bf value} & {\bf explanation} \\\hline\hline
$\Id$  &1 & inconsistent database &1 & inconsistent database \\\hline
$\Imr$ (deletions) & 3 & e.g., remove $\{f_2, f_4, f_5\}$ or $\{f_3, f_4, f_5\}$  & 2 & e.g., remove $\{f_2, f_3\}$ or $\{f_2, f_4\}$\\\hline
$\Imr$ (updates) & 4 & change the values shown in {\bf bold} or with an \underline{underline}& 3 & change the values shown in {\bf bold} \\\hline
$\Imi$ & 7 & $|\{\{f_2, f_3\}, \{f_2, f_4\},\{f_2, f_5\}, \{f_3, f_4\},\{f_3, f_5\}, \{f_4, f_5\},\{f_1, f_5\}\}|$   & 5 & $|\{f_2, f_3\}, \{f_2, f_4\},\{f_2, f_5\}, \{f_3, f_4\},\{f_4, f_5\}\}|$ \\\hline
$\Ip$ & 5 & all tuples are involved in violations & 4 & all tuples except $f_1$ are involved in violations \\\hline
$\Imc$ & 3 & $|\{\{f_1, f_2\}, \{f_1, f_3\}, \{f_1, f_4\}, \{f_5\}\}| - 1$ & 2 &  $|\{\{f_1, f_2\}, \{f_1, f_4\}, \{f_1, f_3, f_5\}\}| - 1$  \\\hline
$\Ilmr$ & 2.5 & e.g., assign 0.5 to all $f_1, \cdots, f_5$ (see Section~\ref{sec:lmr}) & 2 & e.g., assign 0.5 to all $f_2, \cdots, f_5$ (see Section~\ref{sec:lmr}) \\\hline
\end{tabular}
\end{small}
\label{tab:running_measures}
\end{table*}

Let $\scs$ be a schema and $\C$ a constraint system. An
\e{inconsistency measure} is a function $\I$ that maps a finite set
$\Sigma\subseteq\C$ of integrity constraints and a database $D$ to a
number $\I(\Sigma,D)\in[0,\infty)$.
Intuitively, a high $\I(\Sigma,D)$ implies that $D$ is far from
satisfying $\Sigma$. We make two standard requirements:
\begin{itemize}[topsep=0pt,itemsep=-1ex,partopsep=1ex,parsep=1ex]
\item $\I$ is zero on consistent databases; that is, $\I(\Sigma,D)=0$
  whenever $D\models\Sigma$;
\item $\I$ is invariant under logical equivalence of constraints; that
  is, $\I(\Sigma,D) = \I(\Sigma',D)$ whenever $\Sigma\equiv\Sigma'$.
\end{itemize}

Next, we discuss several examples of inconsistency measures. Some of these
measures (namely, $\Id$, $\Imi$, $\Ip$ and $\Imc$) are adapted from the
survey of Thimm~\cite{DBLP:journals/ki/Thimm17} to the context of relational databases.
\revtwo{Table~\ref{tab:running_measures} summarizes the values of all measures on the noisy databases $D_1$ and $D_2$ of Example~\ref{eg:running}.}
\par
\revtwo{\textbf{Drastic inconsistency measure $\Id$.}}
The simplest
measure is the \e{drastic inconsistency value}, denoted $\Id$, which is the
indicator function of inconsistency.
\[
\Id(\Sigma,D) \eqdef 
\begin{cases}
0 & \mbox{if $D\models\Sigma$;}\\
1 & \mbox{otherwise.}
\end{cases}
\]
\revtwo{In Figure~\ref{fig:running}, we have that $D_0 \models \Sigma$, while $D_1, D_2\not\models\Sigma$. Therefore, $\Id(\Sigma,D_0) = 0$ and $\Id(\Sigma,D_1) = \Id(\Sigma,D_2) = 1$}


\cut{
}


\revtwo{The following measures, $\Imi$, $\Ip$ and $\Imc$, apply to systems $\C$ of \e{anti-monotonic}
constraints.} Recall that an integrity constraint $\sigma$ is
anti-monotonic if for all databases $D$ and $D'$, if
$D\subseteq D'$ and $D'\models\sigma$, then $D\models\sigma$.
Examples of anti-monotonic constraints are the DCs~\cite{DBLP:journals/jiis/GaasterlandGM92},
the classic FDs, \e{conditional
  FDs}~\cite{DBLP:conf/icde/BohannonFGJK07}, and EGDs ~\cite{DBLP:conf/icalp/BeeriV81}.

\revtwo{\textbf{Minimal inconsistent subsets measure $\Imi$.~}} For a set $\Sigma\subseteq\C$ of \revtwo{anti-monotonic} constraints and a database
$D$, denote by $\MI_\Sigma(D)$ the set of all \e{minimal
  inconsistent} subsets of $D$; that is, the set of all $E\subseteq D$
such that $E\not\models\Sigma$ and $E'\models\Sigma$ for
all \revtwo{proper subsets} $E'\subsetneq E$. 
\revtwo{Since the constraints are anti-monotonic}, it holds that $D\models\Sigma$ if and
only if $\MI_\Sigma(D)$ is empty.  Drawing from known inconsistency
measures~\cite{DBLP:conf/kr/HunterK08,DBLP:journals/ai/HunterK10, DBLP:journals/ki/Thimm17}, the
measure $\Imi$, also known as \e{MI Shapley Inconsistency},
is the cardinality of this set.
\[\Imi(\Sigma,D) \eqdef |\MI_{\Sigma}(D)|\]
\par
\revtwo{\textbf{Problematic facts measure $\Ip$.}~}  A fact $f$ that belongs to a minimal inconsistent subset $K$ 
(i.e., $f\in K\in\MI_\Sigma(D)$) is called 
\e{problematic}, and the measure $\Ip$ counts the problematic
facts~\cite{DBLP:conf/ecsqaru/GrantH11}.
\[\Ip(\Sigma,D) \eqdef |\cup\MI_\Sigma(D)|\]

\begin{example}\label{eg:MI_P}
The set $\Sigma$ of constraints of Example~\ref{eg:running} consists of FDs, which are violated by pairs of facts; thus, the size of any minimal inconsistent subset w.r.t.~$\Sigma$ is two.
Since the database $D_0$ of Figure~\ref{fig:running} satisfies $\Sigma$, we have that $\Imi(\Sigma,D_0) = \Ip(\Sigma,D_0) = 0$. In $D_1$, all (six) pairs of facts from $\set{f_2,f_3,f_4,f_5}$ as well as the pair $\set{f_1,f_5}$ jointly violate $\Sigma$; hence, there are seven violating pairs in total and $\Imi(\Sigma,D_1)=7$. As every fact of $D_1$ occurs in at least one violating pair, we have that $\Ip(\Sigma,D_1)=5$. \revtwo{Similarly, $\Imi(\Sigma,D_2)=5$ as shown in Table~\ref{tab:running_measures}, and $\Ip(\Sigma,D_2)=4$ as all facts of $D_2$ except $f_1$ are involved in violations.}

\end{example}

\revtwo{\textbf{Maximal consistent subsets measure $\Imc$.~}} For a finite set $\Sigma\subseteq\C$ of \revtwo{anti-monotonic} constraints and a database
$D$, we denote by $\MC_\Sigma(D)$ the set of all \e{maximal
  consistent} subsets of $D$; that is, the set of all $E\subseteq D$
such that $E\models\Sigma$ and, moreover, $E'\not\models\Sigma$
whenever $E\subsetneq E'\subseteq D$.  Observe that if
$D\models\Sigma$, then $\MC_\Sigma(D)$ is simply the singleton
$\set{D}$, 
and for anti-monotonic constraints, the set $\MC_\Sigma(D)$ is never empty (since, e.g.,
the empty set is consistent). The measure $\Imc$ is the cardinality of
$\MC_\Sigma(D)$, minus one.
\[\Imc(\Sigma,D) \eqdef |\MC_{\Sigma}(D)|-1\]

\added{We note that the definition of $\Imc$ in the KR setting is slightly different than ours, as it takes into account the number of self-contradictions in the knowledge base~\cite{DBLP:conf/ecsqaru/GrantH11,DBLP:journals/ijar/GrantH17}. This highlights an important difference between the standard KR setting and ours. As aforementioned, we distinguish between tuples and integrity constraints, and we measure only tuple sets; thus, no self-contradictions exist per se---an individual tuple is always consistent. However, a tuple can be self-inconsistent in the presence of a relevant constraint (e.g., the DC ``$\mbox{height}>0$''). Parisi and Grant~\cite{DBLP:journals/corr/abs-1904-03403} refer to such tuples as \e{contradictory} tuples. Hence, we also consider here the following variant of $\Imc$ that counts \e{self-inconsistencies}.
\[\Imc'(\Sigma,D) \eqdef |\MC_{\Sigma}(D)|+|\SI(D)|-1\]}

\begin{example}\label{eg:MC}
In Figure~\ref{fig:running}, it holds that $\Imc(\Sigma,D_0) = 0$ as $D_0$ is consistent and $\MC_\Sigma(D_0) = \{D_0\}$. The $\Imc$ values for $D_1$ and $D_2$ are shown in Table~\ref{tab:running_measures} along with the maximal consistent subsets. For instance, for $D_1$, the set $\{f_1, f_2\}$ is a maximal consistent subset as adding $f_3$ introduces a violation of the FD $\att{Municipality}\rightarrow \att{Continent}\,\,\att{Country}$, and the addition of $f_4$ or $f_5$ introduces violations of both FD. As we consider FDs, for which there are no self-inconsistencies, for $D_0, D_1, D_2$, the $\Imc'$ and  $\Imc$ values coincide.
\end{example}

\mrev{All the inconsistency measures considered so far are adapted from the KR literature, and only take the database and set of integrity constraints into account. However, in the context of databases, it is also important to consider the \emph{repair system} that aims to fix an inconsistent database via repairing operations (e.g., value updates or tuple deletions).} The next measure assumes an underlying repair system $\R$ and constraint system $\C$ such that $\C$ is realizable by
$\R$.

\revtwo{\textbf{Minimum repair measure $\Imr$.}}   The measure $\Imr$ is the minimal cost of a sequence of
operations that repairs the database. It captures the intuition around
various notions of repairs known as \e{cardinality} repairs and
\e{optimal}
repairs~\cite{DBLP:conf/icdt/KolahiL09,DBLP:conf/pods/LivshitsKR18,DBLP:conf/icdt/AfratiK09}.
\[
\Imr(\Sigma,D) \eqdef \min\set{\kappa\str(o,D)\mid o\in O\str\mbox{ and }
  o(D)\models\Sigma}
\]
This measure is the same as the \e{d-hit} inconsistency measure introduced by Grant and Hunter~\cite{DBLP:conf/ecsqaru/GrantH13}. Moreover, $\Imr$ is the \e{distance from satisfaction} used in property
  testing~\cite{DBLP:journals/jacm/GoldreichGR98}
in the special case where the repair system consists of unit-cost
insertions and deletions.

\mrev{While many of the rule-based approaches to error detection and repairing in the database literature have considered anti-monotonic constraints like DCs, EGDs, and FDs~\cite{DBLP:conf/icde/ChuIP13,DBLP:journals/pvldb/RekatsinasCIR17,DBLP:journals/tods/Wijsen05,DBLP:conf/vldb/CongFGJM07,DBLP:journals/tods/FanGJK08,DBLP:conf/pods/ArenasBC99}, the measure $\Imr$ in general can be used with other types of constraints (like referential integrity constraints or other complex global constraints that are not anti-monotonic). This measure could also naturally incorporate weighted (soft) rules~\cite{DBLP:journals/corr/abs-2009-13821}.}
\par
\begin{example}\label{eg:repair2}
The values of $\Imr$ on the two noisy databases of Figure~\ref{fig:running} are given in Table~\ref{tab:running_measures} for the cases where the repairing operations are tuple deletions or attribute updates. As mentioned in Example~\ref{eg:MI_P}, all pairs of facts from $\set{f_2,f_3,f_4,f_5}$ in $D_1$ jointly violate $\Sigma$; thus, only one of these facts may appear in a repair, and the minimal number of facts that we need to remove from the database to satisfy $\Sigma$ is three (e.g., remove $\{f_2,f_4,f_5\}$). Assuming unit-cost deletions, we have that $\Imr(\Sigma,D_1)=3$ for the repair system $\R_\subseteq$. If we consider attribute updates, we need to update at least every bold (or underlined) value in $D_1$ to satisfy the FD $\att{Municipality}\rightarrow \att{Continent}\,\,\att{Country}$; hence, in this case, $\Imr(\Sigma,D_1)=4$ (again, for the case of unit-cost updates).
\end{example}

\def\postul#1#2{
\begin{center}
\textit{\underline{#1}:}\, #2
\end{center}}

\section{Properties of  Measures}
\label{sec:postulates}
We now propose and discuss several properties (\e{postulates}) of
database inconsistency measures. 
We illustrate these properties over
the different measures of the previous
section. \mrev{In our examples throughout the section, we focus on the case where the
  constraints are FDs or the more general DCs, and the repair system is the subset system 
  $\R_{\subseteq}$ where only tuple deletions are allowed. We stress, however, that these inconsistency
  measures can be used under \e{any} repair system and constraint
  system (perhaps without the theoretical guarantees), and we
  illustrate that in our experimental study
  (Section~\ref{sec:experiments})}. The behavior of the measures with
respect to the properties is summarized in
Table~\ref{table:postulates}, which we discuss later on.  
\cut{The measures are all defined in Section~\ref{sec:measure}, except for $\Ilmr$ that
we define later in Section~\ref{sec:rational}.}

\newcolumntype{g}{>{\columncolor{Gray}}c}
\begin{table}[t]
\small
\centering
\caption{\label{table:postulates} \revone{Satisfaction of properties for
    $\Cfd$/$\Cdc$ and $\R_{\subseteq}$ (e.g., $\Imi$ satisfies monotonicity for FDs but not DCs).  Tractability (``PTime'' assuming $\mbox{P}\neq\mbox{NP}$) is discussed in Section~\ref{sec:rational}, where we also define the measure $\Ilmr$ in the last row.}}
\revonecolor
\begin{tabular}{|c|c|c|c|c|g|}\hline
& Pos. & Mono. & B.~Cont. & Prog. & PTime\\\hline
$\Id$ & \checkmark/\checkmark & \checkmark/\checkmark & \xmark/\xmark & \xmark/\xmark & \checkmark/\checkmark  \\\hline
$\Imi$ & \checkmark/\checkmark & \checkmark/\xmark & \xmark/\xmark & \checkmark/\checkmark  & \checkmark/\checkmark \\\hline
$\Ip$ & \checkmark/\checkmark & \checkmark/\xmark & \xmark/\xmark & \checkmark/\checkmark & \checkmark/\checkmark \\\hline
$\Imc$ & \checkmark/\xmark & \xmark/\xmark & \checkmark/\checkmark & \xmark/\xmark & \xmark/\xmark  \\\hline
$\Imc'$ & \checkmark/\checkmark & \xmark/\xmark & \xmark/\xmark & \xmark/\xmark & \xmark/\xmark  \\\hline
$\Imr$ &  \checkmark/\checkmark &  \checkmark/\checkmark &  \checkmark/\checkmark &  \checkmark/\checkmark & \xmark/\xmark\\\hline
\rowcolor{Gray}$\Ilmr$ &  \checkmark/\checkmark &  \checkmark/\checkmark &  \checkmark/\checkmark &  \checkmark/\checkmark & \checkmark/\checkmark \\\hline
\end{tabular}
\end{table}

\textbf{Positivity.}~ A basic property is \e{positivity}, sometimes referred to as
\e{consistency}~\cite{DBLP:journals/ijar/GrantH17,DBLP:journals/corr/abs-1904-03403}. This property is also the first axiom suggested by Martinez et al.~\cite{10.1007/978-3-540-75256-1_12}.
\postul{Positivity}{$\I(\Sigma,D)>0$ whenever $D\not\models\Sigma$.}
Each of $\Id$, $\Imi$, $\Ip$, $\Imc'$, and $\Imr$ satisfies
positivity \revone{(for any set of anti-monotonic constraints)}, but not $\Imc$. For example, let $D$ be a database with of two
facts, $R(a)$ and $R(b)$, and $\Sigma$ consists of the \revone{(denial)} constraint $\neg
R(a)$ (i.e., $R(a)$ is not in the database). Then $\Imc(\Sigma,D)=0$
since $\MC_\Sigma(D)=\set{R(b)}$. Observe that  the fact $R(a)$ is inconsistent by itself; hence, the example does not apply to $\Imc'$, that takes self-inconsistencies into account (so, $\Imc'(\Sigma,D)=1$). Yet, in the case of FDs (i.e., $\C=\Cfd$), every violation involves two facts, and so $|\MC_\Sigma(D)|\ge 2$ and positivity is satisfied.

\textbf{Monotonicity.}~ The next property is \e{monotonicity}---inconsistency cannot decrease
if the constraints get stricter.
\postul{Monotonicity}{$\I(\Sigma,D)\leq I(\Sigma',D)$ whenever
  $\Sigma'\models\Sigma$.}  For example, $\Id$ and $\Imr$ satisfy
monotonicity \revone{for the general class of anti-monotonic constraints}, since every repair w.r.t.~$\Sigma'$ is also a repair w.r.t.~$\Sigma$.
The measures $\Imi$ and $\Ip$ also satisfy monotonicity
in the special case of FDs, since in this case $|\MI_{\Sigma}(D)|$ is the number of fact pairs that jointly violate an FD, which can only increase when adding 
or strengthening FDs.
Yet, they may violate monotonicity when
going beyond FDs to the more general class of DCs.
\begin{proposition}
  In the case of $\Imi$ and $\Ip$, monotonicity can be violated
  already for the class of DCs.
\end{proposition}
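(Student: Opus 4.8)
The plan is to exploit the key structural difference between FDs and DCs that the preceding discussion already hints at: under FDs every minimal inconsistent subset is a \emph{pair}, so strengthening the constraints can only create new violating pairs and never destroy existing ones. Under DCs, by contrast, a minimal inconsistent subset may contain three or more facts, and adding a stricter constraint can make one of its \emph{proper} subsets inconsistent, at which point the larger set is no longer minimal and drops out of $\MI_\Sigma(D)$ altogether. I would turn this observation into a single counterexample that simultaneously refutes monotonicity for both $\Imi$ and $\Ip$.

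Concretely, I would take a unary relation $R$ and the database $D$ with facts $f_1 = R(1)$, $f_2 = R(2)$, $f_3 = R(3)$, $f_4 = R(4)$. For $\Sigma$ I would use the two ternary denial constraints $\sigma_{123} = \forall x_1 x_2 x_3\, \neg[R(x_1) \wedge R(x_2) \wedge R(x_3) \wedge x_1 = 1 \wedge x_2 = 2 \wedge x_3 = 3]$ and the analogous $\sigma_{124}$ (replacing $x_3 = 3$ by $x_3 = 4$). Then $\sigma_{123}$ is violated exactly when $f_1, f_2, f_3$ are all present, and likewise $\sigma_{124}$ for $f_1, f_2, f_4$, so $\MI_\Sigma(D) = \{\{f_1, f_2, f_3\}, \{f_1, f_2, f_4\}\}$, giving $\Imi(\Sigma, D) = 2$ and $\Ip(\Sigma, D) = 4$. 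For $\Sigma'$ I would set $\Sigma' = \Sigma \cup \{\sigma_{12}\}$, where $\sigma_{12} = \forall x_1 x_2\, \neg[R(x_1) \wedge R(x_2) \wedge x_1 = 1 \wedge x_2 = 2]$ makes the pair $\{f_1, f_2\}$ inconsistent. Since $\Sigma \subseteq \Sigma'$ we have $\Sigma' \models \Sigma$, so the hypothesis of monotonicity holds. Now $\{f_1, f_2\}$ is a minimal inconsistent subset under $\Sigma'$, and both former triples contain it and hence are no longer minimal, so $\MI_{\Sigma'}(D) = \{\{f_1, f_2\}\}$, giving $\Imi(\Sigma', D) = 1$ and $\Ip(\Sigma', D) = 2$. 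Both measures strictly decrease as the constraints get stricter, contradicting monotonicity.

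The step I expect to require the most care is verifying that the two claimed $\MI$-sets are exactly right --- both that the listed sets really are minimal inconsistent and that there are no additional ones. For minimality I would check that every proper subset of each listed set is consistent; for instance, no pair inside $\{f_1, f_2, f_3\}$ triggers $\sigma_{123}$, since matching that DC requires three facts carrying the three distinct values $1, 2, 3$. For completeness I would check that no other subset of $D$ violates any constraint of $\Sigma$ (resp.\ $\Sigma'$); in particular, under $\Sigma'$ every inconsistent set contains the pair $\{f_1, f_2\}$ and is therefore non-minimal unless it equals that pair. I would also note that $\Sigma$ and $\Sigma'$ are genuinely non-equivalent (a database containing $f_1, f_2$ but not $f_3, f_4$ satisfies $\Sigma$ but not $\Sigma'$), so the example is not excluded by the invariance-under-equivalence requirement on measures. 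The construction uses only denial constraints and transfers verbatim to any schema admitting DCs, which establishes the claim for both $\Imi$ and $\Ip$.
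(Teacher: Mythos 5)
Your proof is correct, but it takes a different route from the paper's. The paper uses two separate counterexamples: for $\Imi$ it uses cardinality DCs (``at most $k-1$ facts''), where relaxing the bound from $k$ to $k'$ turns the minimal inconsistent subsets from $k$-subsets into $k'$-subsets and a binomial-coefficient comparison $\binom{n}{k'}>\binom{n}{k}$ yields the violation; for $\Ip$ it uses two EGDs over relations $R$ and $S$, arranged so that adding the second EGD shrinks the minimal witnesses from triples to pairs and thereby drops the $R$-facts from the problematic set. Notably, each of the paper's examples handles only one measure: the cardinality example leaves $\Ip$ unchanged (all facts remain problematic), and the EGD example is calibrated so that $|\MI_{\Sigma_1}(D)|=|\MI_{\Sigma_2}(D)|$, leaving $\Imi$ unchanged. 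Your construction exploits the same underlying mechanism as the paper's $\Ip$ argument---adding a constraint makes a proper subset of an existing minimal witness inconsistent, so the larger witnesses vanish from $\MI_\Sigma(D)$---but packages it as one fully ground, four-fact example over a single unary relation that makes \emph{both} measures strictly decrease ($\Imi$: $2\to 1$; $\Ip$: $4\to 2$). This buys economy (one counterexample instead of two, with completely explicit $\MI$-sets that are easy to verify), at the cost of relying on DCs with constants; that reliance is harmless here, since the paper itself uses constant-bearing denial constraints (e.g., $\neg R(a)$ in the positivity discussion) and states that DCs subsume conditional FDs.
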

\begin{proof}
  We begin with $\Imi$. Consider a schema with a single relation
  symbol, and for a natural number $k>0$, let $\Sigma_k$ consist of a
  single DC stating that \e{there are at most $k-1$ facts in the
    database}. (The reader can easily verify that, indeed, $\Sigma_k$
  can be expressed as a DC.)  Then,
  $\Imi(\Sigma_k,D)=\binom{n}{k}$ whenever $D$ has $n\geq k$ facts. In
  particular, whenever $k'>k$ and $D$ has $n\geq 2{k'}$ facts, it
  holds that $\Imi(\Sigma_{k'},D)>\Imi(\Sigma_{k},D)$ while
  $\Sigma_k\models\Sigma_{k'}$.

  We now consider $\Ip$.  Let $\scs$ be a schema that contains two
  relation symbols $R(A,B)$ and $S(A,B)$. Consider the following two
  EGDs (which are, of course, special cases of DCs):
\begin{align*}
\sigma_1&=\forall x,y,z,w [\big(R(x,y),S(x,z),S(x,w)\big)\Rightarrow z=w]\\
\sigma_2&=\forall x,z,w [\big(S(x,z),S(x,w)\big)\Rightarrow z=w]
\end{align*}
Let $\Sigma_1=\set{\sigma_1}$ and
$\Sigma_2=\set{\sigma_1,\sigma_2}$. Every set in $\MI_{\Sigma_1}(D)$
is of size three, while the size of the sets in $\MI_{\Sigma_2}(D)$ is
two. Hence, in a database where
$|\MI_{\Sigma_1}(D)|=|\MI_{\Sigma_2}(D)|$ (i.e., where
$\sigma_1$ is violated by
$\set{R(\val{a},\val{b}),S(\val{a},\val{c}),S(\val{a},\val{d})}$ if
and only if $\sigma_2$ is violated by
$\set{S(\val{a},\val{c}),S(\val{a},\val{d})}$), we have
$|P_{\Sigma_1}(D)|>|P_{\Sigma_2}(D)|$ while $\Sigma_2\models
\Sigma_1$.
\end{proof}

The measures $\Imc$ and $\Imc'$, on the other hand, can violate monotonicity even
for FDs \revone{(hence, also for DCs)}.
\begin{proposition}\label{prop:imc_monotonicity}
In the case of $\Imc$ and $\Imc'$, monotonicity can be violated already for the
class of FDs.
\end{proposition}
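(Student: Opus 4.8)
The plan is to exploit the fact that, for FDs, the maximal consistent subsets of $D$ are exactly the maximal independent sets of the \emph{conflict graph} $G_\Sigma(D)$, whose vertices are the facts of $D$ and whose edges join the pairs of facts that jointly violate some FD in $\Sigma$. Since an FD can only be violated by a pair of facts, a subset is consistent precisely when it is an independent set, so $|\MC_\Sigma(D)|$ equals the number of maximal independent sets of $G_\Sigma(D)$. Moreover, taking $\Sigma\subseteq\Sigma'$ guarantees $\Sigma'\models\Sigma$ and ensures that $G_{\Sigma'}(D)$ is obtained from $G_\Sigma(D)$ by \emph{adding} edges. The crux is a purely graph-theoretic observation: adding an edge can \emph{decrease} the number of maximal independent sets. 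Concretely, a matching on four vertices with edges $\{f_1,f_2\}$ and $\{f_3,f_4\}$ has four maximal independent sets, whereas the path $f_1 - f_2 - f_3 - f_4$ obtained by inserting the edge $\{f_2,f_3\}$ has only three.

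Next I would realize this phenomenon with FDs. Over a relation $R(\att{X},\att{Y},\att{Z})$, take the database $D$ with facts $f_1 = R(\val{x_1},\val{a},\val{z_1})$, $f_2 = R(\val{x_1},\val{b},\val{z_2})$, $f_3 = R(\val{x_2},\val{c},\val{z_2})$, and $f_4 = R(\val{x_2},\val{d},\val{z_4})$, and set $\Sigma = \{R\colon\att{X}\to\att{Y}\}$ and $\Sigma' = \{R\colon\att{X}\to\att{Y},\ R\colon\att{Z}\to\att{Y}\}$. I would then verify that under $\att{X}\to\att{Y}$ the only violating pairs are $\{f_1,f_2\}$ and $\{f_3,f_4\}$ (each shares an $\att{X}$-value while differing on $\att{Y}$), so $G_\Sigma(D)$ is the matching; adding $\att{Z}\to\att{Y}$ contributes exactly the edge $\{f_2,f_3\}$ (the unique pair sharing a $\att{Z}$-value while differing on $\att{Y}$), turning $G_{\Sigma'}(D)$ into the path. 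Counting maximal independent sets then gives $|\MC_\Sigma(D)| = 4$ and $|\MC_{\Sigma'}(D)| = 3$, whence $\Imc(\Sigma,D) = 3 > 2 = \Imc(\Sigma',D)$ even though $\Sigma'\models\Sigma$, refuting monotonicity.

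For $\Imc'$ I would observe that an FD is never violated by a single fact, so $\SI(D) = \emptyset$ and $\Imc'(\Sigma,D) = \Imc(\Sigma,D)$ for every FD set $\Sigma$. Hence the very same database and pair of constraint sets witness the failure of monotonicity for $\Imc'$ as well, which is why the two measures can be treated together.

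The main obstacle is not the graph-theoretic inequality, which is elementary, but ensuring that the chosen attribute values induce \emph{exactly} the intended conflict graphs — in particular that no spurious edges appear across the two $\att{X}$-blocks or among the distinct $\att{Z}$-values — while keeping $\Sigma\subseteq\Sigma'$ so that entailment is immediate. I would handle this by making all values distinct except for the three deliberate coincidences $f_1.\att{X} = f_2.\att{X}$, $f_3.\att{X} = f_4.\att{X}$, and $f_2.\att{Z} = f_3.\att{Z}$, and then checking pairwise that these are the only agreements that can trigger a violation.
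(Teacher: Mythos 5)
Your proof is correct and takes essentially the same approach as the paper's: a four-fact database whose conflict graph under a single FD $\set{A\ra B}$ is a two-edge matching (four maximal consistent subsets, so $\Imc=3$), to which a second FD is added so that the count of maximal consistent subsets strictly drops, with $\Imc'$ handled identically by noting that FDs admit no self-inconsistencies. The only cosmetic difference is that the paper's second FD turns the conflict graph into $K_{2,2}$ (count drops to two, $\Imc=1$) whereas yours adds a single edge yielding the path $P_4$ (count drops to three, $\Imc=2$); both constructions witness the violation equally well.
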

\begin{proof}
  Let $D$ consist of these facts over $R(A,B,C,D)$:
  {\small
\begin{gather*}
  f_1=R(0,0,0,0)\; f_2=R(1,0,0,0)
  f_3=R(1,1,0,1)\; f_4=R(0,1,0,1)
\end{gather*}
}
Let
$\Sigma_1=\set{A\ra B}$ and $\Sigma_2=\set{A\ra B,C\ra D}$.
Then $\Sigma_2\models\Sigma_1$ and the following hold.
{\small
\begin{align*}
\MC(\Sigma_1,D)&=\set{\set{f_1,f_2},\set{f_1,f_3},\set{f_2,f_4},\set{f_3,f_4}}\\
\MC(\Sigma_2,D)&=\set{\set{f_1,f_2},\set{f_3,f_4}}
\end{align*}
}
We conclude that $\Imc(\Sigma_1,D)=\Imc'(\Sigma_1,D)=3$ and $\Imc(\Sigma_2,D)=\Imc'(\Sigma_2,D)=1$, proving that
monotonicity is violated.
\end{proof}

Note that our monotonicity definition is different from that of Parisi and Grant~\cite{DBLP:journals/corr/abs-1904-03403}, as monotonicity in our case is restricted to the integrity constraints rather than the database. Here, we do not consider a property for monotonicity over the database (i.e., if $D\subseteq D'$ then $\I(D)\le \I(D')$), as adding a new tuple may, in fact, reduce inconsistency, for example, under foreign-key constraints.

Positivity and monotonicity serve as sanity conditions that the
measure indeed quantifies inconsistency---it
does not ignore inconsistency, and it does not reward strictness of
constraints. 
Next, we
propose two properties that are aware of the underlying repair system
$\R=(O,\kappa)$ as a model of operations. 
They are inspired by what Luo et al.~\cite{DBLP:conf/sigmod/LuoNEW04}
state informally as ``acceptable pacing''
and ``continuously revised estimates.''
The first, \e{continuity}, limits the ability of an operation to have a drastic
effect, and the second,
\e{progression}, states that
the measure is reactive and not indifferent to operations. 

\textbf{Bounded Continuity.~} \e{Continuity} means that, intuitively speaking, repairing operations
cannot cause disproportional changes to the database. More formally,
it is parameterized by a number $\delta\geq 1$ and it states that, for
every two databases $D_1$ and $D_2$,
and for each operation $o_1$ on $D_1$ we can find an operation $o_2$
on $D_2$ that is (almost) at least as impactful as $o_1$, as it
reduces inconsistency by at least $1/\delta$ of what $o_1$ does in
$D_1$.  This property is important in reliability estimation, for
instance, where one wishes to avoid a situation where the database is
deemed highly inconsistent and, yet, a small change can make it
considerably more consistent. It is also important in
progress indication, where it limits unexpected jumps and changes.
In what follows, we denote by $\Delta_{\I,\Sigma}(o_1,D_1)$ the value
$\I(\Sigma,D_1)-\I(\Sigma,o_1(D_1))$.

\postul{$\delta$-continuity}{For all $\Sigma$, $D_1$, $D_2$ and
  $o_1\in O$, there exists $o_2\in O$ such that 
  $\Delta_{\I,\Sigma}(o_2,D_2) \ge
  \Delta_{\I,\Sigma}(o_1,D_1)/\delta$.}  
  
This definition can be extended to the case where the
measure is aware of the cost of operations in the repair system
$\R$. There, the change is relative to the cost of the operation. That
is, we define the weighted version of $\delta$-continuity in the
following way.
   
\postul{Weighted $\delta$-continuity}{For all $\Sigma$, $D_1$, $D_2$
  and $o_1\in O$, there exists $o_2\in O$ such that 
  $\frac{\Delta_{\I,\Sigma}(o_2,D_2)}{\kappa(o_2,D_2)} \ge
  \frac{\Delta_{\I,\Sigma}(o_1,D_1)}{\delta\cdot\kappa(o_1,D_1)}$.}

%
  
We say that a measure $\I$ has \e{bounded continuity}, if there exists
$\delta>0$ such that $\I$ satisfies $\delta$-continuity. It is an easy observation that $\Imr$ satisfies
bounded continuity, and even bounded \e{weighted} continuity, for any set of anti-monotonic constraints.
\revtwo{We will later prove that
none of the other measures discussed so far satisfies
(unweighted) bounded continuity.}

\textbf{Progression.~} The last property we discuss is \e{progression} that states that, within the underlying
repair system $\R=(O,\kappa)$, there is always a way to progress
towards consistency, as we can find an
operation $o$ of $\R$ such that inconsistency reduces after applying
$o$. This property is particularly important for the task of progress indication in data repairing systems, as the combination of bounded continuity and progressions means that it is always possible to progress without significant slowdowns and long pauses, a behavior that 
users strongly averse towards~\cite{Harrison:2007:RPB:1294211.1294231}. \postul{Progression}{whenever $D\not\models\Sigma$, there is
  $o\in O$ such that $\I(\Sigma,o(D))<I(\Sigma,D)$.}
%
Clearly, the measure $\Id$ violates progression.  The
measure $\Imr$ satisfies progression \revone{for any set of anti-monotonic constraints}, since we can always remove a
fact from the minimum repair. The measure $\Imi$ satisfies
progression, since we can always remove a fact $f$ that participates
in one of the minimal inconsistent subsets and, by doing so, eliminate
all the subsets that include $f$. When we remove a fact $f$ that
appears in a minimal inconsistent subset, the measure $\Ip$ decreases
as well; hence, it satisfies progression. On the other hand, $\Imc$ and $\Imc'$
may violate progression even for functional dependencies, as illustrated in the following example.
\begin{example}\label{example:imc_progression}
  Consider again the database $D$ and the set $\Sigma_2$ from the
  proof of Proposition~\ref{prop:imc_monotonicity}. As explained there,  $\Imc(\Sigma_2,D)=\Imc'(\Sigma_2,D)=1$. The reader can
  easily verify that for every tuple deletion $o$, it is still the
  case that $\Imc(\Sigma_2,o(D))=\Imc'(\Sigma_2,o(D))=1$.
\end{example}

Note that there are some dependencies among the properties, as shown in
the following easy proposition (proof is in the Appendix).

\def\propdependencies{
  Suppose that the class $\C$ is realizable by the repair system $\R$,
  and let $\I$ be an inconsistency measure.
\begin{itemize}
\item If $\I$ satisfies progression, then $\I$ satisfies positivity.
\item If $\I$ satisfies positivity and bounded continuity, then $\I$
  satisfies progression.
  \end{itemize}
}

\begin{proposition}\label{prop:relationsips}
\propdependencies
\end{proposition}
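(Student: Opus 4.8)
The plan is to prove the two bullets separately. Both rely only on the fact that $\I$ takes values in $[0,\infty)$, that $\I$ vanishes on consistent databases, and that $\C$ is realizable by $\R$; no properties of the specific measures are needed.

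For the first bullet, I would argue by contradiction using non-negativity alone. Suppose $\I$ satisfies progression but positivity fails for some $D$ with $D\not\models\Sigma$. Since $\I(\Sigma,D)\in[0,\infty)$, a failure of positivity forces $\I(\Sigma,D)=0$. Progression then supplies an operation $o\in O$ with $\I(\Sigma,o(D))<\I(\Sigma,D)=0$, contradicting $\I(\Sigma,o(D))\ge 0$. Hence no such $D$ exists and positivity holds. This step is immediate and is really just the observation that one cannot strictly decrease a quantity that is already at its minimum value $0$.

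For the second bullet, fix $\delta>0$ witnessing bounded continuity and take any $D$ with $D\not\models\Sigma$. First I would invoke realizability of $\C$ by $\R$ to obtain a sequence $o_m\circ\cdots\circ o_1\in O\str$ with $o_m\circ\cdots\circ o_1(D)\models\Sigma$. Writing $D_0=D$ and $D_j=o_j(D_{j-1})$, we have $\I(\Sigma,D_m)=0$ because $D_m\models\Sigma$, while positivity gives $\I(\Sigma,D_0)>0$. The telescoping identity $\sum_{j=1}^m\Delta_{\I,\Sigma}(o_j,D_{j-1})=\I(\Sigma,D_0)-\I(\Sigma,D_m)=\I(\Sigma,D)>0$ then forces at least one summand to be strictly positive, so some single operation $o_j$ strictly reduces inconsistency on the intermediate database $D_{j-1}$.

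It remains to transfer this local progress back to the original database $D$, and this is precisely what bounded continuity provides. Applying $\delta$-continuity with source database $D_{j-1}$, operation $o_j$, and target database $D$, I obtain an operation $o\in O$ with $\Delta_{\I,\Sigma}(o,D)\ge\Delta_{\I,\Sigma}(o_j,D_{j-1})/\delta>0$; unfolding the definition of $\Delta$ yields $\I(\Sigma,o(D))<\I(\Sigma,D)$, which is exactly progression. The point that requires care—and the closest thing to an obstacle—is that the reducing operation found along the repairing sequence acts on $D_{j-1}$ rather than on $D$, so without continuity there would be no guarantee that any operation helps on $D$ itself; bounded continuity is what lets us relocate a reduction witnessed on another database to the database at hand, and since $\delta>0$ the division preserves strict positivity.
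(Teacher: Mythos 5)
Your proof is correct and follows essentially the same route as the paper's: the first bullet is the same non-negativity observation, and the second bullet uses the same three ingredients (realizability yields a repairing sequence, a telescoping/positivity argument locates one operation with strictly positive $\Delta$ on an intermediate database, and $\delta$-continuity transfers that reduction back to $D$). The only difference is that you argue directly where the paper argues by contradiction, which is cosmetic.
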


\revtwo{Using Proposition~\ref{prop:relationsips}, we can now prove the following.}

\def\propcontinuity{
  In the case of $\Id$, $\Imi$, $\Ip$, $\Imc$, and $\Imc'$, bounded (unweighted)
  continuity can be violated already for the class $\Cfd$ of
  FDs and the system $\R_{\subseteq}$ of subset
  repairs.
}
\begin{proposition}\label{prop:bounded_continuity}
\propcontinuity
\end{proposition}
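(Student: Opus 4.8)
The plan is to show that each of the five measures fails bounded continuity for $\Cfd$ and $\R_\subseteq$ by producing, for \emph{every} $\delta\ge 1$, a witness $(\Sigma,D_1,D_2,o_1)$ over a single relation $R(A,B)$ with one FD such that no $o_2\in O$ achieves $\Delta_{\I,\Sigma}(o_2,D_2)\ge\Delta_{\I,\Sigma}(o_1,D_1)/\delta$. Throughout I keep $D_2$ \emph{inconsistent}, so that the inequality concerns genuinely reducible inconsistency. I would split the measures into two groups according to whether the progression shortcut is available.

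For $\Id$, $\Imc$ and $\Imc'$ I would avoid an explicit construction and instead invoke Proposition~\ref{prop:relationsips}. Restricted to $\Cfd$, all three satisfy positivity: for $\Id$ this is immediate, and for $\Imc$/$\Imc'$ it is the observation already noted that an FD violation forces $|\MC_\Sigma(D)|\ge 2$ (and since $\SI(D)=\emptyset$ for FDs, we have $\Imc'=\Imc$ on FD instances). Yet all three violate progression for FDs: $\Id$ because no single deletion turns an inconsistent database consistent once two independent violations are present, and $\Imc$/$\Imc'$ by Example~\ref{example:imc_progression}. By the contrapositive of Proposition~\ref{prop:relationsips} (positivity together with bounded continuity implies progression), failure of progression in the presence of positivity forces failure of bounded continuity. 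As an illuminating alternative for $\Imc$, one can also exhibit a direct ``big jump'': taking $D_1$ to be $n$ disjoint conflicting pairs $\set{R(i,0),R(i,1)}$ gives $2^n$ maximal consistent subsets, and deleting a single fact halves this count, so $\Delta_{\Imc,\Sigma}(o_1,D_1)=2^{n-1}$, while a single-pair $D_2$ admits maximal impact only $1$.

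The measures $\Imi$ and $\Ip$ do satisfy progression, so the shortcut is unavailable and I would argue directly with a ``star versus edge'' pair over $R(A,B)$ with $\Sigma=\set{A\ra B}$. For $D_1$ I take a central record $f_0=R(0,0)$ together with $n$ records $f_1,\dots,f_n$ all equal to $R(0,1)$; the leaves are mutually consistent while each conflicts with $f_0$, so $\MI_\Sigma(D_1)=\set{\set{f_0,f_i}\mid 1\le i\le n}$, giving $\Imi(\Sigma,D_1)=n$ and $\Ip(\Sigma,D_1)=n+1$. Taking $o_1=\del{f_0}{\cdot}$ makes $D_1$ consistent, so $\Delta_{\Imi,\Sigma}(o_1,D_1)=n$ and $\Delta_{\Ip,\Sigma}(o_1,D_1)=n+1$. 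For $D_2$ I take the single conflicting pair $\set{R(0,0),R(0,1)}$, where every deletion yields consistency and the maximal impact of any $o_2$ is $1$ for $\Imi$ and $2$ for $\Ip$. Given $\delta$, choosing $n>2\delta$ makes $\Delta_{\I,\Sigma}(o_1,D_1)/\delta$ exceed the best available $\Delta_{\I,\Sigma}(o_2,D_2)$, so $\delta$-continuity fails; since this holds for every $\delta$, bounded continuity fails.

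The main obstacle is precisely the $\Imi$/$\Ip$ case, because these measures have progression and positivity and so cannot be dispatched by Proposition~\ref{prop:relationsips}; the real content is to engineer a \emph{single} tuple carrying unboundedly many conflicts (the star hub) while keeping a reference inconsistent database whose \emph{every} operation has bounded impact (the single edge). A secondary subtlety worth stating explicitly is that the same $\Sigma$ must serve both $D_1$ and $D_2$ and that $D_2$ must be inconsistent for the comparison to be non-trivial; using one FD over one relation and an inconsistent reference database resolves both points uniformly.
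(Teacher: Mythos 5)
Your proof is correct and follows essentially the same route as the paper's: the paper likewise dispatches $\Id$, $\Imc$, and $\Imc'$ via the contrapositive of Proposition~\ref{prop:relationsips} (positivity together with failure of progression), and handles $\Imi$ and $\Ip$ by a direct construction in which deleting a hub tuple causes an impact that grows with $n$ while every operation on the reference database has impact at most $1$ (for $\Imi$) or $2$ (for $\Ip$). The differences are cosmetic: the paper takes $D_2=o_1(D_1)$, where $D_1$ additionally contains $n$ disjoint violating pairs and uses a third attribute to distinguish the star's leaves, whereas you take $D_2$ to be a single violating pair and realize the leaves as duplicate facts with distinct record identifiers, which the paper's data model explicitly permits.
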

{\begin{proof}
Let $\Sigma=\set{A\rightarrow B}$ and let $D$ be a database that contains the following facts over $R(A,B,C)$:
\begin{gather*}
  f_0=R(0,0,0)\quad f_i=R(0,1,i)\quad f_j^k=R(j,k,0)
\end{gather*}
where $i,j\in\set{1,n}$ for some $n$ and $k\in\set{1,2}$. The fact $f_0$ violates the FD with every fact $f_i$, and for each $j$, the facts $f_j^1$ and $f_j^2$ jointly violate the FD. All the facts in the database participate in a violation of the FD; hence, $\Ip(\Sigma,D)=3n+1$. In addition,  $\Imi(\Sigma,D)=2n$. 

Let the operation $o_1$ be the deletion of $f_0$. Applying $o_1$, we significantly reduce inconsistency w.r.t.~these two measures, since none of the facts $f_i$ now participates in a violation; thus, $\Ip(\Sigma,o_1(D))=2n$ and $\Imi(\Sigma,o_1(D))=n$. However, every possible operation $o_2$ on the database $o_1(D)$ only slightly reduces inconsistency (by two in the case of $\Ip$ and by one in the case of $\Imi$). Therefore, $\Delta_{\Imi,\Sigma}(o_1,D)=n$ and $\Delta_{\Imi,\Sigma}(o_2,o_1(D))=1$, and the ratio between these two values depends on $|D|$. Similarly, it holds that $\Delta_{\Ip,\Sigma}(o_1,D)=n+1$ and $\Delta_{\Ip,\Sigma}(o_2,o_1(D))=2$, and again the ratio between these two values depends on $|D|$.


As for $\Id$, $\Imc$, and $\Imc'$, we use Proposition~\ref{prop:relationsips}. In the case of FDs,
each of the three measures satisfies
positivity but not progression  (Example~\ref{example:imc_progression}), and hence, they violate bounded continuity.
\end{proof}}

%

Table~\ref{table:postulates} summarizes the satisfaction of the
properties held by the different inconsistency measures we discussed
here, for the case of a system $\C$ of \revone{FDs or DCs} and
the repair system $\R_\subseteq$. 
The last column refers to computational complexity, discussed in Section~\ref{sec:complexity}, and 
the last row refers to another
measure, $\Ilmr$, introduced in Section~\ref{sec:lmr}.

\section{Rational and Tractable Measures}\label{sec:rational}

In Section~\ref{sec:postulates}, we defined several properties of inconsistency measures, and we have shown that each of the measures we consider satisfies some and violates others, with the exception of $\Imr$ that satisfies all. \revtwo{Unfortunately, as we explain in Section~\ref{sec:complexity}, this measure is often intractable. This, in turn, raises the question whether there is any tractable inconsistency measure that satisfies all of the properties. We answer this question affirmatively, for the case where repairing operations are tuple deletions, by presenting a new measure in Section~\ref{sec:lmr}. In Section~\ref{sec:more-general}, we discuss the challenges in designing such a measure for more general repair systems, particularly for the case where repairing operations are attribute updates.}


\subsection{Computational Complexity}\label{sec:complexity}
We now discuss the
complexity of measuring inconsistency according to the aforementioned measures.
We focus on the class of DCs and the
special case of FDs. Moreover, we focus on \e{data complexity}, which
means that the set $\Sigma$ of constraints is fixed, and only the
database $D$ is given as input for the computation of $\I(\Sigma,D)$.

\revone{\textbf{Complexity of $\Id$, $\Imi$, $\Ip$.}~} The measure $\Id$ boils down to testing consistency, which is doable
in polynomial time (under data complexity). The measures $\Imi$ and $\Ip$ can be
computed by enumerating all the subsets of $D$ of a bounded size,
where this size is determined by $\Sigma$. Hence, $\Imi$ and $\Ip$ can also be
computed in polynomial time. 

\revone{\textbf{Complexity of $\Imc$, $\Imc'$.}~} \revone{The measures $\Imc$ and $\Imc'$ 
can be intractable to compute, already in the case of FDs}. When $\Sigma$ is a set of FDs, $\Imc(\Sigma,D)$ is the number of
maximal independent sets (minus one) of the \e{conflict graph} wherein
the tuples of $D$ are the nodes, and there is an edge between every
two tuples that violate an FD. Counting maximal independent sets is
generally \#P-complete, with several tractable classes of graphs such
as the \e{$P_4$-free} graphs that do not have any
induced subgraph that is a path of length
four. 
Under conventional
complexity assumptions, the finite sets $\Sigma$ of FDs for which
$\Imc(\Sigma,D)$ is computable in polynomial time are \e{precisely}
the sets $\Sigma$ of FDs that entail a $P_4$-free conflict graph for
every database $D$~\cite{DBLP:conf/pods/LivshitsK17}. Note that $\Imc'$ is equivalent to $\Imc$ in the case of FDs; hence, the same applies to $\Imc'$.

\revone{\textbf{Complexity of $\Imr$.}~} \revone{The measure $\Imr$ 
is also intractable even for FDs}. For $\C=\Cfd$ and $\R=\R_\subseteq$, the measure $\Imr(\Sigma,D)$ is
the size of the minimum \e{vertex cover} of the conflict graph. Again,
this is a hard (NP-hard) computational problem on general graphs. 
In a recent work, it has been shown that there is an efficient procedure that takes as input a set $\Sigma$ of FDs and
determines one of two outcomes: \e{(a)} $\Imr(\Sigma,D)$ can be
computed in polynomial time, \e{or
  (b)} $\Imr(\Sigma,D)$ is NP-hard to compute (and even approximate
beyond some
constant)~\cite{DBLP:conf/pods/LivshitsKR18}. 
There, they have also
studied the case where the repair system allows only to update cells
(and not delete or insert tuples). In both repair systems it is the
case that, if $\Sigma$ consists of a single FD per relation (which is
a commonly studied case, e.g., key
constraints~\cite{DBLP:journals/jcss/FuxmanM07,DBLP:journals/tods/KoutrisW17})
then $\Imr(\Sigma,D)$ can be computed in polynomial time.
Unfortunately, this is no longer true (under conventional complexity
assumptions) if we go beyond FDs to simple EGDs.

\begin{example}\label{example:egds}
Consider the following four EGDs.
{\small
\begin{align*}
  \sigma_1:\quad & \forall {x,y,z}[R(x,y), R(x,z) \Rightarrow (y=z)] \\
  \sigma_2:\quad & \forall {x,y,z}[R(x,y), R(y,z) \Rightarrow
  (x=z)] \\
  \sigma_3:\quad & \forall {x,y,z}[R(x,y), R(y,z) \Rightarrow
  (x=y)] \\
    \sigma_4: \quad & \forall {x,y,z}[R(x,y), S(y,z) \Rightarrow (x=z)] 
\end{align*}
}
Observe that $\sigma_1$ is an FD whereas $\sigma_2$, $\sigma_3$ and
$\sigma_4$ are not. The constraint $\sigma_2$ states that there are no
paths of length two except for two-node cycles, and $\sigma_3$ states
that there are no paths of length two except for single-node
cycles. Computing $\Imr(\Sigma,D)$ w.r.t.~$\Sigma=\set{\sigma_1}$ or
$\Sigma=\set{\sigma_4}$ can be done in polynomial time; however, the
problem becomes NP-hard for $\Sigma=\set{\sigma_2}$ and
$\Sigma=\set{\sigma_3}$.\qed
\end{example}

The next theorem fully classifies the complexity of computing $\Imr(\Sigma,D)$ for $\Sigma$ that consists of a single EGD with two binary atoms.

\def\thmegd{
Let $\R=\R_{\subseteq}$, and let $\Sigma$ be a set that consists of a
single EGD $\sigma$ with two binary atoms. If $\sigma$ is of the
following form:
\begin{align*}
  \forall x_1,x_2,x_3 [R(x_1,x_2), R(x_2,x_3) \Rightarrow (x_i=x_j)]
\end{align*}
then computing $\Imr(\Sigma,D)$ is NP-hard. In any other case, $\Imr(\Sigma,D)$ can be computed in
polynomial time.
}

\begin{theorem}\label{THM:EGD}
 \thmegd
\end{theorem}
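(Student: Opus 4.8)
The plan is to recast $\Imr$ as a graph problem and then classify by the join pattern of $\sigma$. Since $\R=\R_{\subseteq}$ permits only tuple deletions and $\sigma$ has two atoms, every violation of $\sigma$ is witnessed by an unordered \emph{pair} of facts (or, in degenerate cases where both atoms may map to the same fact, by a single fact). Hence I would build the \emph{conflict graph} $G_D$ whose vertices are the facts of $D$ and whose edges join every pair of facts that together violate $\sigma$ (with a self-loop on each self-violating fact): a subset stays after repair iff it is conflict-free, so $D$ becomes consistent iff the deleted facts form a vertex cover of $G_D$. Therefore $\Imr(\set{\sigma},D)$ equals the size of a minimum vertex cover of $G_D$, and the theorem reduces to a dichotomy on $\sigma$: either the graphs $G_D$ range over a family for which minimum vertex cover is NP-hard, or they always lie in a class where it is PTIME-computable. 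I would first enumerate, up to renaming and swapping the two atoms, the shapes of a two-binary-atom EGD: the atoms use either two distinct relation symbols, or the same symbol $R$; in the latter case they may share their first column ($R(x,q),R(x,s)$, the FD shape), their second column ($R(p,y),R(r,y)$), the middle variable ($R(x_1,x_2),R(x_2,x_3)$, the path shape), both variables in reversed order ($R(x,y),R(y,x)$, the mirror shape), or no variable at all.

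For the tractable side I would handle each non-path shape and show $G_D$ lies in a vertex-cover-easy class for \emph{every} choice of the equated pair in the conclusion. If the atoms use distinct relations $R$ and $S$, then each conflict edge joins an $R$-fact to an $S$-fact, so $G_D$ is bipartite and a minimum vertex cover is computable in polynomial time by König's theorem (maximum matching). For the FD and reverse-FD shapes, grouping facts by the value of the shared column makes $G_D$ a disjoint union of complete multipartite graphs (facts in a group conflict exactly when they disagree on the equated column), whose minimum vertex cover has the closed form ``group size minus the largest in-agreement class,'' summed over groups. The mirror shape makes each fact conflict only with its unique transpose, so $G_D$ is a matching. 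Finally, when the atoms share no variable, or when the equated pair forces the two atoms to coincide, the constraint degenerates to a \emph{unary} condition (``every surviving fact is a self-loop,'' or ``all facts agree on a fixed column''), and the optimum is a one-pass count. Each subcase is routine once its conflict class is identified; the only care needed is to confirm that the enumeration of shapes and of conclusion-equalities is exhaustive.

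For the hard side, the path shape $R(x_1,x_2),R(x_2,x_3)\Rightarrow(x_i=x_j)$ is the one case in which $G_D$ need not be forced into an easy class, and I would prove NP-hardness by reduction from \textsc{Vertex Cover}. I would first note that the three admissible conclusions are equivalent up to transposing every fact of $R$ (replacing $R(a,b)$ by $R(b,a)$), which turns $x_2=x_3$ into $x_1=x_2$ and fixes $x_1=x_3$; hence it suffices to treat one representative, say $\sigma_2$ with conclusion $x_1=x_3$. Reading $R$ as a directed graph, a violation is a directed path $a\to b\to c$ with $a\neq c$, and a repair must delete edges so that no such path survives. The reduction builds, from an input graph $H$, a database in which the values act as \emph{junctions}: an edge is simultaneously an out-edge at its tail junction and an in-edge at its head junction, so deletions that kill paths through one junction interact globally, mirroring how vertex choices interact across the edges of $H$. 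The gadget represents each vertex of $H$ by a bundle of facts wired through fresh junction values so that (i) each edge of $H$ induces a conflict and (ii) the minimum vertex cover of $G_D$ equals that of $H$ plus a fixed, precomputable offset from forced intra-gadget deletions.

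The main obstacle is precisely this gadget. The subtlety is that if one used a single fact per vertex and never reused values, two facts would conflict only when they form a directed $2$-path, making $G_D$ close to a line graph---a class on which vertex cover is tractable, since its independent sets correspond to matchings. The hardness must therefore be extracted from the \emph{value reuse} that couples many in-edges and out-edges at a common junction, and the challenge is to engineer these junctions so that the intended edge-conflicts appear, no spurious conflicts arise between facts of different gadgets, and the additive offset is exact; establishing this exact correspondence is the crux. The tractable cases, by contrast, follow immediately once $G_D$ is placed in the appropriate graph class.
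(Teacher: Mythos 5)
Your hardness argument is not a proof: you explicitly leave the gadget unconstructed and call it ``the crux,'' but that construction \emph{is} the theorem. The paper's proof does exactly this work, and by a different route: it reduces from MaxCut, not Vertex Cover. Each vertex $v_i$ of the input graph is encoded by two facts $R(\val{1},v_i)$ and $R(v_i,\val{2})$ with deletion cost $m+1$, and each edge $(v_i,v_j)$ by two unit-cost facts $R(v_i,v_j)$ and $R(v_j,v_i)$; one then shows that a cut of size at least $k$ exists iff $\Imr(\Sigma,D)\le (m+1)\cdot n+2(m-k)+k$. The heavy costs force every optimal repair to delete \emph{exactly one} of the two heavy facts per vertex, and that binary choice is precisely how a two-coloring (a cut) is read off the repair. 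Your sketch contains no mechanism playing this role, and it is far from clear that your target identity (``minimum vertex cover of $G_D$ equals that of $H$ plus a fixed offset'') is even achievable: as your own line-graph remark shows, the conflict graphs realizable by the path-shaped EGD are heavily constrained, so one cannot simply embed an arbitrary graph $H$. Encoding cut structure, rather than embedding vertex-cover structure, is how the paper escapes this constraint.

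There is a second concrete error: your symmetry reduction to a single representative conclusion is wrong. Transposing every fact ($R(a,b)\mapsto R(b,a)$) swaps the conclusions $x_1=x_2$ and $x_2=x_3$ but \emph{fixes} $x_1=x_3$, so there are two equivalence classes, $\set{x_1=x_2,\,x_2=x_3}$ and $\set{x_1=x_3}$, and proving hardness for $\sigma_2$ alone leaves the first class untouched. The classes genuinely differ: $R(a,b)$ and $R(b,a)$ conflict under the conclusion $x_1=x_2$ (when $a\neq b$) but never under $x_1=x_3$. The paper's single MaxCut construction, by contrast, is verified to work verbatim for all three conclusions. On the tractable side, your conflict-graph classification (bipartite graphs via K\H{o}nig for two relations, complete multipartite graphs for the FD shape, a matching for the mirror shape, unary degenerate cases) is a valid and arguably cleaner packaging than the paper's direct enumeration of the constantly many repair options per block; but since $\R_\subseteq$ admits a cost attribute you need \emph{weighted} bipartite vertex cover (via LP or min-cut) rather than unweighted K\H{o}nig, and the exhaustive sweep over all conclusion pairs that you defer as ``routine'' is exactly where the paper spends its remaining three lemmas.
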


The proof of the theorem is given in the Appendix. We prove the hardness side by reduction from MaxCut---the problem of finding a cut in a graph (i.e., a partition of the vertices into two disjoint subsets), such that the number of edges crossing the cut is the highest among all possible cuts~\cite{10.5555/574848}. For the tractable cases, we  provide efficient algorithms. Note that the EGDs $\sigma_2$ and $\sigma_3$ from
Example~\ref{example:egds} satisfy the condition of
Theorem~\ref{THM:EGD}; hence, computing $\Imr(\Sigma,D)$ w.r.t.~these
EGDs is indeed NP-hard. The EGDs $\sigma_1$ and $\sigma_4$ do not satisfy the condition of the theorem; thus,
computing $\Imr(\Sigma,D)$ w.r.t.~these EGDs can be done in polynomial
time.

\subsection{The Subset Repair System}\label{sec:lmr}
\revtwo{The discussion in the previous section shows that among the measures considered so far, the rational ones are intractable.}
We now propose a new measure that is both rational and tractable in the
case where $\C$ is the class $\Cdc$ of DCs and
$\R=\R_\subseteq$ (i.e., operations are tuple deletions). Recall that a DC has the form $\forall \vec
x\neg[\varphi(\vec x)\wedge \psi(\vec x)]$
where $\varphi(\vec x)$ is a conjunction of atomic formulas,
and $\psi(\vec x)$ is a conjunction of comparisons over $\vec x$. 
Recall that DCs generalize common classes of constraints such as FDs,
conditional FDs, and EGDs.

\begin{figure}
\small
 \hrule
\begin{align}
        \mbox{Minimize}: & \sum_{i\in\tids(D)}\!\!x_i\cdot\kappa(\del{i}{\cdot},D) \mbox{\, subj.~to:}\notag\\
        \forall E\in\MI_\Sigma(D): \quad & \sum_{i\in\tids(E)}\!\!\! x_i\,\,\geq 1\label{eq:ineq}\\
        \forall i\in\tids(D): \quad & x_i\in\set{0,1}\label{eq:integer}
\end{align} 
\hrule
\vskip-1em
\caption{\label{fig:ilp}ILP for $\Imr(\Sigma,D)$ under $\Cdc$ and $\R_\subseteq$.}
\end{figure}

Let $D$ be a database and $\Sigma$ a finite set of DCs. For
$\R=(O,\kappa)$, the measure $\Imr(\Sigma,D)$ is the result of the
Integer Linear Program (ILP) of Figure~\ref{fig:ilp} wherein each
$x_i$, for $i\in\tids(D)$, determines whether to delete the $i$th
tuple ($x_i=1$) or not ($x_i=0$). Denote by $\Ilmr(\Sigma,D)$ the
solution of the \e{linear relaxation} of this ILP, which is the Linear
Program (LP) obtained from the ILP by replacing the last constraint
(i.e., Equation~\eqref{eq:integer}) with
``$\forall i\in\tids(D): 
0\leq x_i\leq 1$.''

It is easy to see that the relative rankings of the inconsistency measure values of two databases under $\Ilmr$ and $\Imr$ are consistent with each other if they have sufficient separation under the first one.
More formally, 
for two databases $D_1, D_2$ we have that
$\Ilmr(\Sigma,D_1) \geq \mu \cdot \Ilmr(\Sigma,D_2)$ 
implies that $\Imr(\Sigma,D_1) \geq \Imr(\Sigma,D_2)$,
where $\mu$ is the integrality gap of the LP relaxation. 
The maximum number of tuples involved in a violation of a constraint in $\Sigma$ gives an upper bound on this integrality gap. 
In particular, for FDs (as well for the EGDs in Example~\ref{example:egds}), this number is 2; hence, 
$\Ilmr(\Sigma,D_1) \geq 2 \cdot \Ilmr(\Sigma,D_2)$ implies that $\Imr(\Sigma,D_1) \geq \Imr(\Sigma,D_2)$.

\begin{example}\label{eg:IRLIN}
Consider again the databases of Figure~\ref{fig:running} and the FDs of Example~\ref{eg:running}. In the LP of Figure~\ref{fig:ilp}, we define the variables $x_1,\dots,x_5$ corresponding to the facts $f_1,\dots,f_5$. Since $D_0$ is consistent, $MI_\Sigma(D_0)$ is empty, and we obtain a solution to the ILP by assigning $x_i=0$ for all $i\in\set{1,\dots,5}$.
For $D_1$, we have that $\MI_\Sigma(D_1) = \{\{t_2, t_3\}, \{t_2, t_4\},\{t_2, t_5\}, \{t_3, t_4\},\{t_3, t_5\}, \{t_4, t_5\},\{t_1, t_5\}\}$. For every pair $\{t_i, t_j\} \in \MI_\Sigma(D_1)$, the ILP contains a constraint $x_i + x_j \geq 1$. If we assign $0.5$ to all $x_i$, all the constraints are satisfied. Assuming unit cost for deletion, the total cost is $\Ilmr(\Sigma,D_1) = 2.5$. Another possible assignment is $x_1 = 0$, $x_2 = x_3 = x_4 = 0.5$, and $x_5 = 1$. Note that $\Ilmr(\Sigma,D_1) < \Imr(\Sigma,D_1) = 3$. 
However, for $D_2$, the optimal cost is $\Ilmr(\Sigma,D_2) = \Imr(\Sigma,D_2) = 2$, which can be obtained by assigning $0.5$ to $x_2, x_3, x_4, x_5$ or by $x_2 = x_4 = 1, x_1=x_3=x_5 = 0$.
\end{example}

The following theorem (\mrev{proof is in the Appendix}) shows that $\Ilmr$ satisfies all four properties and can be efficiently computed for the class $\Cdc$ of denial constraints and the repair system $\R_\subseteq$.

\def\thmlin{
  The following hold for $\C=\Cdc$ and $\R=\R_\subseteq$.
\begin{enumerate}
\item $\Ilmr$ satisfies positivity, monotonicity, progression and
  constant weighted continuity.
\item $\Ilmr$ can be computed in polynomial time (in data complexity).
\end{enumerate}
}

\begin{theorem}\label{thm:lin}
\thmlin
\end{theorem}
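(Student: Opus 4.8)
The plan is to reduce every claim to a structural analysis of the linear program in Figure~\ref{fig:ilp}. The crucial enabling observation is that for a fixed set $\Sigma$ of DCs, every minimal inconsistent subset $E\in\MI_\Sigma(D)$ has at most $k$ tuples, where $k$ is the maximal number of atoms occurring in a constraint of $\Sigma$; hence $|\MI_\Sigma(D)|\le\binom{|D|}{k}=O(|D|^k)$, and $\MI_\Sigma(D)$ can be enumerated in polynomial time by scanning all subsets of $D$ of size at most $k$. This immediately settles part~(2): the LP has $|\tids(D)|$ variables and $O(|D|^k)$ inequalities with polynomially-sized rational coefficients, so its optimum $\Ilmr(\Sigma,D)$ is computable in polynomial time by any polynomial-time LP solver.

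For positivity I would note that $D\not\models\Sigma$ implies $\MI_\Sigma(D)\neq\emptyset$ by anti-monotonicity, so every feasible point must satisfy $\sum_{i\in\tids(E)}x_i\ge1$ for some $E$; since each deletion cost $\kappa(\del{i}{\cdot},D)$ is strictly positive, no feasible point can have objective value $0$, and in particular the optimum is strictly positive. For monotonicity I would prove a set-containment of feasible regions: if $\Sigma'\models\Sigma$, then every $x$ feasible for the $\Sigma'$-program is feasible for the $\Sigma$-program. Indeed, take $E\in\MI_\Sigma(D)$; then $E\not\models\Sigma$, so $E\not\models\Sigma'$, and by anti-monotonicity $E$ contains some $E'\in\MI_{\Sigma'}(D)$; since $x\ge0$ and $E'\subseteq E$, the $\Sigma'$-inequality $\sum_{i\in\tids(E')}x_i\ge1$ forces $\sum_{i\in\tids(E)}x_i\ge1$. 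As the objective is identical in both programs, minimizing over the smaller region yields $\Ilmr(\Sigma,D)\le\Ilmr(\Sigma',D)$.

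The operational properties both follow from two bounds on the effect of a single deletion $o=\del{i}{\cdot}$, relating the program on $D$ to the program on $D'=o(D)$, whose constraint set is exactly $\{E\in\MI_\Sigma(D): i\notin\tids(E)\}$. The upper bound $\Delta_{\Ilmr,\Sigma}(o,D)\le\kappa(\del{i}{\cdot},D)$ comes from extending any optimal $D'$-solution by setting $x_i=1$, which is feasible for $D$ and costs exactly $\kappa(\del{i}{\cdot},D)$ more. The lower bound $\Delta_{\Ilmr,\Sigma}(o,D)\ge x_i^{*}\cdot\kappa(\del{i}{\cdot},D)$, for an optimal $D$-solution $x^{*}$, comes from restricting $x^{*}$ to $D'$, which stays feasible because $D'$ only drops constraints. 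Progression then follows directly: when $D\not\models\Sigma$, positivity gives some $x_i^{*}>0$, and deleting that tuple strictly decreases the measure. For weighted continuity I would take $\delta=k$: the upper bound shows every deletion has weighted impact $\Delta/\kappa\le1$, while on an inconsistent $D_2$ I pick a violated set $E$ and an index $i\in\tids(E)$ with $x_i^{*}\ge1/|E|\ge1/k$ (such an index exists because $\sum_{j\in\tids(E)}x_j^{*}\ge1$), yielding a deletion of weighted impact at least $1/k$, which matches the required fraction of the impact of $o_1$.

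I expect the weighted-continuity step to be the main obstacle, and specifically the uniform lower bound on achievable impact. The upper bound and positivity/monotonicity are essentially bookkeeping on the LP, but for continuity one must exhibit, on every inconsistent database, a single deletion whose weighted impact is bounded away from zero by a constant independent of $|D|$ (here $1/k$); this is precisely what distinguishes $\Ilmr$ from the counting measures of Proposition~\ref{prop:bounded_continuity}, and the argument hinges on the fact that some variable in each tight violation-inequality of an optimal solution carries mass at least $1/k$. The only delicate point is the degenerate case of a consistent $D_2$, where no operation reduces inconsistency; this case is not the operative one for continuity and affects every measure identically, so I would handle it separately rather than through the bound above.
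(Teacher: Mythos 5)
Your proposal is correct and follows essentially the same route as the paper's proof: polynomial-time enumeration of $\MI_\Sigma(D)$ plus an LP solver for tractability, feasibility of the zero assignment for positivity, containment of feasible regions via minimal-inconsistent-subset containment for monotonicity, and for the operational properties the same two bounds (extending an optimal solution with $x_i=1$ for the upper bound $\Delta\le\kappa$, and picking a tuple carrying mass at least $1/|E|$ in a violated set for the lower bound), yielding $\delta$ equal to the maximal number of atoms in a DC of $\Sigma$. Your reorganization of progression and continuity around the two reusable deletion bounds, and your explicit flagging of the degenerate consistent-$D_2$ case (which the paper likewise sidesteps by assuming both databases inconsistent), are only presentational differences.
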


It thus appears from Theorem~\ref{thm:lin} that, for tuple deletions
and DCs, $\Ilmr$ is a desirable inconsistency measure---it satisfies the discussed properties and
avoids the inherent hardness of $\Imr$ (e.g., Theorem~\ref{THM:EGD}).

\added{
\subsection{More General Repair Systems}\label{sec:more-general}

In our analysis of the satisfaction of the properties and computational complexity of different inconsistency measures, we focused mostly on tuple deletions (i.e., where $\R=\R_\subseteq$) under DCs. While we have a good understanding of both aspects of inconsistency measures in this setting, the picture for other types of constraints and repairing operations is quite preliminary. As an example, consider the case of \e{update repairs}~\cite{DBLP:conf/pods/LivshitsKR18,DBLP:conf/icdt/KolahiL09}, where the allowed repairing operations are attribute updates. We again assume that the constraints are DCs. Recall that we denote an attribute update by $\upd{i.A}{c}{\cdot}$, where $i$ is a tuple identifier in $D$, $A$ is an attribute, and $c$ is the new value assigned to the fact $D[i]$ in attribute $A$. Here, we assume a countably infinite domain $\dom$ of possible attribute values.

We observe that none of the measures considered so far is a rational and tractable measure for update repairs. Clearly, the satisfaction of positivity and monotonicity does not depend on the repair system at hand; hence, the measures $\Imi$, $\Ip$, $\Imc$, and $\Imc'$ do not satisfy at least one of these properties also in the case of update repairs (see Table~\ref{table:postulates}). When considering progression, the measures no longer behave the same for both subset and update repairs. Clearly, $\Id$ still violates progression. As discussed in Section~\ref{sec:postulates}, the measures $\Imi$ and $\Ip$ satisfy progression in the case of tuple deletions. However, both measures violate progression when considering update repairs, even for the more restricted case of FDs, as updating a single value does not necessarily completely resolve a conflict between two facts in the database.

\begin{example}\label{example:updates1}
Consider a database $D$ over $R(A,B,C,D)$ consisting of the facts $R(0,0,0,0)$ and $R(0,1,0,1)$. Let $\Sigma=\set{A\rightarrow B, C\rightarrow D}$. The facts violate both FDs, and it is impossible to resolve both conflicts by updating a single value. Hence, $\Imi(\Sigma,D)=\Imi(\Sigma,o(D))=2$ and $\Ip(\Sigma,D)=\Ip(\Sigma,o(D))=2$ for any attribute update $o$.\qed
\end{example}

The above example shows that for update repairs, considering violations at the fact level is insufficient, as a deletion of a single fact resolves every conflict the fact is involved in, but this is not the case when updating a single value in the fact. Hence, one may suggest to consider a measure that counts the number of minimal violations instead of counting minimal inconsistent subsets of facts. Here, a minimal violation is a tuple $(F,\sigma)$, where $F\subseteq D$ and $\sigma$ is a constraint in $\Sigma$, such that $F\not\models\sigma$, but $F'\models\sigma$ for every $F'\subseteq F$. Note that Example~\ref{example:updates1} is not a counterexample for progression in this case, as we can always decrease the number of violations by updating a single attribute value. Nonetheless, the new measure still does not satisfy progression, as illustrated next.

\begin{example}\label{example:updates2}\em
Let $\Sigma=\set{A\rightarrow B, B\rightarrow C, D\rightarrow A}$. Let $D$ be a database that contains the following four facts over $R(A,B,C,D,E)$:
\begin{gather*}
  f_0=R(0,0,0,0,1)\quad f_1=R(0,0,0,0,2)\\ f_2=R(0,1,1,0,3)\quad f_3=R(0,1,1,0,4)
\end{gather*}
The minimal violations in $D$ are: $(\set{f_0,f_2},A\rightarrow B)$, $(\set{f_0,f_3},A\rightarrow B)$, $(\set{f_1,f_2},A\rightarrow B)$, and $(\set{f_1,f_3},A\rightarrow B)$. Clearly, updating values in the attributes $C,D$ or $E$ cannot decrease the number of violations, as these attributes are not involved in violations. Moreover, updating a value in attribute $B$ to a value that does not already belong to the domain of $B$ can only increase the number of violations. Hence, there are only two operations that might decrease the number of violations: \e{(1)} updating some value in attribute $A$ to a new value, or \e{(2)} updating a value in attribute $B$ to another value from the domain of $B$ (i.e., either $0$ or $1$). We now show that both operations actually increase the number of violations.

Suppose that we change the value of attribute $A$ in $f_0$ to $1$. This operation resolves the violations $(\set{f_0,f_2},A\rightarrow B)$ and $(\set{f_0,f_3},A\rightarrow B)$, but introduces new violations: $(\set{f_0,f_1},D\rightarrow A)$, $(\set{f_0,f_2},D\rightarrow A)$, and $(\set{f_0,f_3},D\rightarrow A)$. Hence, the total number of violations increases. Clearly, updating the value of attribute $A$ in one of $f_1$, $f_2$, or $f_3$ will similarly increase the number of violations.

Next, suppose that we change the value of attribute $B$ in $f_0$ to $1$. This operation again resolves the violations $(\set{f_0,f_2},A\rightarrow B)$ and $(\set{f_0,f_3},A\rightarrow B)$. However, it introduces the violations
$(\set{f_0,f_1},A\rightarrow B)$, $(\set{f_0,f_2},B\rightarrow C)$ and $(\set{f_0,f_3},B\rightarrow C)$, and the total number of violations increases. A similar argument applies to the case when the value of attribute $B$ is changed in $f_1$, $f_2$, or $f_3$. Therefore, while there exists a sequence of operations that decreases the number of violations, no individual operation does.\qed
\end{example}

Example~\ref{example:updates2} can be used to show that $\Imc$ and
$\Imc'$ violate progression as well. The measure $\Imr$ satisfies
progression, as we can always update an attribute value from the minimum repair.

Proposition~\ref{prop:relationsips} implies that none of the measures considered so far, except for $\Imr$, satisfies bounded continuity. The measure $\Imr$ satisfies bounded (weighted) continuity, since for any operation $o$ we have that $\frac{\Delta_{\I,\Sigma}(o,D_1)}{\delta\cdot\kappa(o,D_1)}$ is either $1$ (if $o$ belongs to a minimum repair) or $0$ (if no minimum repair contains $o$). We conclude that $\Imr$ again stands out among the measures as it satisfies every property. \mrev{Hence, in cases where $\Imr$ is tractable (e.g., when no two constraints share an attribute), this measure provides both rationality and tractability; however, computing $\Imr$ is often hard, even for simple FD sets~\cite{DBLP:conf/pods/LivshitsKR18}.}

  Recall that in the case of $\R_\subseteq$, the linear relaxation of $\Imr$ can be used to obtain a desired measure that is also efficient.
  This, however, is no longer the case for updates. We do not have any natural linear relaxation for attribute updates, and finding such a measure remains a
  challenging open problem for future research.

\mrev{
\noindent
\textbf{Remark.~}  We again stress that a chosen measure of inconsistency does not restrict the repair operations that a system allows or applies. Any repair system, supporting any type of operations whatsoever, could adopt measures such as $\Imr$ (i.e., the maximum size of a consistent subset) and $\Ilmr$. It is just that continuity and progression are guaranteed only for deletion operations, and could be violated otherwise. Indeed, our empirical experience, which we report in the next section, draws an optimistic picture: these measures work well in practical scenarios even for attribute updates. }}

\section{Experimental Evaluation}\label{sec:experiments}

In this section, we present an experimental evaluation of the considered measures \mrev{under different error (and repair) models.}

\subsection{Setup}\label{sec:expt-setup}

\begin{figure*}[t]
\small
\centering
\mrevcolor
\begin{tabular}[b]{|c||c|c|c|c|}\hline
Dataset & \#Tuples & \#Atts. & \#DCs & Example constraint \\\hline\hline
\textbf{Stock} & 123K & $7$ & $6$ & \scriptsize{$\forall t \neg(t[\att{High}]<t[\att{Low}])$} \\\hline
\textbf{Hospital} & 115K & $15$ & $7$ & 
\scriptsize $\forall t,t' \neg(t[\att{State}]=t'[\att{State}],t[\att{Measure}]=t'[\att{Measure}],t[\att{StateAvg}]\neq t'[\att{StateAvg}])$\\\hline
\textbf{Food} & 200K & $17$ & $6$ & \scriptsize$\forall t,t' \neg(t[\att{Location}]=t'[\att{Location}],t[\att{City}]\neq t'[\att{City}])$\\\hline
\textbf{Airport} & 55K & $9$ & $6$ & \scriptsize$\forall t,t' \neg(t[\att{Country}]=t'[\att{Country}],t[\att{Continent}]\neq t'[\att{Continent}])$\\\hline
\textbf{Adult} & 32K & $15$ & $3$ & \scriptsize$\forall t,t' \neg(t[\att{Gain}]<t'[\att{Gain}], t[\att{Loss}]<t'[\att{Loss}])$\\\hline
\textbf{Flight} & 500K & $20$ & $13$ & \scriptsize$\forall t,t' \neg(t[\att{Origin}]=t'[\att{Origin}],t[\att{Dest}]=t'[\att{Dest}],t[\att{Distance}]\neq t'[\att{Distance}])$\\\hline
\textbf{Voter} & 950K &  $22$ & $5$ & \scriptsize$\forall t,t' \neg(t[\att{BirthYear}]<t'[\att{BirthYear}],t[\att{Age}]>t'[\att{Age}])$ \\\hline
\textbf{Tax} & 1M & $15$ & $9$ & \scriptsize$\forall t,t' \neg(t[\att{State}]=t'[\att{State}],t[\att{Salary}]>t'[\att{Salary}],t[\att{Rate}]<t'[\att{Rate}])$ \\\hline
\end{tabular}
\quad
\pgfplotstableread{
x y y-max y-min
{Stock} 0.53 0.6 0
{Hospital} 0.38  0.5 0.167
{Food}  0.47  0.6 0.4
{Airport}  0.4 0.6 0
{Adult}  0.33 0.5 0
{Flight}  0.39 1 0
{Voter}  0.1 0.25 0
{Tax} 0.64 0.875 0.125
}{\differanser}
\begin{tikzpicture}[scale=0.9]
  \small
\begin{axis} [
width  = 2.1in,
axis x line*=bottom,
axis y line*=left,
height = 1.3in,
symbolic x coords={{Stock},{Hospital},{Food},{Airport},{Adult},{Flight},{Voter},{Tax}},
ymax=1,
xtick=data,
x tick label style={rotate=45,anchor=east, yshift=-5pt, font=\bfseries},
]
\addplot+[blue, very thick, forget plot,only marks] 
plot[very thick, error bars/.cd, y dir=plus, y explicit]
table[x=x,y=y,y error expr=\thisrow{y-max}-\thisrow{y}] {\differanser};
\addplot+[red, very thick, only marks,xticklabels=\empty] 
plot[very thick, error bars/.cd, y dir=minus, y explicit]
table[x=x,y=y,y error expr=\thisrow{y}-\thisrow{y-min}] {\differanser};
\end{axis} 
\end{tikzpicture}
\vskip-0.5em
\caption{\label{fig:datasets} \mrev{The datasets used in our experiments. On the right:
the level of attribute overlapping of the constraints.}}
\end{figure*}

\paragraph*{Datasets} We evaluate the measures on datasets that were previously used for the problem of mining constraints~\cite{DBLP:journals/pvldb/ChuIP13,DBLP:journals/pvldb/BleifussKN17,DBLP:journals/pvldb/PenaAN19,DBLP:journals/pvldb/LivshitsHIK20}:
the real-world datasets
\textbf{Stock}, \textbf{Hospital}, \textbf{Food}, \textbf{Airport}, \textbf{Adult}, \textbf{Flight}, and \textbf{Voter}, and 
the synthetic \textbf{Tax} dataset.  
We use a DC mining algorithm~\cite{DBLP:journals/pvldb/LivshitsHIK20} to obtain a set of DCs for each dataset. All DCs are of the form $\forall t,t' \neg(P_1,\dots,P_m)$, where $t,t'$ are database tuples, each $P_i$ is a predicate $t[A]\rhorel t'[B]$, such that $A$ and $B$ are attributes of the schema, and $\rho$ is a comparison operator from $\set{=,\neq,>,<,\ge,\le}$. (Note that it may be the case the $t=t'$.) \mrev{More details about the datasets and constraints are given in Figure~\ref{fig:datasets}.}

\paragraph*{\revone{Measure implementations}}
We implemented all measures in Python 3 using Pandas. \revone{Using SQL, we materialize all conflicting pairs of tuples.} \revone{For example, the following DC over the Tax dataset:
{\small
\begin{align*}
    \forall t,t' \neg&(t[\att{St}]=t'[\att{St}],t[\att{Salary}]> t'[\att{Salary}],t[\att{Tax}]<t'[\att{Tax}])
\end{align*}}
(where $\att{St}$ stands for ``State'') will give rise to the query
{\small
\begin{align*}
    &\mbox{\textsf{SELECT DISTINCT }} R_1.\att{ID}, R_2.\att{ID}\quad\\
    &\mbox{\textsf{FROM }} R\mbox{ \textsf{AS} } R_1, R \mbox{ \textsf{AS} }R_2\\
    &\mbox{\textsf{WHERE }}R_1.\att{St}=R_2.\att{St},\mbox{ }R_1.\att{Salary}>R_2.\att{Salary},\mbox{ }R_1.\att{Tax}<R_2.\att{Tax}\,.
\end{align*}}}
For the measure $\Id$, we simply return $1$ if the query result is nonempty and $0$ otherwise. The measure $\Imi$ counts the tuples in the query result, and $\Ip$ counts the database facts that occur in these tuples. 
To compute  $\Imc$, we use a C++ implementation~\cite{parallelenum} of an algorithm for enumerating the maximal cliques~\cite{DBLP:conf/icalp/ConteGMV16}
over the complement of the conflict graph.
\revone{The conflict graph is also built from the result of the above SQL query: we add a vertex for each fact and an edge for each fact pair in the result.}
We use the Gurobi Optimizer~\cite{gurobi} to compute $\Imr$ and $\Ilmr$ using the LP of Figure~\ref{fig:ilp}. \revone{We dynamically construct the LP from the result of the SQL query (i.e., we add a corresponding constraint for each fact pair in the result)}. 

\mrev{\textbf{Noise Generation}.~}
Initially, all datasets are consistent w.r.t.~the given set of DCs. \mrev{We use two algorithms to add noise to these datasets. In the first, which we refer to as {\bf \algname{CONoise} (for Constraint-Oriented Noise)}, we introduce random violations of the constraints, by running several iterations of the following procedure:}
\begin{enumerate}
    \item Randomly select a constraint $\varphi$ from the set of constraints.
    \item Randomly select two tuples $t$ and $t'$ from the database.
    \item For every predicate $P=(t[A]\rhorel t'[B])$ of $\varphi$:
    \begin{itemize}
        \item If $t$ and $t'$ jointly satisfy $P$, continue to the next predicate.
        \item If $\rho\in\set{=,\le,\ge}$, change either $t[A]$ to $t[B]$ or vice versa (the choice is random).
        \item If $\rho\in\set{<,>,\neq}$, change either $t[A]$ or $t[B]$ (the choice is again random) to another value from the active domain of the attribute such that $P$ is satisfied, if such a value exists, or a random value in the appropriate range otherwise.
    \end{itemize}
\end{enumerate}

\mrev{The second algorithm, {\bf \algname{RNoise} (for Random Noise)}, has two parameters: $\alpha$ is used to control the level of noise (we modify $\alpha$ of the values in the dataset), and $\beta$, controls the data skewness, as we now explain. At each iteration of \algname{RNoise}, we randomly select a database cell corresponding to an attribute that occurs in at least one constraint. Then, we either change its value to another value from the active domain of the corresponding attribute (with probability $0.5$) or to a typo. For the first case, we use the Zipfian distribution, where the probability of selecting the $i$th value in the active domain is proportional to $i^{-\beta}$; hence, larger $\beta$ means larger skew.
}

\paragraph*{General setup.} All experiments were executed on a server with two Intel(R) Xeon(R) Gold 6130 CPUs (2.10GHz, 16 cores) with 512GB of RAM running Ubuntu 20.04. Each experiment was repeated five times and the average times are reported. The graphs of Figures~\ref{fig:results} and~\ref{fig:imc} were obtained in one execution and are representative of all five executions, where we observed a similar behavior.

\subsection{Results}

%

\def\heightp{}

\def\moreparams{%
  width=1.5in,
  height=1.2in,
  axis x line*=bottom,
  axis y line*=left,
  line width=0.2mm,
}

\begin{figure*}[t!]
\subfloat[\small \bf Noise added with
\algname{CONoise}.\label{fig:results_first}]{
  \small
  \includegraphics[width=6in]{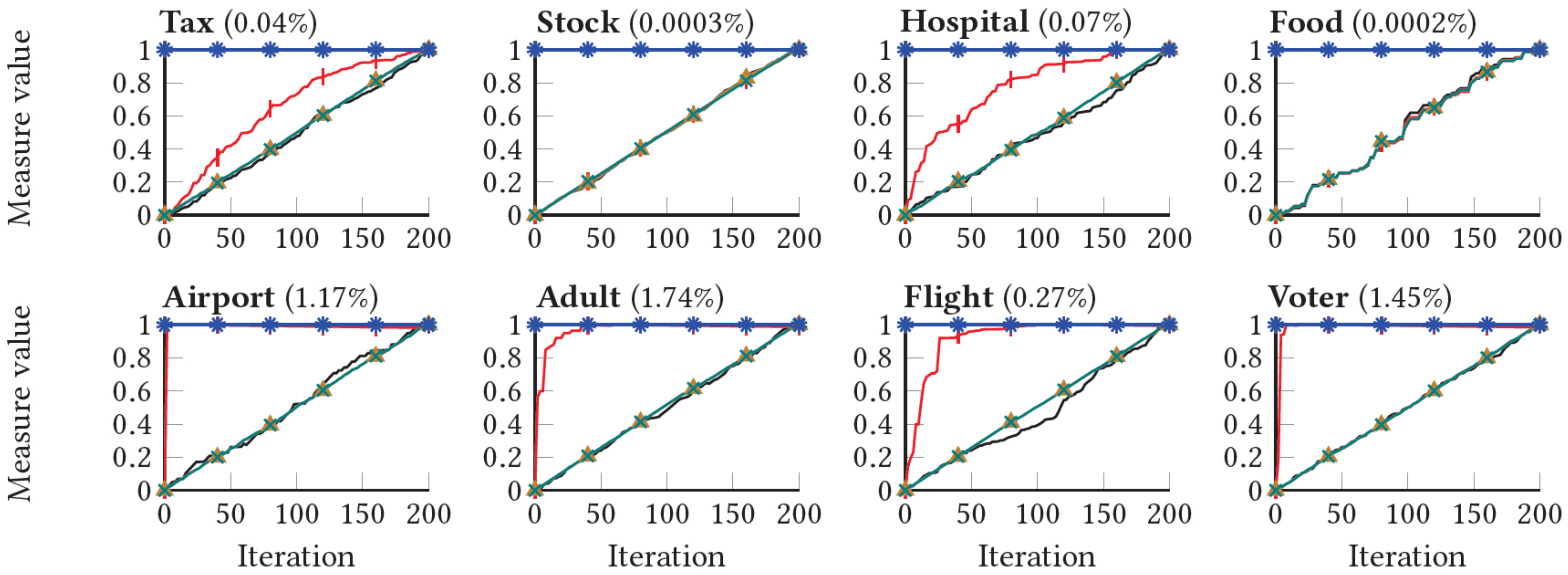}
}
\\
\subfloat[\small \bf Noise added with \algname{RNOise} ($\alpha=0.01$
and $\beta=0$).\label{fig:results_second}]{
  \includegraphics[width=6in]{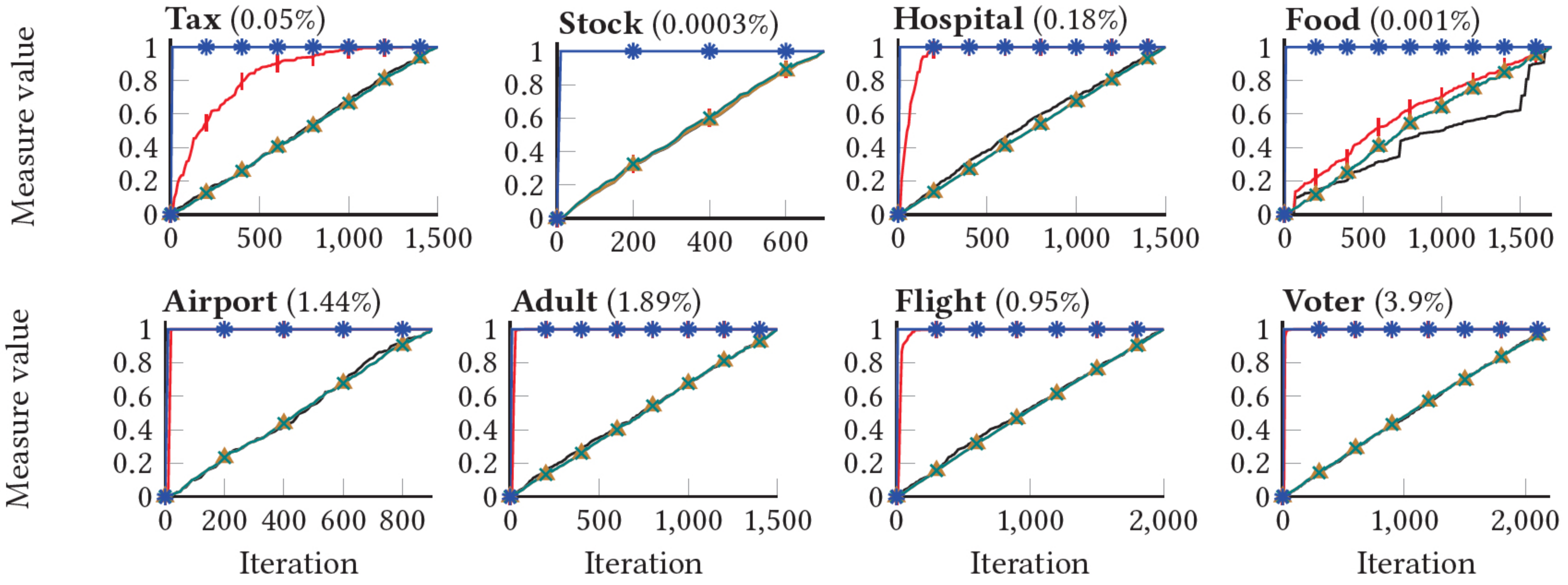}
}
\caption{The normalized values of $\Id$ (\ref{plot_id}), $\Imi$ (\ref{plot_imi}), $\Ip$ (\ref{plot_ip}), $\Imr$ (\ref{plot_ir}), and $\Ilmr$ (\ref{plot_ilinr}). Violation ratio in parentheses.}
    \label{fig:results}
\end{figure*}

\input{charts_imc}

\subsubsection{Measure Behavior}\label{sec:exp:behavior}

\mrev{We evaluate the behavior of the measures on samples of 10K tuples from each dataset. First, we run 200 iterations of \algname{CONoise} on each dataset and compute the measure values after each iteration; the results are in Figure~\ref{fig:results_first}. Then, we run \algname{RNoise} with $\beta=0$, $\beta=1$, and $\beta=2$ until we modify $1\%$ of the values in the dataset (hence, $\alpha=0.01$). As the number of iterations may be high, we compute the measure values every ten iterations. The results for $\beta=0$ are depicted in Figure~\ref{fig:results_second}, and the (similar) results for $\beta=1,2$ are in the Appendix, where we also test different probabilities for typos. We report the final violation ratio (i.e., percentage of violating tuple pairs out of all pairs) obtained in the experiments above each diagram (in parentheses).} \revtwo{Note that when we modify values in an iteration of \algname{CONoise} or \algname{RNoise}, we may introduce several violations at once, and resolve other violations at the same time. In our experiments, we observed that the number of newly introduced violations is usually significantly higher than the number of resolved ones, as evidenced by the behavior of $\Imi$ (that counts violations) in the charts of Figure~\ref{fig:results}: its value generally increases with the number of iterations.}

\mrev{\textbf{Variations with noise.~}} In general, we see that the measures may behave very differently on the same dataset. As expected, the drastic measure $\Id$ jumps from zero to one when we introduce the first violation, and stays at that point until the end of the execution. We see that the measure $\Ip$ often behaves in a similar way. For example, on the Airport dataset, it jumps from zero to its maximal value already in the first iteration. This behavior can be explained by observing the constraint set used for this dataset. For example, one of the DCs in this set is $\forall t,t' \neg(t[\att{Country}]=t'[\att{Country}],t[\att{Continent}]\neq t'[\att{Continent}])$, and all the tuples in the dataset initially agree on the value of the $\att{country}$ and $\att{continent}$ attributes. Hence, whenever we change the value of the $\att{continent}$ attribute for a single tuple, all the other tuples are immediately involved in a violation with it, and the value of $\Ip$ jumps to \#tuples.

Contrarily, the measure $\Imr$ is able to recognize, in this example, that the dataset contains a single erroneous tuple, and react to this small change in a more proportional way. In general, we see that the measures $\Imr$, $\Ilmr$, and $\Imi$ behave similarly in most cases and the corresponding graphs are generally monotonically increasing and close to being linear. \mrev{While this behavior seems to be very consistent for $\Imr$ and $\Ilmr$, the measure $\Imi$ is slightly less stable, as can be seen, for example, on the Food dataset in Figure~\ref{fig:results_second}.}



\mrev{Due to the high computational cost of $\Imc$ (as we report later on), we evaluate its behavior on a small sample of 100 tuples from each dataset. The results are in Figure~\ref{fig:imc}; missing graphs are due to timeout. The left chart is for \algname{CONoise} and the right is for \algname{RNoise} with $\beta=0$. We run both algorithms for 100 iterations.} Observe that this measure is the least stable of all, as we get very different graphs on the different datasets. In particular, on the Stock dataset it resembles a step function \mrev{and fails to indicate progress for long periods of time}. On the Airport dataset, we see a lot of jumps and jitters on these graphs. \mrev{This behavior may be affected, to some extent, by the small size of the datasets. However, the charts that we obtain for the other measures in this case (given in the Appendix), while also less stable, follow a similar trend as the ones in Figure~\ref{fig:results}.}

\mrev{\textbf{Error rate, data skew, and overlapping constraints.~} The experiments indicate that the behavior of the measures is largely stable across several properties of the data and constraints. 
We can see that the error rate, which increases with the number of iterations, affects the value of the measures, but has no evident impact on their behavior (i.e., the trend of the graph). 
Moreover, we obtain very similar charts in the experiments with $\beta=1,2$ and distinct typo probabilities (see the Appendix); hence, {\em data skew} and different {\em distributions of error types} also do not seem to affect the results. Finally, we examine how {\em overlap of dependencies} affects the results. For each dataset, and for each DC in its constraint set, we compute the ratio of DCs that overlap with it (i.e., the DCs share at least one attribute).
  Figure~\ref{fig:datasets} (right) shows the minimum, maximum, and average values for each dataset. We again see no clear correlation between the behavior of the measures and level of overlap.
}

\mrev{
\subsubsection{Case study: the HoloClean repair system} \label{sec:exp:holoclean}

Up to now, we described experiments with our synthetic noise generation models.
We have shown that the behavior of the measures is robust to the operations and is not sensitive to various parameters of the input such as data skew, error rate, and overlap of constraints. We now further strengthen this finding by showing similar results on a  cleaning system that we treat as a black box, namely the HoloClean system~\cite{DBLP:journals/pvldb/RekatsinasCIR17}, that
    uses soft rules and a statistical approach for automatic data
    cleaning. To accurately analyze the behavior of our measures, we need a dataset where the behavior of HoloClean is predictable; hence, we run the system on the (dirty) Hospital dataset provided in the HoloClean repository (\url{https://github.com/HoloClean}) with a set of 15 DCs. It has been shown that the accuracy of HoloClean on this dataset is very high~\cite{DBLP:journals/pvldb/RekatsinasCIR17}; thus, the system should significantly decrease the level of inconsistency in this dataset. 

Since HoloClean features one-shot automatic cleaning, we simulate a cleaning pipeline by providing it with \e{a single DC at a time}. That is, we first run HoloClean on the original dataset with a single DC; then on the resulting dataset after adding one more DC to the constraint set, and so on. Note that HoloClean uses soft constraints; hence, it does not necessarily eliminate all violations. 
We compute the measures after every step; the (normalized) values are in Figure~\ref{fig:holoclean}. We tried several random permutations of the DCs and obtained similar results. Again, we see that $\Id$ and $\Ip$ fall short of effectively indicating progress. Contrarily, the other measures, particularly $\Imr$ and $\Ilmr$, are able to capture the reduction in the inconsistency level, and show an almost linear decay as desired.}

\subsubsection{Running Times}\label{sec:exp:runningtime}

{
  \begin{table}[t]
  \mrevcolor
  \small
    \centering
    \caption{\mrev{Running Times in sec. (The $\Imc$ measure and the Voter dataset are excluded due to timeout.)}\label{table:runnningtime}}
    \vskip-1em
  \scalebox{0.95}{
\begin{tabular}{|c||c|c|c|c|c|c|}
\hline
 & $\Id$ & $\Imr$ & $\Imi$ & $\Ip$ & $\Ilmr$\\
\hline\hline
Tax & 8092.894 & 10102.15 & 8092.894 & 8275.692 & 8804.30 \\\hline
Stock & 1.16 & 1.16 & 1.16 & 1.28 & 1.16 \\\hline
Hospital & 199.08 & 212.59 & 199.08 & 200.19 & 207.91 \\\hline
Food & 89.38 & 89.92 & 89.38 & 91.25 & 89.81 \\\hline
Airport & 61.64 & 78.96 & 61.64 & 63.33 & 73.77 \\\hline
Adult & 119.19 & 240.96 & 119.19 & 132.30 & 179.31 \\\hline
Flight & 8084.05 & 8222.35 & 8084.05 & 8138.40 & 8157.30 \\\hline
\end{tabular}}
\vspace{-2em}
\end{table}
}

\begin{figure}[b]
\subfloat[\label{fig:runtime_tax}]{
\scalebox{0.9}{
\begin{tikzpicture}
\begin{axis}
[
ytick={0,2000,4000,6000,8000,10000},
ymin=0,
ymax=10110,
xmin=0, xmax=1000,
ylabel = Running time (sec),
xlabel = \#tuples (thousands),
width=0.5\linewidth,
line width=0.3mm,
height=40mm,
ytick pos=left,
xtick pos=left,
xticklabel style = {font=\scriptsize},
xlabel style={yshift=5pt},
ylabel style={yshift=-10pt},
axis x line*=bottom,
axis y line*=left,
height = 1.3in
]
\addplot[color=black, mark repeat=3] coordinates{
(0, 0)
(100, 74.967)
(200, 301.087)
(300, 677.822)
(400, 1225.410)
(500, 1916.701)
(600, 2881.352)
(700, 3922.766)
(800, 4933.157)
(900, 6557.634)
(1000, 8092.894)
};\label{plot_imi_run}
\addplot[color=red,mark=|,mark repeat=3] coordinates{
(100, 76.858)
(200, 305.315)
(300, 694.946)
(400, 1251.053)
(500, 1957.758)
(600, 2943.645)
(700, 3994.433)
(800, 5053.575)
(900, 6718.977)
(1000, 8275.692)
};\label{plot_ip_run}
\addplot[color=brown,mark=triangle*,mark repeat=3] coordinates{
(100, 81.611)
(200, 333.964)
(300, 809.508)
(400, 1513.936)
(500, 2367.166)
(600, 3561.700)
(700, 4853.112)
(800, 6194.972)
(900, 8374.540)
(1000, 10102.158)
};\label{plot_imr_run}
\addplot[color=teal,mark=x,mark repeat=3] coordinates{
(100, 80.183)
(200, 320.439)
(300, 731.086)
(400, 1324.255)
(500, 2084.847)
(600, 3121.484)
(700, 4255.330)
(800, 5383.008)
(900, 7238.029)
(1000, 8804.302832126617)
};\label{plot_ilmr_run}
\addplot[color=blue,mark=10-pointed star,mark repeat=3] coordinates{
(100, 74.967)
(200, 301.087)
(300, 677.822)
(400, 1225.410)
(500, 1916.701)
(600, 2881.352)
(700, 3922.766)
(800, 4933.157)
(900, 6557.634)
(1000, 8092.894)
};\label{plot_id_run}
\end{axis}
\end{tikzpicture}
}
\subfloat[\label{fig:runtime_voter}]{\input{charts_runtime2}}
\vskip-1em
\caption{\mrev{Scalability in $|D|$ on Tax~(a) and error rate on Voter (b): $\Id$ (\ref{plot_id_run}), $\Imi$ (\ref{plot_imi_run}), $\Ip$ (\ref{plot_ip_run}), $\Imr$ (\ref{plot_imr_run}), and $\Ilmr$ (\ref{plot_ilmr_run}).}}
\end{figure}

We now study the running times of the measures we discussed in the paper.  \mrev{We stress that our focus is not on optimizing these measures, but rather on understanding the execution cost obtained in reasonable implementations. } Table~\ref{table:runnningtime} shows the average running times of the measures on all datasets \mrev{after running $\#\mbox{tuples}/1000$ iterations of \algname{CONoise}} \revone{(with the number of tuples reported in Figure~\ref{fig:datasets}).} The computation of $\Imc$ exceeded our 24-hour limit on all datasets (and, in some cases, even on datasets with only one hundred tuples).  An immediate conclusion is that $\Imc$ is not only behaving oddly, but is prohibitively infeasible. \mrev{The Voter dataset (that we discuss next) is also excluded due to timeout.}

We can see that the running times of the measures are usually close to each other. \mrev{This is because the dominant part of the computation of $\Imr$ and $\Ilmr$ for large datasets is the evaluation of the SQL query that finds all violations of the constraints. In the case of the Voter dataset, the SQL engine reached the 24-hour limit. The domination of the SQL computation can also be seen in Figure~\ref{fig:runtime_tax} that depicts the running times of all the measures, except for $\Imc$ (again, due to timeout), on samples of the Tax dataset, consisting of 100K to 1M tuples. The figure shows a quadratic trend, which is consistent with the complexity of the dominating SQL part.

This is not the case, however, for smaller datasets, as can be seen in Figure~\ref{fig:runtime_voter}. This chart depicts the running times of the measures on a sample of 10K tuples from the Voter dataset. There, we add noise using \algname{RNoise} with $\alpha=0.01$ and $\beta=0$, and compute the running times every ten iterations. The evaluation of the SQL query is quite fast on these smaller datasets, and the computation of $\Imr$ and $\Ilmr$ is now dominated by the LP solver. We also see a more significant difference in running times between these measures. While the computation of $\Id$, $\Imi$, and $\Ip$ is only slightly affected by the change in error rate (that increases with the number of iterations), the computations of $\Imr$ significantly increases with the increased error rate. We provide similar charts for the other datasets in the Appendix.}

  
%

\begin{figure}[t]
\scalebox{0.9}{
\begin{tikzpicture}
\begin{axis}
[
ytick={0,0.2,0.4,0.6,0.8,1},
ymin=0,
ymax=1,
xmin=0, xmax=15,
line width=0.3mm,
ylabel = Measure value,
xlabel = \#DCs,
width=0.5\linewidth,
ytick pos=left,
xtick pos=left,
xlabel style={yshift=5pt},
ylabel style={yshift=-10pt},
axis x line*=bottom,
axis y line*=left,
height=1.3in,
]
\addplot[color=black,mark repeat=3] coordinates{
(0, 1)
(1, 0.93)
(2, 0.89)
(3, 0.85)
(4, 0.77)
(5, 0.63)
(6, 0.6)
(7, 0.51)
(8, 0.48)
(9, 0.43)
(10, 0.39)
(11, 0.31)
(12, 0.25)
(13, 0.2)
(14, 0.11)
(15, 0.05)
};\label{plot_imi}
\addplot[color=red,mark=|,mark repeat=3] coordinates{
(0, 1)
(1, 1)
(2, 1)
(3, 1)
(4, 1)
(5, 1)
(6, 1)
(7, 0.998)
(8, 0.998)
(9, 0.998)
(10, 0.998)
(11, 0.995)
(12, 0.981)
(13, 0.955)
(14, 0.761)
(15, 0.441)
};\label{plot_ip}
\addplot[color=brown,mark=triangle*,mark repeat=3] coordinates{
(0, 1)
(1, 0.95)
(2, 0.89)
(3, 0.84)
(4, 0.78)
(5, 0.7)
(6, 0.65)
(7, 0.59)
(8, 0.55)
(9, 0.49)
(10, 0.43)
(11, 0.34)
(12, 0.28)
(13, 0.2)
(14, 0.13)
(15, 0.05)
};\label{plot_ir}
\addplot[color=teal,mark=x,mark repeat=3] coordinates{
(0, 1)
(1, 0.95)
(2, 0.89)
(3, 0.84)
(4, 0.78)
(5, 0.7)
(6, 0.65)
(7, 0.59)
(8, 0.55)
(9, 0.49)
(10, 0.43)
(11, 0.34)
(12, 0.28)
(13, 0.2)
(14, 0.13)
(15, 0.05)
};\label{plot_ilinr}
\addplot[color=blue,mark=10-pointed star,mark repeat=3] coordinates{
(0, 1)
(1, 1)
(2, 1)
(3, 1)
(4, 1)
(5, 1)
(6, 1)
(7, 1)
(8, 1)
(9, 1)
(10, 1)
(11, 1)
(12, 1)
(13, 1)
(14, 1)
(15, 1)
};\label{plot_id}
\end{axis}
\end{tikzpicture}
}
\vskip-1em
      \caption{\mrev{HoloClean case study---normalized measures on
    Hospital: $\Id$ (\ref{plot_id}), $\Imi$ (\ref{plot_imi}), $\Ip$ (\ref{plot_ip}), $\Imr$ (\ref{plot_ir}), $\Ilmr$ (\ref{plot_ilinr}).}}\label{fig:holoclean}
\end{figure}

\section{Concluding Remarks}\label{sec:conclusions}
We explored inconsistency measures over databases, and investigated the properties
that should be accounted for in the choice of a measure in a specific use case. We discussed four properties where two, continuity and  progression, are defined in the context of the underlying repair system.
We also used the properties to reason about various specific instances of inconsistency measures.  The combination of the properties and the computational complexity shed a positive light on the linear relaxation of minimal repairing when considering DCs and tuple deletion.  In fact, the design of this measure is driven by that combination, and is not as interpretable as the others.  \mrev{Our experimental study shows that the measures that well behave theoretically for tuple deletions also exhibit a good empirical
      behavior, even when our error models capture attribute updates
      rather than tuple insertions (to be remedied by tuple
      deletions). These measures were also able to effectively capture progress in our HoloClean case study.}

\added{This work opens the way to an important angle of inconsistency measurement that has not been treated before, and many fundamental problems remain open.
We plan to explore other properties as well as \e{completeness} criteria for sets of properties to determine sufficiency for
certain use cases. Another important direction is to explore more general repair systems (allowing different types of constraints and repairing operations).
It is also interesting to investigate the adaption of Grant and Hunter's concept of information loss~\cite{DBLP:conf/ecsqaru/GrantH11}, and explore the trade-off between inconsistency reduction and information loss, in the context of database repairing.
The final goal is to devise actual measures that are practically useful, efficient to compute, and justified by a
clear theoretical ground.

\eat{A problem that is complementary to the problem considered in this paper is that of measuring the responsibility of individual facts to inconsistency. Such a measure can assist in directing the data scientist to the problematic facts in the database (e.g., as a next-step suggestion in interactive data repairing systems). One possible way to quantify this responsibility is to use the \e{Shapley value}~\cite{shapley:book1952}, a well-known measure from cooperative game theory for determining the contribution of a
  player to a coalition game. In the KR community, the Shapley value has been used to quantify the contribution of formulas to the inconsistency of knowledge bases~\cite{DBLP:journals/ai/HunterK10}. In the context of databases, it has recently been proposed to use the Shapley value to measure the contribution of tuples to query results~\cite{shapley-icdt2020,DBLP:conf/pods/ReshefKL20}. 
Note that the Shapley value can be used to determine the responsibility of facts to database inconsistency w.r.t.~any given inconsistency measure, which again highlights the importance of designing good database inconsistency measures.}

}

\balance

\paragraph*{Acknowledgements}
This work was supported by the
German Research Foundation (DFG) Project 412400621 (DIP program) and the Israel Science
  Foundation (ISF), Grant 768/19. The work of Sudeepa Roy was supported by NSF awards IIS-1552538, IIS-1703431, IIS-2008107, and NIH award R01EB025021. We also thank the reviewers of this paper for very helpful comments and suggestions.

\bibliographystyle{ACM-Reference-Format}
\bibliography{main}

\newpage
\appendix
\section{Proof of Proposition~\ref{prop:relationsips}}

\begin{repproposition}{\ref{prop:relationsips}}
\propdependencies
\end{repproposition}
\begin{proof}
The first part of the proposition holds since if $\I$ satisfies progression, then for every inconsistent database $D$ there exists an operation $o$ such that $\I(\Sigma,o(D))$ is strictly lower than $\I(\Sigma,D)$, which implies that $\I(\Sigma,D)>0$.

For the second part, let us assume, by way of contradiction, that $\I$ does not satisfy progression. Then, there exists a database $D$ and a set of constraints $\Sigma$ such that $D\not\models\Sigma$, and for every operation $o\in O$ on $D$ it holds that $\I(\Sigma,o(D))\ge\I(\Sigma,D)$. Moreover, $\I$ satisfies positivity; thus, it holds that $\I(\Sigma,D)>0$. Since $\C$ is realizable by $\R$, there exists a sequence of operations $o_1,\dots,o_n$ from $\R$, such that $o_n\circ\dots\circ
o_1(D)$ is consistent (hence, $\I(\Sigma,o_n\circ\dots\circ
o_1(D))=0$). We conclude that there exists an operation $o_j$ such that $\Delta_{\I,\Sigma}(o_j,o_{j-1}\circ\dots\circ o_1(D))>0$, but there is no such operation on $D$, which is a contradiction to the fact that $\I$ satisfies bounded continuity, and that concludes our proof.
\end{proof}

\section{Proof of Theorem~\ref{THM:EGD}}

\begin{reptheorem}{\ref{THM:EGD}}
\thmegd
\end{reptheorem}

We start by proving the negative side of the theorem. That is, we prove the following.
\begin{lemma}
Let $\R=\R_{\subseteq}$ and let $\Sigma=\{\sigma\}$ where $\sigma$ is an EGD of the form
$\forall x_1,x_2,x_3\left[R(x_1,x_2),R(x_2,x_3)\Rightarrow (x_i=x_j)\right]$.
Then, computing $\Imr(\Sigma,D)$ is NP-hard.
\end{lemma}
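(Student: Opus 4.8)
The plan is to reduce from MaxCut. First I would reformulate the quantity to be computed: since $\R=\R_{\subseteq}$ allows only (unit‑cost) tuple deletions, $\Imr(\Sigma,D)$ is the minimum number of tuples whose removal makes $D$ satisfy $\sigma$, equivalently $\Imr(\Sigma,D)=|D|-\max\{|E|:E\subseteq D,\ E\models\sigma\}$. Because $\sigma$ has exactly two atoms, every minimal inconsistent subset consists of two tuples that form a length‑two path, so $\Imr(\Sigma,D)$ is the minimum vertex cover of the conflict graph whose nodes are the tuples of $D$ and whose edges join violating pairs; the complement of an optimal repair is a maximum conflict‑free (consistent) subset. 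The key step is therefore to engineer a database $D$ in which maximum consistent subsets correspond precisely to maximum cuts of a given graph $G=(V,E)$.

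For the representative conclusion $x_1=x_2$ (violation pattern: tuples $R(a,b),R(b,c)$ with $a\neq b$), I would build $D$ over $R$ by adding, for each edge $\{u,v\}\in E$, the two oppositely directed tuples $R(u,v)$ and $R(v,u)$, so $|D|=2|E|$. The heart of the argument is a structural characterization of consistent subsets: a value $b$ may never have both a non‑loop incoming tuple and an outgoing tuple, so each value can be consistently tagged as a \emph{source} or a \emph{sink}, and a tuple $R(a,b)$ survives iff $a$ is tagged source and $b$ is tagged sink. I would show both directions of the correspondence — any source/sink tagging yields a consistent subset whose size equals the number of edges crossing the induced partition, and conversely any consistent subset induces a tagging retaining all its tuples — noting in particular that $R(u,v)$ and $R(v,u)$ cannot both survive (the path $u\to v\to u$ violates $x_1=x_2$). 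Hence the maximum consistent subset has size exactly $\mathrm{MaxCut}(G)$, giving $\Imr(\Sigma,D)=2|E|-\mathrm{MaxCut}(G)$; an oracle for $\Imr$ thus computes $\mathrm{MaxCut}(G)$, establishing NP‑hardness.

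It remains to cover the other conclusions permitted by the form $x_i=x_j$. The case $x_2=x_3$ I would dispatch by symmetry: swapping the two columns of $R$ throughout the construction reverses every directed tuple and turns $\sigma$ into the $x_1=x_2$ instance already handled. The genuinely delicate case is $x_1=x_3$ (violation pattern $R(a,b),R(b,c)$ with $a\neq c$), and this is where I expect the main obstacle to lie. Here $R(u,v)$ and $R(v,u)$ form a two‑node cycle that is \emph{not} a violation, so consistent subsets are disjoint two‑cycles together with a source‑to‑sink bipartite orientation; the two‑cycles provide an alternative way to retain tuples that could, a priori, beat the cut solution and break the correspondence. The hard part will be to design the gadget so that no optimal repair ever uses a two‑cycle — for instance by raising the effective degree of each vertex (anchoring each value with auxiliary source/sink structure, or using vertex multiplicities) so that isolating a pair into a two‑cycle is always dominated by leaving both endpoints available to many crossing edges. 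Once two‑cycles are provably excluded from some optimum, the same source/sink analysis yields maximum consistent subset $=\mathrm{MaxCut}(G)$ and the reduction goes through, completing the proof that $\Imr(\Sigma,D)$ is NP‑hard for every EGD of the stated form.
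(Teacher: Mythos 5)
Your reductions for the conclusions $x_1=x_2$ and $x_2=x_3$ are correct, and they take a genuinely different route from the paper's. You work with unit costs only and obtain an exact identity $\Imr(\Sigma,D)=2|E|-\mathrm{MaxCut}(G)$ via the source/sink characterization of consistent subsets; the column-swap symmetry disposing of $x_2=x_3$ is also sound. The paper instead uses a \e{single} construction for all three conclusions: it adds two anchor facts $R(\val{1},v_i)$ and $R(v_i,\val{2})$ per vertex and, crucially, assigns them deletion cost $m+1$ against unit cost for edge facts. (This weighting is legitimate because $\R_\subseteq$, as defined in the paper, reads deletion costs off a cost attribute; by restricting yourself to unit costs you prove a slightly stronger statement for the two cases you cover, but you also give up the tool the paper uses to handle the third.)

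The genuine gap is the conclusion $x_1=x_3$, i.e.\ the EGD $\forall x_1,x_2,x_3[R(x_1,x_2),R(x_2,x_3)\Rightarrow x_1=x_3]$ (the constraints $\sigma_2$, $\sigma_3$ of Example~\ref{example:egds}). You correctly diagnose that two-cycles break your correspondence --- indeed, already for a graph consisting of a single edge $\{u,v\}$ your database $\set{R(u,v),R(v,u)}$ is \e{consistent} under this conclusion, so $\Imr=0$ while $2|E|-\mathrm{MaxCut}(G)=1$ --- but you then leave the repair of the construction as a declared intention (``design the gadget so that no optimal repair ever uses a two-cycle'') without exhibiting the gadget or proving the exclusion. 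Since the lemma quantifies over every choice of $(x_i=x_j)$, one of its three cases is simply unproven as written. For the record, your sketched fix is essentially what the paper does: anchor every vertex $v_i$ with $R(\val{1},v_i)$ and $R(v_i,\val{2})$. With the paper's heavy weights, any repair within the stated budget keeps exactly one anchor per vertex, and then a surviving two-cycle on $\set{u,v}$ is impossible, because $R(u,v)$ and $R(v,u)$ conflict with all four anchors of $u$ and $v$ (e.g.\ the path $\val{1}\ra u\ra v$ has distinct endpoints). If you insist on unit costs, the anchored construction still works, but you must add an explicit exchange argument: a kept two-cycle on $\set{u,v}$ contributes exactly two facts and forbids all anchors and all other edge facts at $u$ and $v$, whereas re-tagging $u$ as a source and $v$ as a sink retains the three facts $R(u,\val{2})$, $R(\val{1},v)$, $R(u,v)$ without creating any new violation, so no maximum consistent subset contains a two-cycle. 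Either this argument (or the paper's weighting) has to be written out; until then the proof does not establish the lemma.
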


\begin{proof}
  We build a
  reduction from the MaxCut problem to the problem of computing
  $\Imr(\Sigma,D)$ for $\Sigma=\set{\sigma}$. The MaxCut problem is
  the problem of finding a cut in a graph (i.e., a partition of the
  vertices into two disjoint subsets), such that the number of edges
  crossing the cut is the highest among all possible cuts. This
  problem is known to be NP-hard~\cite{10.5555/574848}.

Given a graph $g$, with $n$ vertices and $m$ edges, we construct an input to our problem (that is, a database $D$) as follows. For each vertex $v_i$ we add the following two facts to the database:
\begin{align*}
    R(\val{1},v_i), R(v_i,\val{2})
\end{align*}
Moreover, for each edge $(v_i, v_j)$, we add the following two facts to the database:
\begin{align*}
    R(v_j,v_i), R(v_i,v_j)
\end{align*}

Note that for each vertex $v_i$, the facts $R(\val{1},v_i)$ and
$R(v_i,\val{2})$ violate the EGD together. Moreover, two facts of the form $R(\val{1},v_i)$ and $R(v_i,v_j)$ jointly
violate the EGD, and two facts of the form $R(v_i,\val{2})$ and $R(v_j,v_i)$
jointly violate the EGD. Finally, two facts of the form $R(v_i,v_j)$ and $R(v_j,v_k)$ violate the EGD with each other (and in the case where $x_i=x_1$ and $x_j=x_2$ or the case where $x_i=x_2$ and $x_j=x_3$ this also holds when $v_i=v_k$). These are the only violations of the EGD in
the database.

We set the cost $\kappa(o,D)$ to be $1$ when the operation $o$ is a deletion of a fact of the form $R(v_i,v_j)$, and we set $\kappa(o,D)$ to be $m+1$ when the operation $o$ is a deletion of a fact of the form $R(\val{1},v_i)$ or $R(v_i,\val{2})$.
We now prove that there is a cut of size at least $k$, if and only if \[\Imr(\Sigma,D)\le(m+1)\cdot n+2(m-k)+k\]

First, assume that there exists a cut of size $k$ in the graph, that partitions the vertices into two groups - $S_1$ and $S_2$. In this case, we can remove the following facts from $D$ to obtain a consistent subset $D'$.
\begin{itemize}
    \item $R(\val{1},v_i)$ if $v_i\in S_2$,
    \item $R(v_i,\val{2})$ if $v_i\in S_1$,
    \item $R(v_j,v_i)$ if either $R(\val{1},v_j)$ or $R(v_i,\val{2})$ have not been removed.
\end{itemize}
Each vertex $v_i$ belongs to either $S_1$ or $S_2$; hence, we remove exactly one of the facts $R(\val{1},v_i)$ and $R(v_i,\val{2})$ for each $v_i$, and resolve the conflict between these two facts. The cost of removing these $n$ facts is $(m+1)\cdot n$. Next, for each edge $(v_i,v_j)$ such that both $v_i$ and $v_j$ belong to the same subset $S_k$, we remove both $R(v_j,v_i)$ and $R(v_i,v_j)$, since the first violates the EGD with $R(\val{1},v_j)$ if $v_i,v_j\in S_1$ or with $R(v_i,\val{2})$ if $v_i,v_j\in S_2$, and the second violates the EGD with $R(\val{1},v_i)$ if $v_i,v_j\in S_1$ or with $R(v_j,\val{2})$ if $v_i,v_j\in S_2$. The cost of removing these $2(m-k)$ facts is $2(m-k)$. 

Finally, for each edge $(v_i,v_j)$ that crosses the cut, we remove one of $R(v_j,v_i)$ or $R(v_i,v_j)$ from the database. If $v_i\in S_1$ and $v_j\in S_2$, then we have already removed the facts $R(\val{1},v_j)$ and $R(v_i,\val{2})$ from the database; thus, the fact $R(v_j,v_i)$ does not violate the EGD with any other fact, and we only have to remove the fact $R(v_i,v_j)$ that violates the EGD with both $R(\val{1},v_i)$ and $R(v_j,\val{2})$. Similarly, if $v_i\in S_2$ and $v_j\in S_1$, we only remove the fact $R(v_j,v_i)$. The cost of removing these $k$ facts is $k$. Hence, the total cost of removing all these facts is $(m+1)\cdot n+2(m-k)+k$.

Clearly, the result is a consistent subset $D'$ of $D$. As previously explained, we have resolved the conflict between the facts $R(\val{1},v_i)$ and $R(v_i,\val{2})$ for each $v_i$, and we have resolved the conflict between every pair $\set{R(v_i,v_j),R(\val{1},v_i)}$ of conflicting facts, and every pair $\set{R(v_i,v_j),R(v_j,\val{2})}$ of conflicting facts. Finally, there are no conflicts among facts $R(v_i,v_j)$ and $R(v_j,v_k)$ in $D'$ since $v_j$ either belongs to $S_1$, in which case the fact $R(\val{1},v_j)$ appears in $D'$ and $R(v_j,v_k)$ has been removed from $D'$ as a result, or it belongs to $S_2$, in which case the fact $R(v_j, \val{2})$ appears in $D'$ and $R(v_i,v_j)$ has been removed from $D'$ as a result. Thus, the minimal cost of obtaining a consistent subset of $D$ (i.e., a repair) is at most $(m+1)\cdot n+2(m-k)+k$.

Next, we assume that $\Imr(\Sigma,D)\le(m+1)\cdot n+2(m-k)+k$, and we prove that there exists a cut of size at least $k$ in the graph. First, note that if there exists a consistent subset $D'$ of $D$ that can be obtained with cost at most $(m+1)\cdot n+2(m-k)+k$, such that both $R(\val{1},v_i)$ and $R(v_i,\val{2})$ have been deleted, then we can obtain another consistent subset of $D$ with a lower cost by removing only $R(\val{1},v_i)$ and removing all the facts of the form $R(v_j,v_i)$ instead of removing $R(v_i,\val{2})$. There are at most $m$ such facts and the cost of removing them is at most $m$, while the cost of removing $R(v_i,\val{2})$ is $m+1$. Hence, from now on we assume that the subset $D'$ contains exactly one fact from $\set{R(\val{1},v_i), R(v_i,\val{2})}$ for each $v_i$.

Now, we construct a cut in the graph from $D'$ in the following way. For each $v_i$, if the fact $R(\val{1},v_i)$ belongs to $D'$, then we put $v_i$ in $S_1$, and if the fact $R(v_i,\val{2})$ belongs to $D'$, then we put $v_i$ in $S_2$. As mentioned above, exactly one of these two cases holds for each $v_i$. It is only left to show that the size of the cut is at least $k$. Since the cost of removing the facts of the form $R(\val{1},v_i)$ and $R(v_i,\val{2})$ is $(m+1)\cdot n$, and the cost of removing each fact of the form $R(v_i,v_j)$ is one, at most $2(m-k)+k$ facts of the form $R(v_i,v_j)$ have been removed from $D$ to obtain $D'$. There are $2m$ facts of the form $R(v_i,v_j)$ in $D$; thus, at least $k$ of them belong to $D'$. 

For each fact $R(v_i,v_j)$ in $D'$ it holds that both $R(\val{1},v_i)$ and ${R(v_j,\val{2})}$ do not belong to $D'$ (otherwise, $D'$ is inconsistent). Hence, it holds that $v_i\in S_2$ and $v_j\in S_1$, and the corresponding edge $(v_i,v_j)$ crosses the cut. We conclude that there are at least $k$ edges that cross the cut.
\end{proof}

We now move on to the proof of the positive side of the theorem. We first consider the case where the EGD contains two different relations and show that in this case $\Imr(\Sigma,D)$ can always be computed in polynomial time.

\begin{lemma}
Let $\R=\R_{\subseteq}$ and let $\Sigma=\{\sigma\}$ where $\sigma$ is an EGD of the form
$$\forall x_1,x_2,x_3,x_4 \big[R(x_{i_1},x_{i_2}),S(x_{j_1},x_{j_2})\Rightarrow (x_i=x_j)\big]$$
Then, computing $\Imr(\Sigma,D)$ can be done in polynomial time.
\end{lemma}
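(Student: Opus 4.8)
The plan is to reduce the computation of $\Imr(\Sigma,D)$ to a minimum-weight vertex cover problem on a \emph{bipartite} graph, which is tractable. The starting observation is that, because $R$ and $S$ are \emph{distinct} relation symbols, no single fact of $D$ can match both body atoms of $\sigma$ simultaneously; a fact belongs to exactly one of the two relations. First I would argue that every subset of $D$ violating $\sigma$ must contain at least one $R$-fact and at least one $S$-fact, and hence that every \emph{minimal} inconsistent subset consists of exactly one $R$-fact $f$ and one $S$-fact $g$ whose join (under the shared variables of the body) satisfies the body while falsifying the head equality $x_i=x_j$. Thus every set in $\MI_\Sigma(D)$ has size exactly two, with one endpoint in $R^D$ and the other in $S^D$; the assignment of roles is forced, since an $S$-fact can match only the $S$-atom and an $R$-fact only the $R$-atom. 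A degenerate, tautological $\sigma$ yields no violations and hence $\Imr(\Sigma,D)=0$, which I would dispatch at the outset.

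Next I would build the conflict graph $G$ whose vertices are the facts of $D$ and whose edges are precisely the violating pairs identified above. By the previous step $G$ is bipartite with parts $R^D$ and $S^D$, and it can be constructed in polynomial time by scanning all $O(|D|^2)$ pairs consisting of an $R$-fact and an $S$-fact and testing the constant-size body/head condition for each. Because all minimal inconsistent subsets have size two, the integer program of Figure~\ref{fig:ilp} defining $\Imr(\Sigma,D)$ collapses to exactly the minimum-weight vertex cover ILP of $G$, where vertex $i$ carries weight $\kappa(\del{i}{\cdot},D)\ge 0$.

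Finally I would invoke the tractability of minimum-weight vertex cover on bipartite graphs: the vertex--edge incidence matrix of a bipartite graph is totally unimodular, so the LP relaxation of this ILP has an integral optimum. Hence the linear relaxation $\Ilmr(\Sigma,D)$ coincides with $\Imr(\Sigma,D)$, and the common value is computable in polynomial time, either by solving the LP directly or by the standard reduction of weighted bipartite vertex cover to a minimum $s$--$t$ cut (K\"onig's theorem). This yields the claimed polynomial-time bound.

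The hard part will be the first, structural, step: one must check carefully that placing the two body atoms in distinct relations forces every minimal violation to be a clean $R$-fact/$S$-fact pair, ruling out size-one violations (impossible since a fact matches at most one atom) and intra-part edges (impossible for the same reason), and this must hold uniformly across every sharing pattern of the four variables among the two binary atoms. Once bipartiteness and the size-two property are established, the remainder is the textbook equivalence between the size-two hitting-set ILP and bipartite minimum vertex cover, together with LP integrality, and requires no further work.
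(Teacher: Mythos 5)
Your proof is correct, but it takes a genuinely different route from the paper's. The paper argues by explicit case analysis: it first sets aside facts that can never match a repeated-variable atom (e.g.\ $R(a,b)$ with $a\neq b$ when the atom is $R(x,x)$), then partitions the database into blocks of facts agreeing on the variables shared between the two atoms, and finally, within each block, distinguishes whether the head variables $x_i,x_j$ both occur in one atom or one occurs in each; in every case the minimum repair is one of two or three explicitly described candidates (delete all $R$-facts of the block, delete all $S$-facts, or keep only the facts compatible with a best single value), whose costs are compared directly. Your argument replaces all of this with a single structural observation---since $R$ and $S$ are distinct relation symbols, no fact can match both body atoms, so every minimal inconsistent subset is exactly one $R$-fact paired with one $S$-fact and the conflict graph is bipartite---followed by the standard integrality of the weighted vertex-cover LP on bipartite graphs. (One cosmetic point: K\"onig's theorem proper is the unweighted matching/cover duality; for the weighted costs $\kappa$ you need the total-unimodularity or min-cut argument, which you do supply, so nothing is missing.) What your approach buys is uniformity (no case analysis over the sharing pattern of the four variables), the pleasant byproduct that $\Ilmr(\Sigma,D)=\Imr(\Sigma,D)$ for such constraints (integrality gap $1$, so the paper's relaxed measure is exact here), and a conceptual explanation of the dichotomy in Theorem~\ref{THM:EGD}: the two-relation case is easy precisely because the conflict graph is bipartite, which is exactly what fails in the hard single-relation self-join case. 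What the paper's approach buys is elementarity---no LP or flow machinery---and explicit, interpretable optimal repairs obtained by simple grouping and cost comparison.
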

\begin{proof}
First, if $x_{i_1}=x_{i_2}$, then facts of the form $R(\val{a},\val{b})$ for $\val{a}\neq\val{b}$ will never participate in a violation; hence we can ignore these facts when computing a minimum repair and add them to the repair later. Similarly, if $x_{i_3}=x_{i_4}$, then we can ignore all the facts of the form $R(\val{a},\val{b})$ for $\val{a}\neq\val{b}$. Thus, from now on we assume that the database only contains facts that may participate in a violation.

If $R(x_{i_1},x_{i_2})$ and $S(x_{j_1},x_{j_2})$ share variables (e.g., $x_{i_2}=x_{j_1}$), then two facts will violate the EGD only if they agree on the values of the shared variables. Hence, we first split the database into blocks of facts that agree on the values of the shared variables, and we solve the problem separately for each one of these blocks, since there are no violations among facts in different blocks. Then, a minimum repair for the entire database will be the disjoint union of the minimum repairs for each one of the blocks. For example, if the only shared variable is $x_{i_2}=x_{j_1}$, then each block will contain facts of the form $R(\cdot,\val{a})$, $S(\val{a},\cdot)$ for some value $\val{a}$. Note that if the two atoms do not share any variables, then we will have one block that contains all the facts in the database.

We start by considering the case where $x_i$ and $x_j$ both appear in either $R(x_{i_1},x_{i_2})$ or in $S(x_{j_1},x_{j_2})$. We assume, without the loss of generality, that they both appear in $R(x_{i_1},x_{i_2})$ (that is, $x_i=x_{i_1}$ and $x_j=x_{i_2}$). This means, that as long as there is at least one fact in $S$ that belongs to the current block, all the facts in $R$ that belong to the block should be of the form $R(\val{a},\val{a})$ for some value $\val{a}$. Hence, we have two possible ways to obtain a consistent subset of a block: \e{(a)} remove all the facts from $S$, or \e{(b)} remove from $R$ all the facts $R(\val{a},\val{b})$ such that $\val{a}\neq\val{b}$. We can compute the cost of each one of these options and choose the one with the lower cost.

In the case where $x_i$ appears only in $R(x_{i_1},x_{i_2})$ and $x_j$ appears only in $S(x_{j_1},x_{j_2})$, we have three possible ways to obtain a consistent subset of a block: \e{(a)} remove all the facts from $R$, \e{(b)} remove all the facts from $S$, or \e{(c)} choose a single value that will appear in all the attributes corresponding to the variables $x_i$ and $x_j$, and remove from $R$ and $S$ all the facts that use different values in these attributes. We can again compute the cost of each one of these options and choose the one with the lowest cost.
\end{proof}

Next, we consider EGDs that use a single relation, such that the two atoms in the EGD do not share any variables.

\begin{lemma}
Let $\R=\R_{\subseteq}$ and let $\Sigma=\{\sigma\}$ where $\sigma$ is an EGD of the form
$$\forall x_1,x_2,x_3,x_4 \big[R(x_1,x_2),
R(x_3,x_4)\Rightarrow 
(x_i=x_j)\big]$$
Then, computing $\Imr(\Sigma,D)$ can be done in polynomial time.
\end{lemma}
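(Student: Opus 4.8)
The plan is to solve the problem by a complete case analysis on the form of the equality $x_i=x_j$, exploiting the fact that, since the two atoms use the same relation symbol $R$ and share no variables, the body of $\sigma$ is symmetric under swapping $(x_1,x_2)$ with $(x_3,x_4)$, and also under exchanging the two attribute positions inside every fact. Up to these two symmetries, the six possible choices of $\{i,j\}$ collapse into four essentially different equalities: (I) an intra-atom equality such as $x_1=x_2$ (equivalently $x_3=x_4$); (II) equating the two first positions, $x_1=x_3$; (III) equating the two second positions, $x_2=x_4$; and (IV) a cross equality such as $x_1=x_4$ (equivalently $x_2=x_3$). First I would check that these four classes indeed exhaust all choices, and that the attribute-swap symmetry sends class (II) to class (III) by a content-preserving bijection that preserves deletion costs, so that (III) needs no separate treatment.

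For each class I would characterize the consistent subsets exactly, and then read off a minimum-cost repair. The one subtlety that drives the whole analysis is that, under the standard EGD semantics, both atoms may be mapped to the \emph{same} fact; hence a single fact $R(\val{a},\val{b})$ can already violate $\sigma$. Concretely: in class (I) the self-instantiation of both atoms by the same fact forces $\val{a}=\val{b}$, so every off-diagonal fact $R(\val{a},\val{b})$ with $\val{a}\neq\val{b}$ is self-inconsistent and must be deleted, while any set of diagonal facts $R(\val{a},\val{a})$ is consistent; the repair is therefore forced and its cost is the total deletion cost of the off-diagonal facts. In class (II) no fact is self-inconsistent, but every pair of facts must agree on the first attribute, so the consistent subsets are exactly those in which all facts share a common first-attribute value; a minimum repair retains a group of facts sharing one first value of maximum total cost and deletes the rest.

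Class (IV) combines both phenomena and is the one I would treat most carefully: self-instantiation forces every surviving fact to be diagonal ($\val{a}=\val{b}$), and then any two surviving diagonal facts $R(\val{a},\val{a})$ and $R(\val{c},\val{c})$ force $\val{a}=\val{c}$; hence the consistent subsets are exactly the sets consisting of diagonal facts that all carry one common value $\val{v}$. A minimum repair keeps, for the best choice of $\val{v}$, all facts equal to $R(\val{v},\val{v})$ and deletes everything else. In every class the optimization thus reduces to grouping the facts of $D$ by a single attribute value (or by whether they are diagonal) and selecting the retained group of maximum total cost; since $\kappa$ is evaluated per tuple and there are at most $|\tids(D)|$ distinct grouping values, this is computable in polynomial time in $|D|$, giving the claimed data-complexity bound.

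The main obstacle is not the optimization, which is elementary, but making the case analysis both exhaustive and correct. The delicate point is to include the same-fact (self-instantiation) assignments when deciding which facts are intrinsically inconsistent: overlooking them would misclassify classes (I) and (IV) — for instance it would wrongly suggest that class (IV) behaves like the column-equality class (II) rather than collapsing to a single surviving value, and could even create a spurious impression of hardness. I would therefore state the consistency characterization for each class as an explicit ``if and only if'' (a subset is consistent precisely when it has the prescribed structure) before deriving the corresponding minimum repair, so that the reduction to a maximum-weight grouping is justified rather than assumed.
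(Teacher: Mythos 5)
Your proposal is correct and follows essentially the same route as the paper's proof: an exhaustive case analysis on which positions $x_i,x_j$ occupy, with the same key observation that mapping both atoms to a single fact makes off-diagonal facts self-violating in the intra-atom and cross-equality cases, and the same reduction of each case to keeping a maximum-cost group of facts (sharing a first/second attribute value, or all equal to one diagonal fact $R(\val{v},\val{v})$). Your explicit use of the atom-swap and attribute-swap symmetries to collapse the six equalities into four classes is just a cleaner packaging of the symmetry arguments the paper invokes case by case.
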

\begin{proof}
If $x_i$ and $x_j$ are both from either $\set{x_1,x_2}$ or $\set{x_3,x_4}$, then each fact of the form $R(\val{a},\val{b})$ such that $\val{a}\neq\val{b}$ violates the EGD by itself. Thus, we have to remove from the database all such facts, and we can compute the cost of removing these facts in polynomial time. Otherwise, $x_i$ is one of $x_1,x_2$ and $x_j$ is one of $x_3,x_4$, in which case we have to choose some value $\val{a}$ and remove from the database all the facts that do not use this value in both the attributes corresponding to $x_i, x_j$. 

If $x_i=x_1$ and $x_j=x_3$, then all the facts in the database should agree on the value of the first attribute, and we have to choose the value that entails the lowest cost (that is, the cost of removing all the facts that use a different value in the first attribute is the lowest among all possible values). The case where $x_i=x_2$ and $x_j=x_4$ is symmetric to this case. 

If $x_i=x_2$ and $x_j=x_3$, then again each fact will violate the EGD by itself unless it is of the form $R(\val{a},\val{a})$. In addition, two facts $R(\val{a},\val{a})$ and $R(\val{b},\val{b})$ for $\val{a}\neq\val{b}$ will jointly violate the EGD. Thus, only one fact of the form $R(\val{a},\val{a})$ will not be removed from the database, and we again have to choose the one that entails the lowest cost. The case where $x_i=x_1$ and $x_j=x_3$ is symmetric to this case. Hence, computing $\Imr(\Sigma,D)$ can again be done in polynomial time. 
\end{proof}

So far, we have covered the cases where the EGD either considers two different relations or considers only one relation, but there are no shared variables among the atoms. In both cases, computing $\Imr(\Sigma,D)$ can always be done in polynomial time. Next, we consider the cases where the EGD uses a single relation and the two atoms do share variables, but the EGD does not satisfy the condition of the theorem.

\begin{lemma}
Let $\R=\R_{\subseteq}$ and let $\Sigma=\{\sigma\}$ where $\sigma$ is an EGD of one of the following forms:
\begin{enumerate}
\item $\forall x_1,x_2 \big[R(x_1,x_2),
R(x_1,x_2)\Rightarrow 
(x_1=x_2)\big]$
\item $\forall x_1,x_2,x_3 \big[R(x_1,x_2),
R(x_1,x_3)\Rightarrow 
(x_i=x_j)\big]$
\item $\forall x_1,x_2 \big[R(x_1,x_2),
R(x_2,x_1)\Rightarrow 
(x_1=x_2)\big]$
\end{enumerate}
Then, computing $\Imr(\Sigma,D)$ can be done in polynomial time.
\end{lemma}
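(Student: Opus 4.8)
The approach is a case analysis on the three forms, exhibiting for each a polynomial-time procedure that computes a minimum-cost set of tuple deletions whose removal restores consistency. Since every EGD here has only two atoms, each violation involves at most two facts, so under $\R=\R_\subseteq$ the quantity $\Imr(\Sigma,D)$ is exactly the minimum-weight vertex cover of the \emph{conflict graph} (nodes are identifiers, edges are conflicting fact pairs, weights are deletion costs $\kappa$); single-fact violations appear as forced-to-delete nodes. The crux in every case is to characterize \emph{precisely} which facts jointly violate $\sigma$ — in particular which single facts are self-inconsistent via the instantiation $t=t'$ — and then to observe that the conflict graph has special structure for which minimum-weight vertex cover is tractable. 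I would first dispose of Form~1, $R(x_1,x_2),R(x_1,x_2)\Rightarrow(x_1=x_2)$: mapping both atoms to the same fact $R(\val{a},\val{b})$ forces $\val{a}=\val{b}$, so every fact with $\val{a}\neq\val{b}$ is self-inconsistent and must be deleted, while each $R(\val{a},\val{a})$ participates in no violation. Thus the unique repair deletes exactly the off-diagonal facts and $\Imr(\Sigma,D)$ is the sum of their deletion costs, computable in one pass.

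For Form~2, $R(x_1,x_2),R(x_1,x_3)\Rightarrow(x_i=x_j)$, I would split on the conclusion. If it is $x_2=x_3$, then $\sigma$ is exactly the FD $R:\{1\}\to\{2\}$; here I group facts into blocks agreeing on the first attribute, note that two facts in a block conflict iff they disagree on the second attribute, so each block's conflict graph is complete multipartite with parts indexed by second-attribute value. A maximum-weight independent set lies within one part, so the minimum repair keeps the value class of largest total cost and deletes all others — a per-block computation that is clearly polynomial. If instead the conclusion is $x_1=x_2$ (and symmetrically $x_1=x_3$), then instantiating both atoms with the same $R(\val{a},\val{b})$ again forces $\val{a}=\val{b}$, so exactly the off-diagonal facts are self-inconsistent and must be removed, reducing to the Form~1 computation.

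For Form~3, $R(x_1,x_2),R(x_2,x_1)\Rightarrow(x_1=x_2)$, the key observation is that $R(\val{a},\val{a})$ is harmless, that a fact $R(\val{a},\val{b})$ with $\val{a}\neq\val{b}$ can violate $\sigma$ only together with a fact valued $R(\val{b},\val{a})$, and that two identifiers carrying the same fact do not conflict. Hence the conflict graph decomposes, over the unordered pairs $\{\val{a},\val{b}\}$ with $\val{a}\neq\val{b}$, into disjoint complete bipartite components between the identifiers mapped to $R(\val{a},\val{b})$ and those mapped to $R(\val{b},\val{a})$. Any vertex cover of a complete bipartite graph must contain one whole side, so the cheapest choice deletes the side of smaller total deletion cost; summing these per-component minima yields $\Imr(\Sigma,D)$ in polynomial time. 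The symmetric pattern $R(x_1,x_3),R(x_2,x_3)$ sharing the second coordinate is handled exactly as Form~2 with the two coordinates interchanged, so together these forms exhaust the non-path single-relation sharing cases.

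The part demanding the most care, and the main obstacle, is the exhaustive violation analysis: I must confirm in each form that no violation involves facts beyond those identified — in particular that, after deleting the forced self-inconsistent or off-diagonal facts, no residual two-fact violation survives — and I must correctly account for duplicate facts (several identifiers mapping to the same fact), since this is precisely what upgrades the Form~3 components from single edges to complete bipartite graphs and underlies the multipartite structure in the FD sub-case. Once these characterizations are pinned down, each minimum-weight vertex cover collapses to a trivial greedy choice per block or component, and polynomiality is immediate.
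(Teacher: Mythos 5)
Your proof is correct and follows essentially the same case analysis as the paper's: for each form you characterize the violations exactly as the paper does (in Form~1, and in Form~2 with conclusion $x_1=x_2$ or $x_1=x_3$, precisely the facts $R(a,b)$ with $a\neq b$ are self-inconsistent and must all be deleted, after which no violation remains; in Form~3 the only conflicts are between facts valued $R(a,b)$ and $R(b,a)$ with $a\neq b$), and the resulting per-case computations coincide. The one place you genuinely diverge is the FD sub-case of Form~2 (conclusion $x_2=x_3$): the paper disposes of it by citing the known tractability of $\Imr$ for a single FD, whereas you prove it from scratch by noting that each block of the conflict graph is complete multipartite, so the minimum-weight repair keeps the heaviest part of each block and deletes the rest. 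Your route buys self-containedness, and your explicit complete-bipartite treatment of duplicate identifiers in Form~3 (any vertex cover of $K_{m,n}$ contains a whole side, so delete the cheaper side) is more careful than the paper's terse ``remove the cheaper fact of each pair'' phrasing; the paper's version, in exchange, is shorter by leaning on the earlier dichotomy result.
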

\begin{proof}
It is straightforward that computing $\Imr(\Sigma,D)$ w.r.t.~an EGD of
the first form can be done in polynomial time, as each fact in $R$ violates
the EGD by itself, unless it uses the same value in both
attributes. Thus, we have to remove from the database all the facts of
the form $R(\val{a},\val{b})$ for $\val{a}\neq\val{b}$, and we can
compute the cost of removing these facts in polynomial time.

For the second case, if $x_i=x_2$ and $x_j=x_3$, then an EGD of this form is an FD, and $\Imr(\Sigma,D)$ can be computed in polynomial time~\cite{DBLP:conf/pods/LivshitsKR18}. If $x_i,x_j\in\set{x_1,x_2}$ or $x_i,x_j\in\set{x_1,x_3}$, then we are again in the case where only facts of the form $R(\val{a},\val{a})$ are allowed, and we have to remove the rest of the facts from the database.

Finally, for EGDs of the third form, it holds that only pairs of facts $\set{R(\val{a},\val{b}), R(\val{b},\val{a})}$ violate the EGD; hence, for each such pair we have to remove one of facts in the pair, and we will remove the one that entails the lower cost.
\end{proof}

\section{Proof of Theorem~\ref{thm:lin}}

\begin{reptheorem}{\ref{thm:lin}}
\thmlin
\end{reptheorem}
\begin{proof}
  The tractability of $\Ilmr$ is simply due to the fact that we can
  enumerate $\MI_\Sigma(D)$ in polynomial time, hence construct the
  LP, and then use any polynomial-time LP
  solver~\cite{khachiyan1979polynomial}. So, in the remainder of the
  proof, we prove that the four properties are satisfied. By ``the LP
  program'' we refer to the linear relaxation of the ILP program of
  Figure~\ref{fig:ilp}, where the bottom condition is replaced with
  $\forall i\in\tids(D): 0\leq x_i\leq 1$.

  Proving \e{positivity} is straightforward. In particular, if $D$
  violates $\Sigma$, then $\MI_\Sigma(D)$ is nonempty; hence, the zero
  assignment (i.e., assigning zero to every $x_i$) is infeasible due
  to the first constraint, and the resulting value of the objective
  function is strictly positive.

  For \e{monotonicity}, let $D$ be a database, and let $\Sigma$ and
  $\Sigma'$ be two sets of DCs such that $\Sigma'\models\Sigma$.  To
  prove that $\Ilmr(\Sigma,D)\leq \Ilmr(\Sigma',D)$ it suffices to
  prove that every feasible solution w.r.t.~$\Sigma'$ is also a
  feasible solution w.r.t.~$\Sigma$. More specifically, if an
  assignment to the $x_i$'s satisfies the set~\eqref{eq:ineq} of
  inequalities for $\MI_{\Sigma'}(D)$, then it also
  satisfies~\eqref{eq:ineq} for $\MI_{\Sigma}(D)$. For that, it
  suffices to show that for every $E\in\MI_\Sigma(D)$ there is
  $E'\in\MI_{\Sigma'}(D)$ such that $E'\subseteq E$. This is
  straightforward from our assumption that $\Sigma'\models\Sigma$: if
  $E\in\MI_\Sigma(D)$, then $E\not\models\Sigma$; hence,
  $E\not\models\Sigma'$, and $E$ contains a minimal inconsistent
  subset $E'\in\MI_{\Sigma'}(D)$. 

  As for \e{progression}, let $D$ be a database, and let $\Sigma$ be a
  set of DCs such that $D$ violates $\Sigma$.  Let $\mu$ be an
  assignment to the $x_i$'s of the LP such that $\mu$ realizes
  $\Ilmr(\Sigma,D)$. Take any $j\in\tids(D)$ such that $x_j>0$. Let
  $D'$ be the database obtained from $D$ by removing $D[j]$, and let $\mu'$ be the assignment obtained by restricting $\mu$ to $\tids(D)\setminus\set{j}$. Then,
  $\mu'$ is a feasible solution, and its objective value is smaller
  than that of $\mu$. We conclude that $\Ilmr(\Sigma,D')<
  \Ilmr(\Sigma,D)$.

  Finally, we prove constant weighted \e{continuity}. Let $\Sigma$ be
  a set of DCs, let $D_1$ and $D_2$ be two inconsistent databases, and
  let $o_1=\del{i}{\cdot}$ be an operation on $D_1$. We need to find
  an operation $o_2=\del{j}{\cdot}$ on $D_2$ such that
  $\Delta_{\Ilmr,\Sigma}(o_1,D_1) \le \delta\cdot
  \Delta_{\Ilmr,\Sigma}(o_2,D_2)$ for some constant $\delta$ that depends
  only on $\Sigma$. Let $\mu_1$ be an assignment for the LP that
  realizes $\Ilmr(\Sigma,D_1)$, and let $\mu_1'$ be an assignment for
  the LP that realizes $\Ilmr(\Sigma,o_1(D_1))$. Let $\mu_1''$ be the
  assignment that is obtained by extending $\mu_1'$ with
  $\mu_1''(x_i)=1$; that is, $\mu_1''(x_j)=\mu_1'(x_j)$ for
  $j\in\tids(D_1)\setminus\set{i}$, and $\mu_1''(x_j)=1$ for
  $j=i$. Then the objective value for $\mu''_1$ is greater than the
  objective value for $\mu'_1$ by at most $\kappa(o_1,D_1)$. Moreover,
  $\mu_1''$ is a feasible solution w.r.t.~$D_1$. We conclude that
\begin{equation}\label{eq:d1}  
\Delta_{\Ilmr,\Sigma}(o_1,D_1) \leq \kappa(o_1,D_1)\,.
\end{equation}
  Now, let $\mu_2$ be an assignment that realizes $\Ilmr(\Sigma,D_2)$,
  and let $E$ be any set in $\MI_\Sigma(D_2)$. Then, from the
  definition of the LP it follows that there exists a tuple identifier
  $j$ such that $D_2[j]\in E$ and $\mu_2(x_j)\geq 1/|E|$. Therefore, by
  removing $D_2[j]$ we get a feasible solution w.r.t.~$D_2$ such that
  the reduction in inconsistency is at least $\kappa(o_2,D_2)/|E|$ for
  $o_2=\del{j}{\cdot}$. We conclude that
  \begin{equation}\label{eq:d2}
\Delta_{\Ilmr,\Sigma}(o_2,D_2) \geq \frac{\kappa(o_2,D_2)}{|E|}\,.
\end{equation}
Combining~\eqref{eq:d1} and~\eqref{eq:d2}, we conclude that
\[\Delta_{\Ilmr,\Sigma}(o_1,D_1) \leq |E|\cdot
\frac{\kappa(o_1,D_1)}{\kappa(o_2,D_2)}\cdot \Delta_{\Ilmr,\Sigma}(o_2,D_2)\,.
\]
Finally, we observe that the cardinality $|E|$ is bounded by the
maximal number of atoms in any DC of $\Sigma$. Denoting this maximal
number by $d_\Sigma$, we conclude constant weighted continuity by
taking $\delta=d_\Sigma$.
  \end{proof}
  
\section{Additional Experimental Results}

\subsection{Measure Behavior} For completeness, we have conducted the measure behavior experiments on a small sample of one hundred tuples from each dataset, where we are able to compare the behavior of all the measures with the behavior of $\Imc$. The results are depicted in Figure~\ref{fig:results-100}. Some of the graphs for $\Imc$ are missing, as the computation again exceeded our 24 hour time limit, even on these small datasets. While we see that the behavior of most measures is less stable on the small datasets and most graphs are more jittery than the ones we obtained on the larger dataset, the general trend is very similar to the one we observed on the larger datasets. The behavior of $\Imc$ seems to be the most unpredictable. For example, on the Stock dataset, we see that $\Imc$ fails to indicate progress for long periods of time. As another example, on the Airport dataset, its value jump up and down throughout the execution.

\begin{figure*}[t!]
  \subfloat[\small \bf Noise added with \algname{CONoise}.\label{fig:results_first_100}]
  {\includegraphics[width=6in]{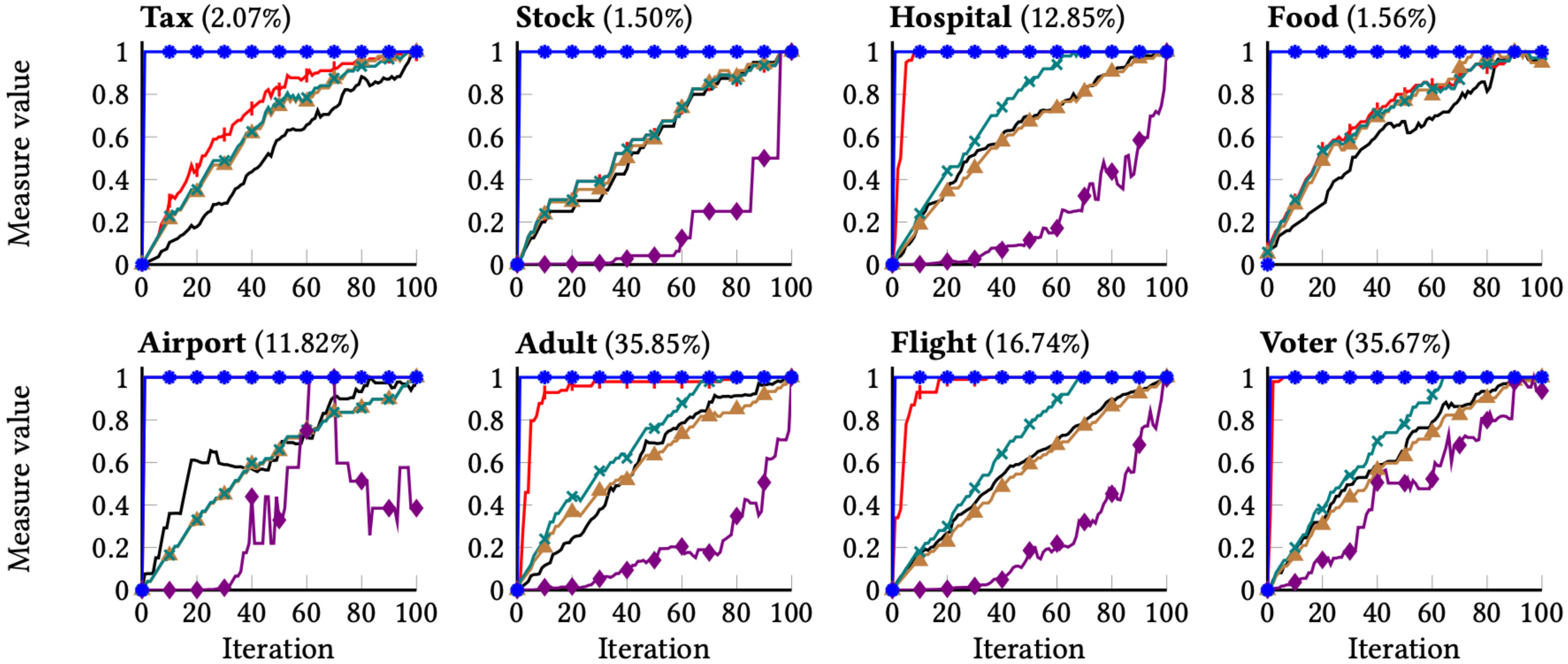}}
  \\
  \subfloat[\small \bf Noise added with \algname{RNOise} ($\alpha=0.01$ and $\beta=0$).\label{fig:results_second_100}]{\includegraphics[width=6in]{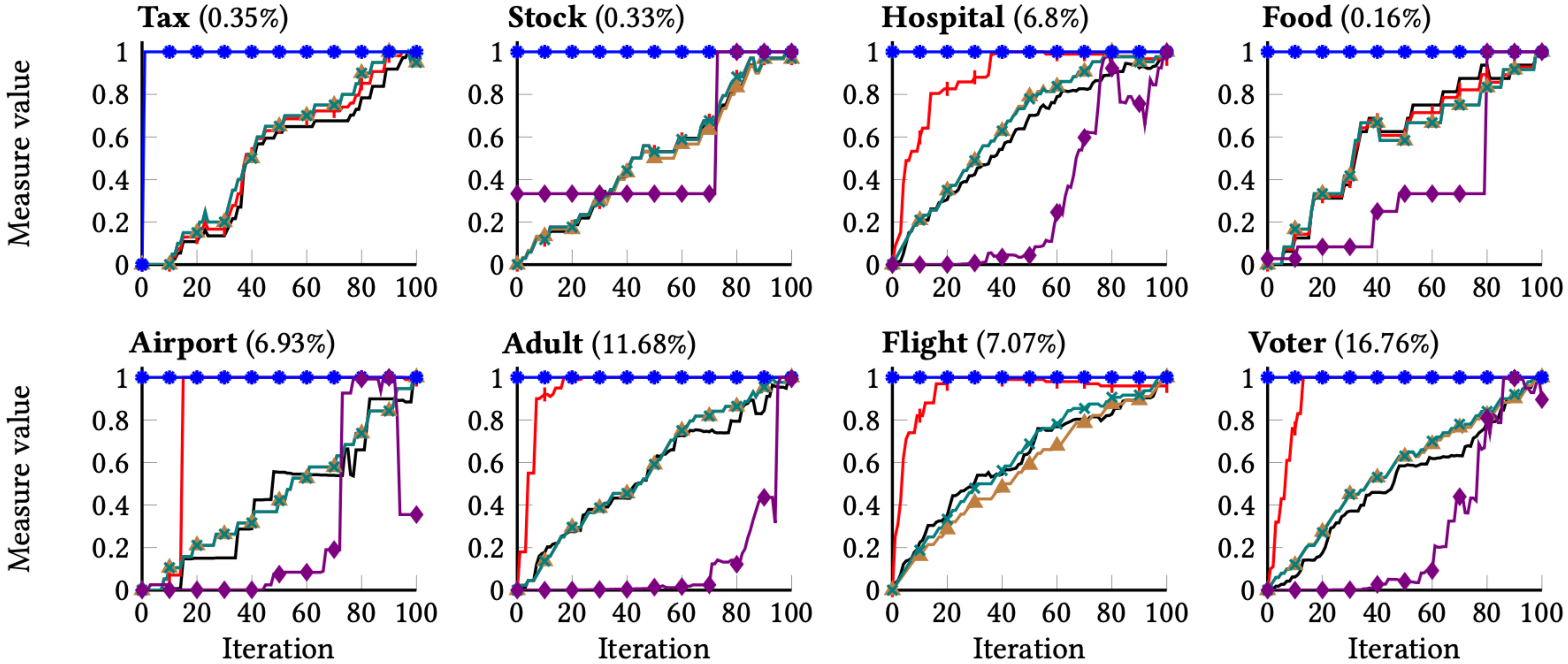}}
   \caption{The normalized values of $\Id$ (\ref{plot_id}), $\Imi$ (\ref{plot_imi}), $\Ip$ (\ref{plot_ip}), $\Imc$ (\ref{plot_imc}), $\Imr$ (\ref{plot_ir}), and $\Ilmr$ (\ref{plot_ilinr}) on datasets with one hundred tuples. Missing $\Imc$ graphs are due to timeout.}
    \label{fig:results-100}
  \end{figure*}
  
\paragraph*{Data skew and error types}

As aforementioned, we have also conducted the measure behavior experiments using \algname{RNoise} with $\beta=1$ and $\beta=2$, to test different data skew levels. The results are given in Figure~\ref{fig:results_skew}. Moreover, we repeated the experiment with $\beta=1$ under different probabilities for introducing typos. Figure~\ref{fig:prob1} depicts the results for the case where we change a cell value to a typo with probability $0.2$ (and to another random value from the active domain with probability $0.8$), and Figure~\ref{fig:prob2} depicts the results for the case where the probability of a typo is $0.8$.
All the charts are very similar to the one we obtained for $\beta=0$ (Figure~\ref{fig:results_second}); hence, the behavior of the measures does not seem to be sensitive to these parameters.

\begin{figure*}[t!]
\subfloat[\small \bf Noise added with \algname{RNoise} ($\alpha=0.01$ and $\beta=1$).\label{fig:skew1}]{
  \includegraphics[width=6in]{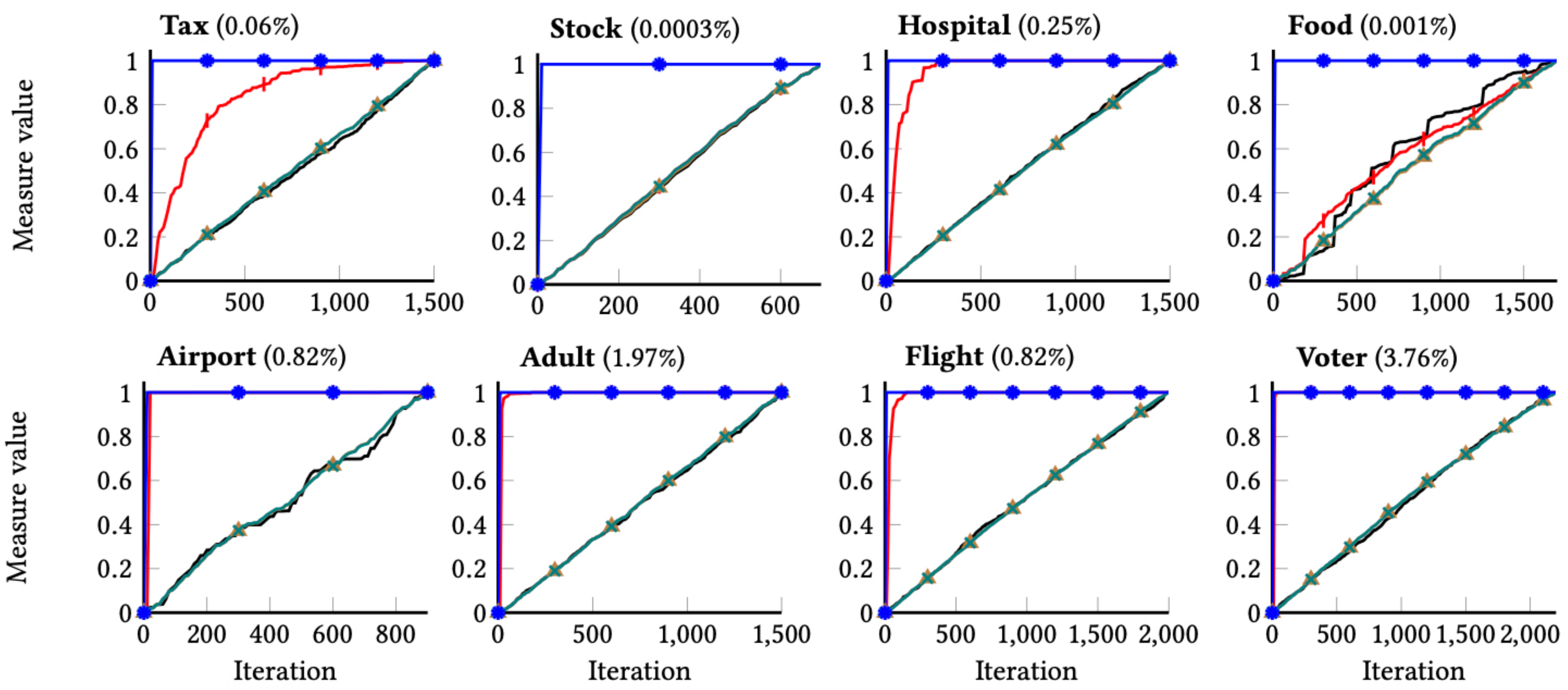}}\\
  \subfloat[\small \bf Noise added with \algname{RNoise} ($\alpha=0.01$ and $\beta=2$).\label{fig:skew2}]{
    \includegraphics[width=6in]{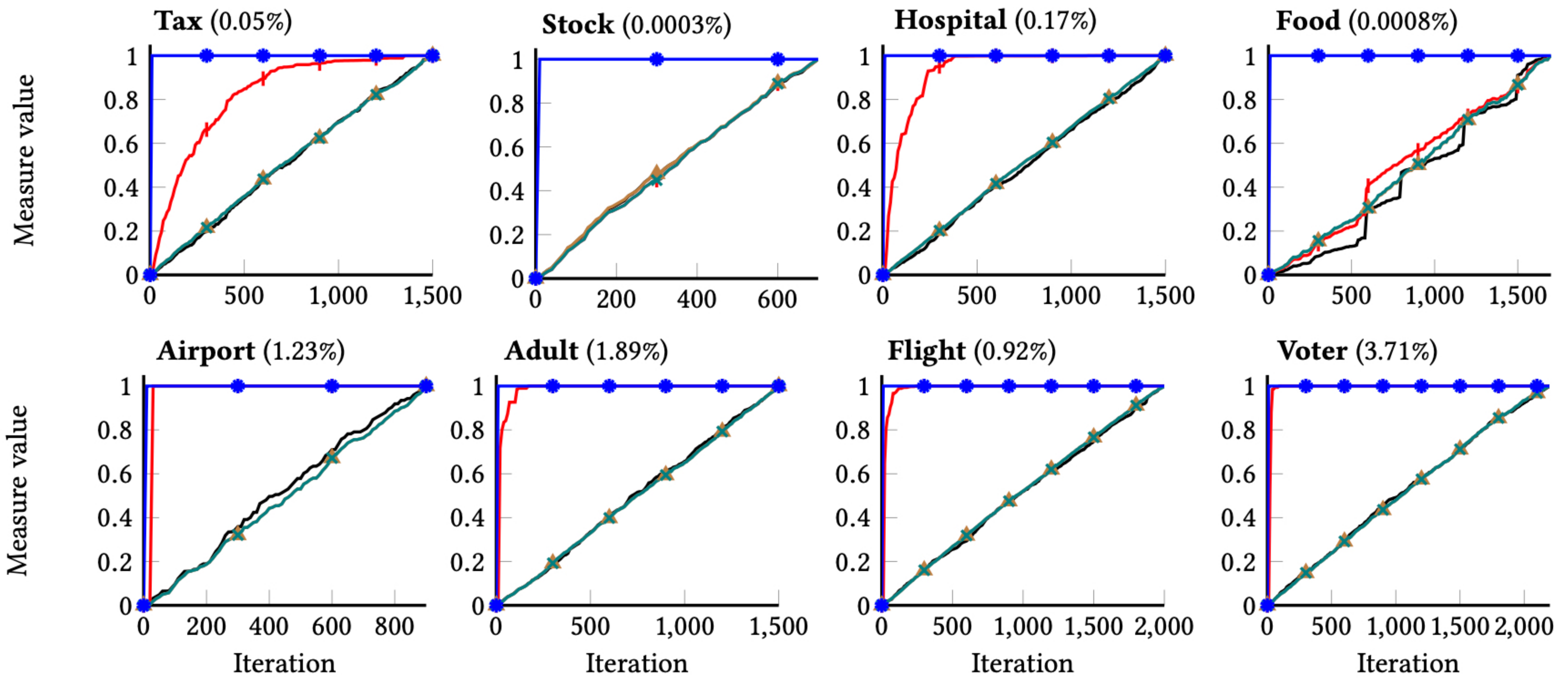}}
    \caption{The normalized values of $\Id$ (\ref{plot_id}), $\Imi$ (\ref{plot_imi}), $\Ip$ (\ref{plot_ip}), $\Imr$ (\ref{plot_ir}), and $\Ilmr$ (\ref{plot_ilinr}).}
    \label{fig:results_skew}
\end{figure*}

\begin{figure*}[t!]
\subfloat[\small \bf Noise added with \algname{RNoise} ($\alpha=0.01$ and $\beta=1$). Typo probability is $0.2$.\label{fig:prob1}]{
  \includegraphics[width=6in]{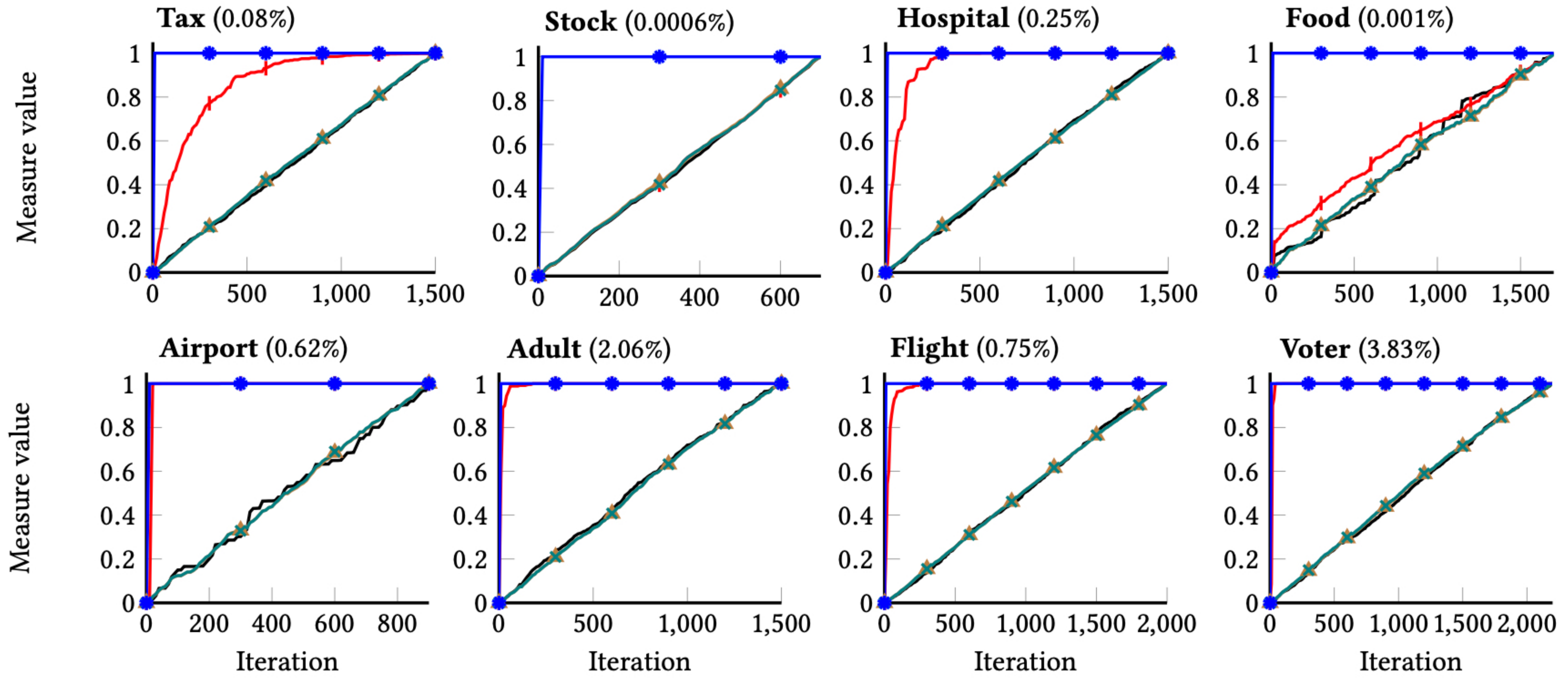}}\\
  \subfloat[\small \bf Noise added with \algname{RNoise} ($\alpha=0.01$ and $\beta=2$). Typo probability is $0.8$.\label{fig:prob2}]{
    \includegraphics[width=6in]{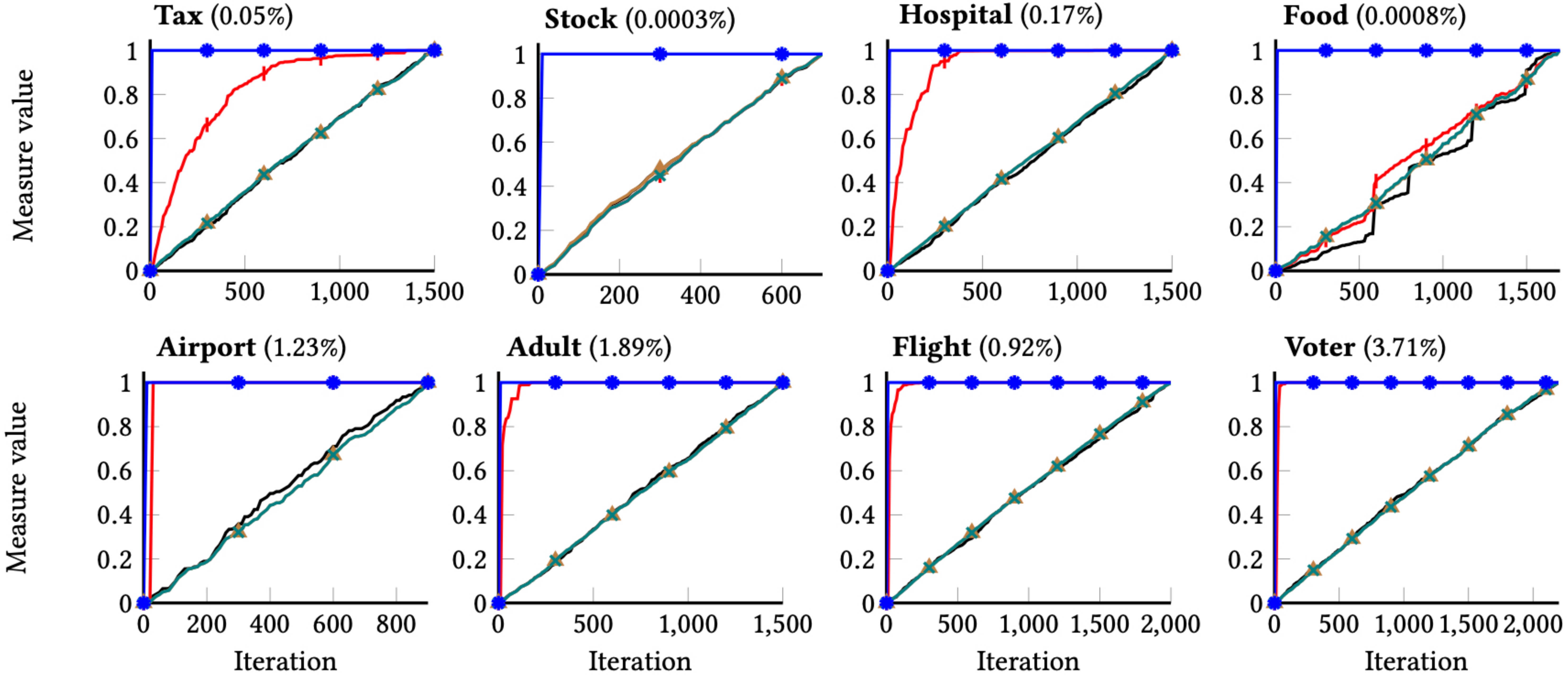}}
   \caption{The normalized values of $\Id$ (\ref{plot_id}), $\Imi$ (\ref{plot_imi}), $\Ip$ (\ref{plot_ip}), $\Imr$ (\ref{plot_ir}), and $\Ilmr$ (\ref{plot_ilinr}).}
    \label{fig:results_skew2}
\end{figure*}

\subsection{Running Times}

Figure~\ref{fig:runtime_er} depicts the running times of the different measures on samples of 10K tuples from each dataset. Using \algname{RNoise} with $\alpha=0.01$ and $\beta=0$, we added noise to these datasets, and computed the running times for all measures every ten iterations. We observe that the running times of $\Imi$ and $\Ip$ are only slightly affected by the error rate (that increases with the number of iterations), while the running time of $\Imr$ is affected the most and significantly increases with the error rate. On the Stock and Food datasets we do not see this trend, as the total number of violations in these datasets is low (see the percentage of violating pairs above the charts) even after modifying $1\%$ of the values in the dataset (i.e., at the end of the execution).

\begin{figure*}[t!]
  \includegraphics[width=6in]{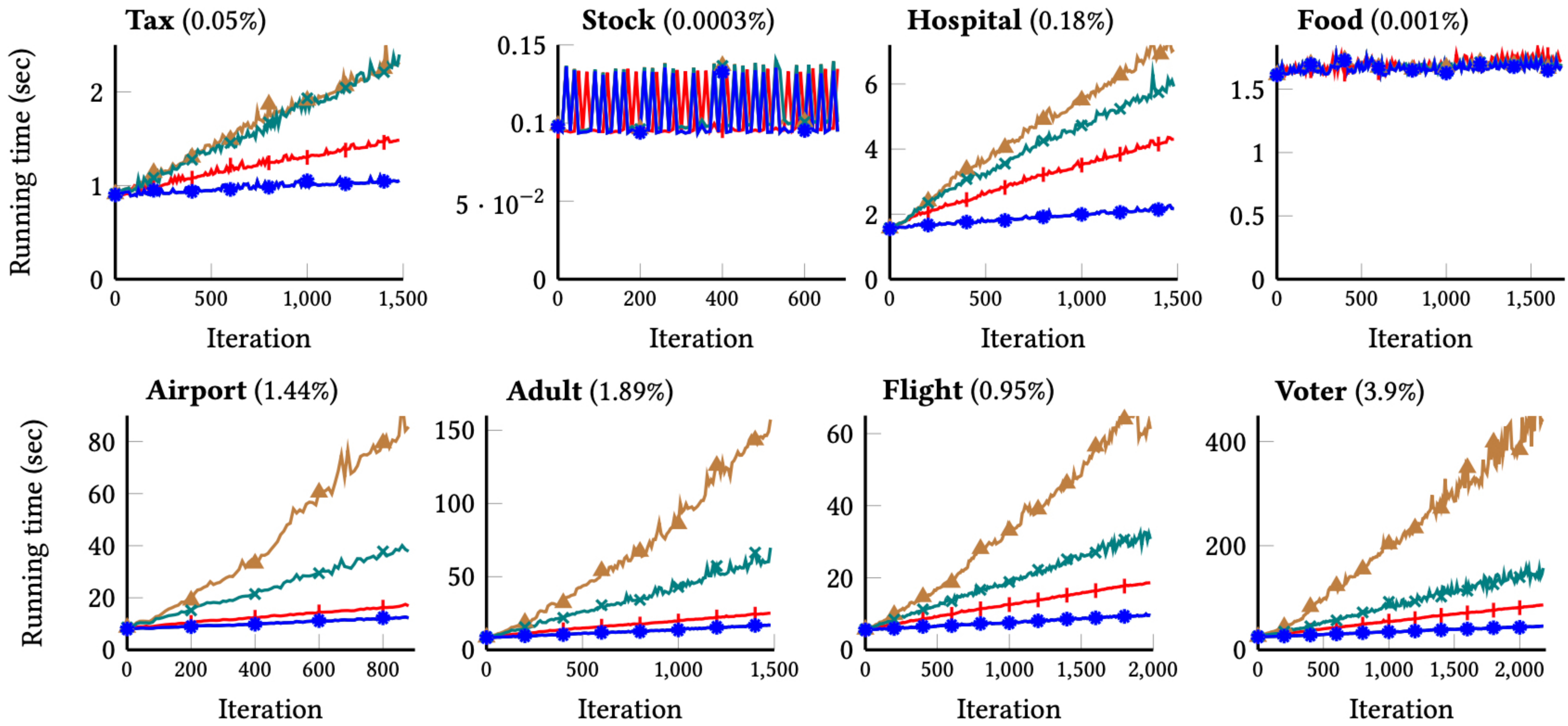}
     \caption{Running times in seconds of $\Id$ (\ref{plot_id}), $\Imi$ (\ref{plot_imi}), $\Ip$ (\ref{plot_ip}), $\Imr$ (\ref{plot_ir}), and $\Ilmr$ (\ref{plot_ilinr}) for varying error rates (the error rate increases with the number of iterations).}
    \label{fig:runtime_er}
\end{figure*}

\end{document}